\documentclass[a4paper,UKenglish, %
numberwithinsect, %
]{lipics-v2021}

\pdfoutput=1 %
\hideLIPIcs  %

\bibliographystyle{plainurl}%

\title{A Program Logic for Abstract (Hyper)Properties}

\author{Paolo Baldan}
{Department of Mathematics, University of Padua, Italy}
{baldan@math.unipd.it}
{https://orcid.org/0000-0001-9357-5599}{}

\author{Roberto Bruni}
{Department of Computer Science, University of Pisa, Italy}
{bruni@di.unipi.it}
{https://orcid.org/0000-0002-7771-4154}{}

\author{Francesco Ranzato}
{Department of Mathematics, University of Padua, Italy}
{ranzato@math.unipd.it}
{https://orcid.org/0000-0003-0159-0068}{}

\author{Diletta Rigo}
{Department of Mathematics, University of Padua, Italy}
{diletta.rigo@math.unipd.it}
{https://orcid.org/0009-0008-4473-3202}{}

\authorrunning{P. Baldan, R. Bruni, F. Ranzato and D. Rigo} %

\Copyright{P. Baldan, R. Bruni, F. Ranzato and D. Rigo} %

\begin{CCSXML}
<ccs2012>
<concept>
<concept_id>10003752.10003790.10002990</concept_id>
<concept_desc>Theory of computation~Logic and verification</concept_desc>
<concept_significance>500</concept_significance>
</concept>
<concept>
<concept_id>10003752.10003790.10011119</concept_id>
<concept_desc>Theory of computation~Abstraction</concept_desc>
<concept_significance>500</concept_significance>
</concept>
<concept>
<concept_id>10003752.10003790.10003806</concept_id>
<concept_desc>Theory of computation~Programming logic</concept_desc>
<concept_significance>500</concept_significance>
</concept>
<concept>
<concept_id>10003752.10010124</concept_id>
<concept_desc>Theory of computation~Semantics and reasoning</concept_desc>
<concept_significance>500</concept_significance>
</concept>
<concept>
<concept_id>10003752.10010124.10010138.10010143</concept_id>
<concept_desc>Theory of computation~Program analysis</concept_desc>
<concept_significance>500</concept_significance>
</concept>
<concept>
<concept_id>10003752.10003790.10011741</concept_id>
<concept_desc>Theory of computation~Hoare logic</concept_desc>
<concept_significance>300</concept_significance>
</concept>
<concept>
<concept_id>10003752.10010124.10010138.10011119</concept_id>
<concept_desc>Theory of computation~Abstraction</concept_desc>
<concept_significance>500</concept_significance>
</concept>
<concept_id>10003752.10010124.10010138</concept_id>
<concept_desc>Theory of computation~Program reasoning</concept_desc>
<concept_significance>500</concept_significance>
</concept>
</ccs2012>
\end{CCSXML}

\ccsdesc[500]{Theory of computation~Logic and verification}
\ccsdesc[500]{Theory of computation~Abstraction}
\ccsdesc[500]{Theory of computation~Programming logic}
\ccsdesc[500]{Theory of computation~Semantics and reasoning}
\ccsdesc[500]{Theory of computation~Program analysis}
\ccsdesc[300]{Theory of computation~Hoare logic}
\ccsdesc[500]{Theory of computation~Abstraction}
\ccsdesc[500]{Theory of computation~Program reasoning}

\keywords{Program logic, Hoare logic, Hyperproperties, Incorrectness logic, Abstract interpretation}

\category{} %

\nolinenumbers %

\EventEditors{John Q. Open and Joan R. Access}
\EventNoEds{2}
\EventLongTitle{42nd Conference on Very Important Topics (CVIT 2016)}
\EventShortTitle{CVIT 2016}
\EventAcronym{CVIT}
\EventYear{2016}
\EventDate{December 24--27, 2016}
\EventLocation{Little Whinging, United Kingdom}
\EventLogo{}
\SeriesVolume{42}
\ArticleNo{23}
\usepackage{subcaption} %
\usepackage{amsmath, amsthm}
\usepackage{epsfig,amsfonts,latexsym,upref}
\usepackage{stmaryrd}
\usepackage{centernot}
\usepackage{hyphenat}
\usepackage[textsize=tiny,disable
]{todonotes}
\setuptodonotes{fancyline, color=yellow!40}
\setlength{\marginparwidth}{32mm}

\definecolor{darkgreen}{rgb}{0.0, 0.5, 0.0} %

\usepackage[bibliography=common,appendix=inline
]{apxproof}
\theoremstyle{plain}
\newtheoremrep{theorem}{Theorem}[section]
\newtheoremrep{lemma}[theorem]{Lemma}
\newtheoremrep{proposition}[theorem]{Proposition}
\newtheoremrep{corollary}[theorem]{Corollary}
\newtheoremrep{remark}[theorem]{Remark}

\newtheoremstyle{lipicsdefinition-noparen}%
  {3pt}   %
  {3pt}   %
  {\normalfont} %
  {}      %
  {\sffamily\bfseries} %
  {.}     %
  { }     %
  {\thmname{#1}\thmnumber{ #2}\thmnote{ #3}} %

\theoremstyle{lipicsdefinition-noparen}
\newtheorem{instance}{}[subsection]

\usepackage{galois}
\usepackage[ruled,vlined,linesnumbered]{algorithm2e}
\usepackage[inference]{semantic}

\usepackage{tikz}
\usepackage{mathtools,tikz-cd}
\usetikzlibrary{fit}

\usetikzlibrary{datavisualization.formats.functions}
\usetikzlibrary{decorations.markings}
\usetikzlibrary{arrows,backgrounds}
\usetikzlibrary{positioning}

\usepackage{mathrsfs}
\usepackage[all]{nowidow}
\usepackage{xfrac}
\usepackage{mdframed}
\usepackage{alltt}
\usepackage{upgreek}

\usepackage{galois}
\usepackage{graphicx}
\usepackage{svg}
\usepackage{cancel}
\usepackage{mathpartir}

\setlength{\abovecaptionskip}{5pt}
\setlength{\belowcaptionskip}{0pt}

\newcommand{\set}[1]{\left \{ #1 \right \}}

\newcommand{\lat}{D} 

\newcommand{\N}{\mathbb{N}}
\newcommand{\Z}{\mathbb{Z}}

\renewcommand{\L}{\mathcal{L}}

\newcommand{\as}[1]{ \langle #1 \rangle}

\newcommand{\Var}{\ensuremath{\mathit{Var}}}
\newcommand{\subst}[3]{\ensuremath{{#1}[{#3}/{#2}]}}

\newcommand{\Int}{\ensuremath{\mathsf{Int}}}
\newcommand{\interval}[2]{\ensuremath{\left [{#1},{#2} \right ]}}

\newcommand{\e}{\mathsf{e}}
\renewcommand{\r}{\mathsf{r}}

\newcommand{\ti}{\mathsf{t}}

\newcommand{\1}{\mathsf{1}}
\newcommand{\0}{\mathsf{0}}

\newcommand{\down}{\mathord{\downarrow}}

\newcommand{\defiff}{\mathrel{\stackrel{\mathrm{def}}{\Longleftrightarrow}}}

\newcommand{\Card}{\mathcal{K}}

\newcommand{\SHA}{\text{APPL}}
\newcommand{\Reg}{\mathsf{Reg}}
\newcommand{\Ecom}{\mathsf{Elc}}

\newcommand{\base}[1][]{\operatorname{B}} 
\newcommand{\ebase}[2][\!]{\operatorname{B}{\!(#2)}}

\newcommand{\lift}[1]{#1_{\!\scriptscriptstyle\bot\!}}

\newcommand{\Ir}[1]{\operatorname{Irr}({#1})}
\newcommand{\liftIr}[1]{\lift{\operatorname{Irr}}({#1})}

\newcommand{\mon}[1]{\ensuremath{#1^\oplus}}

\newcommand{\w}[1]{\ensuremath{w(#1)}}

\newcommand{\post}[1]{\mathrm{post} ({#1})}

\newcommand{\sem}[1]{\llbracket{#1} \rrbracket}

\newcommand{\isem}[2][]{\llbracket{#2} \rrbracket_{\imath\scriptscriptstyle(#1)}}
\newcommand{\bsem}[1]{\llparenthesis #1 \rrparenthesis}

\newcommand{\pow}[1]{\wp ({#1})}

\newcommand{\add}[1]{\ensuremath{\widehat{#1}}}

\newcommand{\cut}[1]{}

\allowdisplaybreaks

\hideLIPIcs

\begin{document}

\maketitle

\begin{abstract}
We introduce \SHA\ (Abstract Program Property Logic), a unifying Hoare-style logic that subsumes standard Hoare logic, incorrectness logic, and logics for  hyperproperties, while providing a principled foundation for abstract program logics parameterised by an abstract domain, 
encompassing both existing and novel abstractions of properties and hyperproperties.
The logic is grounded in a semantic framework where the meaning of commands is first defined on a lattice basis and then extended to the full lattice via additivity.
Crucially, nondeterministic choice is interpreted by a monoidal operator that need not be idempotent nor coincide with the lattice join. This flexibility allows the framework to capture collecting semantics, various classes of abstract semantics, and hypersemantics.
The \SHA\ proof system is sound, and it is relatively complete whenever the property language is sufficiently expressive.
When the property language is restricted to an abstract domain, the result is a sound abstract deduction system based on best correct approximations.
Relative completeness with respect to a corresponding abstract semantics is recovered provided the abstract domain is complete, in the sense of abstract interpretation, for the monoidal operator.
\end{abstract}

\section{Introduction}\label{sec:intro}

Since the seminal work of Hoare~\cite{Hoare69}, program logics have constituted a cornerstone of formal reasoning about program properties. The underlying idea is simple yet effective: Hoare's logical system for partial correctness formalizes assertions of the form
$
\as{\mathit{pre}} \,\mathsf{code}\, \as{\mathit{post}},
$
meaning that, starting from any state satisfying the precondition $\mathit{pre}$, if the program execution terminates, it will reach a state satisfying the postcondition $\mathit{post}$. The properties of the elementary operations invoked by a program provide the axioms of the logic, while syntax-directed rules enable the compositional derivation of properties for compound programs. A crucial component of the framework is the ability to reason via approximation by strengthening preconditions and weakening postconditions.
Consequently, classical Hoare logic reasons about \emph{over-approximations} of program behaviour and is designed to establish program \emph{correctness}, that is, to prove the absence of undesirable
behaviours.

In recent years, logics based on \emph{under-approximation} have garnered significant attention, following the work of O'Hearn on incorrectness logic~\cite{OHearn20}. These logics aim to prove program \emph{incorrectness} by witnessing the presence of actual bugs rather than certifying their absence. 
This shift in perspective has spurred the development of various new systems, focusing either exclusively 
on incorrectness reasoning~\cite{RDB:LRAPB,RaadVO24} or on combining correctness and incorrectness within unified frameworks~\cite{ZDS:OL,BGGR21,BruniGGR23,AscariBG25}.

The landscape of program logics has further expanded with the growing interest in \emph{hyperproperties}, which relate multiple execution traces of the same program. Hyperproperties arise naturally in the analysis of determinism, non-interference, and other security %
concerns, and have led to dedicated logics and semantic frameworks%
~\cite{MP:HSFF,MP:VBSCH,DM:HHL,CousotW25,ZhangZK024,MastroeniP19}.

\medskip\noindent
{\bf \textsf{The Problem.}} Given the proliferation of program logics, %
several works aim to bring order to this “zoo” via systematic taxonomies~\cite{VerschtK25,AscariBGL25,Cousot24,MP:HSFF} or unifying frameworks that recover multiple logics as instances~\cite{ZDS:OL,Zil:toplas,CousotW25}. However, most approaches either neglect abstraction (of properties or hyperproperties) or address it only marginally. Notable exceptions 
include~\cite{CCLB:AIFR, FERRARINI}, which propose logics on abstract domains highlighting unexpected issues in handling disjunction,%
\cite{Cousot24,CousotW25}, which promote a calculational approach to logic design from semantic definitions, and~\cite{MP:HSFF,DM:HHL}, which focus on the definition and abstraction of hypersemantics.
In this work, we address the challenge of devising a single program logic that simultaneously captures many reasoning modalities---correctness, incorrectness, and hyperproperty reasoning---while systematically integrating abstraction in the sense of abstract interpretation~\cite{CC77,cousot21}.

\medskip\noindent
{\bf \textsf{Main Contributions.}}
Our core programming language is a Kleene Algebra with Tests (KAT)~\cite{KozenKAT}, parametric %
in the choice of elementary commands. 
The semantics is defined over an \emph{interpretation monoid} $(D, \sqsubseteq, \oplus, \base[D])$, 
where $(D, \sqsubseteq)$ is a complete lattice equipped with a %
basis $\base[D]$ that generates $D$ via joins, and an infinitary monoidal operator $\oplus$ satisfying specific distributivity properties with respect to lattice joins. Crucially, non-deterministic choice is interpreted via this monoidal operator $\oplus$, which is notably not required to coincide with the lattice join and need not be idempotent. Command semantics are first defined on the lattice basis $\base[D]$ and then extended to the full lattice $D$ by additivity. This structure subsumes collecting semantics, a wide range of abstract semantics, and several forms of hypersemantics.

Within this framework, we introduce a Hoare-style logic, denoted $\SHA$ (Abstract Program Property Logic). Its assertions range over a chosen subset of the semantic lattice, which serves as the property 
language. The proof system supports the derivation of  judgments of the form
\(
\vdash \as{h} \ \r \ \as{k}
\),
where $\r$ is a program and $h, k$ are behavioural properties. As in %
Hoare logic, such a judgment states %
that %
if an execution starts in a configuration satisfying the precondition $h$, the postcondition $k$ %
soundly approximates (according to the lattice order) %
the configurations reachable after executing $\r$. 
The logic is sound by construction and is relatively complete whenever the property language is sufficiently expressive.

By suitably instantiating the parameters of the framework---the complete lattice $D$, the monoidal operator $\oplus$, and the basis $\base[D]$---we recover several well-known logics. Classical Hoare logic is obtained by taking $D = \pow{\mathit{States}}$ ordered by subset inclusion, with $\oplus = \cup$ and an arbitrary basis. Hyper Hoare logics in the style of Dardinier and M\"{u}ller~\cite{DM:HHL} arise by setting $D = \pow{\pow{\mathit{States}}}$ ordered by inclusion, with a suitable hyperproperty operation $\oplus$, and a basis of irreducible hypersets $\{X\}$, for $X\in \pow{\mathit{States}}$.
As a further instance, incorrectness logic in the style of O'Hearn~\cite{OHearn20} is recovered 
by adopting the standard powerset lattice but reversing the order, 
that is, taking $(\pow{\mathit{States}}, \supseteq)$, and using the entire $\pow{\mathit{States}}$ as the basis.
Furthermore, our framework accommodates various 
abstract domains, yielding Hoare-style logics for abstract semantics, a problem highlighted by Cousot et al.~\cite{CCLB:AIFR} for its 
inherent difficulty. In our setting, the choice of the join basis (which is irrelevant for standard Hoare logic) plays a pivotal role in controlling the degree of approximation in the abstract domain.

A key insight of our approach is that restricting the logical language $\L$ is %
a form of abstraction. 
If the strongest postcondition for a %
precondition $h$ and program 
$\r$ is not expressible in $\L$, soundness ensures that only an over-approximation (a sound abstraction) is derived.
When $\L$ corresponds to an abstract domain $A$, this yields a deduction system %
based on best correct approximations in the sense of %
abstract interpretation~\cite{CC77}. 
While always sound, we show that if $A$ is \emph{complete} with respect to $\oplus$ (in the sense of~\cite{GRS00}), the induced logic is also relatively complete with respect to a suitable abstract semantics.

Finally, since our framework naturally encompasses hyperproperties and provides a principled notion of abstraction, 
it offers new tools for the challenging task of abstracting hyperproperties, %
covering both existing and novel abstractions. %
Table~\ref{tab:instances-sha} collects the underlying interpretation monoids of all $\SHA$ instances considered in this paper.

\medskip\noindent
{\bf \textsf{Illustrative Examples.}}
We suggest the versatility of $\SHA$ over three distinct settings: hyperproperties, abstraction, 
and incorrectness.

The first example mirrors the motivating case for Hyper Hoare Logic~\cite{DM:HHL}.
When reasoning about hyperproperties, i.e., sets of state properties, 
we can derive triples such as:

\smallskip

$\displaystyle
\as{\{\{x/0\}, \{x/2\}\}} \, \textbf{if~} \mathit{rnd}
\textbf{~then~}x:=x+1\, \as{\{\{x/0,x/1\}, \{x/2,x/3\}\}}$

\smallskip

\noindent
where $\mathit{rnd}$ is a random expression.
Notably, the derivation is precise: the hyper postcondition 
captures the exact correlation and 
is not over-approximated 
by spurious cross-products %
such as $\{x/0,x/3\}$ and $\{x/1,x/2\}$ that mix outcomes from separate initial cases.

In 
abstract interpretation, $\SHA$ 
can achieve better precision by supporting
disjunctive reasoning on suitable basis, akin to trace partitioning abstraction~\cite{RivalM07}.
For example, $\SHA$ can express unreachability properties that standard interval analysis misses:

\smallskip
$\displaystyle
\as{x\in [\text{-}1, 1]}  \big(\textbf{if}~x<0~\textbf{then}~x:=\text{-}1~\textbf{else}~x:=1\big);
\textbf{assert}(x=0) \as{\varnothing}$

\smallskip

\noindent
Deriving the postcondition $\varnothing$ proves 
that the exit point is unreachable from %
the precondition $x\in [-1,1]$.
Standard interval analysis merges %
the branches to $[-1,1]$, failing to prove %
$x\neq 0$. 

For under-approximated reasoning, we reverse the order of the interpretation monoid from subset inclusion to superset inclusion. 
This allows us to derive triples such as:
\smallskip

$\displaystyle
\as{x<0} \ \textbf{if}~y<0~\textbf{then}~x:=0 \ \as{y<0\wedge x=0}$

\smallskip

\noindent
This asserts that %
from any state where $x < 0$, it is %
possible to reach states where $x=0$ and $y<0$.
Path conditions (i.e., $y<0$) are included %
in the postcondition to ensure %
reachability. 

\medskip\noindent
{\bf \textsf{Structure of the paper.}}
\S~\ref{sec:language} introduces the semantic model underlying our approach.
\S~\ref{sec:logic} presents the
Abstract Program Property Logic $\SHA$, establishes soundness and relative completeness of its deductive system, and
demonstrates that
recovers several notable logic instances. In \S~\ref{sec:abs-sublogic}, we investigate the role of abstraction within our framework, drawing connections to abstract interpretation.
\S~\ref{sec:abs-hyper} illustrates further 
applications of the framework, highlighting the interplay between hyperproperties and abstraction
and all the proofs.

We discuss related work in \S~\ref{sec:related} and offer
concluding remarks and directions for future research in \S~\ref{sec:conclusion}.

\begin{table*}[t]
\centering
\caption{Summary of $\SHA$ instances considered in this paper.
Each row corresponds to an instantiation of the interpretation monoid
$(D,\sqsubseteq,\oplus,\base[D])$. $A$ denotes an abstract domain with implicit abstraction $\alpha : D \to A$ and concretization $\gamma : A \to D$ maps.}\label{tab:instances-sha}
\renewcommand{\arraystretch}{1.15}
\resizebox{1.05\textwidth}{!}{
  \small 
  \hspace{-35pt}
\begin{tabular}{|l|l|l|l|l|l|l|}
\hline
\textbf{Logic instance}
& $D$
& $\sqsubseteq$
& $\oplus$
& $\base[D]$
& \textbf{Recovered logic}
& \textbf{Section(s)}
\\
\hline\hline

Classical Hoare logic
& $\pow{\Sigma}$
& $\subseteq$
& $\cup$
& $\pow{\Sigma}$ or $\Ir{\pow{\Sigma}}$
& Partial correctness
& \S~\ref{ex:hoare-logic}
\\
\hline

Interval Abstraction
& $\Int$
& $\sqsubseteq_{\Int}$
& $\sqcup^{\Int}$
& $\Int$ or $\liftIr{\Int}$
& Interval correctness
& \S~\ref{sec:iHl},~\S~\ref{ss:abstract-semantics}
\\
\hline

Incorrectness logic
& $\pow{\Sigma}$
& $\supseteq$
& $\cup$
& $\pow{\Sigma}$
& Under-approximate reachability
& \S~\ref{ex:incorrectness-logic}
\\
\hline

Hyper Hoare logic
& $\pow{\pow{\Sigma}}$
& $\subseteq$
& $\oplus$
& $\liftIr{\pow{\pow{\Sigma}}}$
& Hyperproperties
& \S~\ref{ex:hyperhoare-logic}
\\
\hline

Abstract logic $\SHA_A$
& $A$
& $\sqsubseteq_A$
& $\oplus^A$
& $\alpha(\base[D])$
& Abstract analysis
& \S~\ref{ss:abstract-semantics}
\\
\hline

Down-closed hyperproperties
& $\pow{C}^\downarrow$
& $\subseteq$
& $\oplus$
& $\liftIr{\pow{C}^\downarrow}$
& Abstract reachability
& \S~\ref{ss:subset-close}
\\
\hline

Pointwise abstraction
& $\pow{A}$
& $\subseteq$
& $\oplus$
& $\liftIr{\pow{A}}$
& Abstract hyperproperties
& \S~\ref{ss:pointwise}
\\
\hline

Interval of hyperproperties
& $C \times A$
& $(\geq_C,\sqsubseteq_A)$
& $\oplus^I$
& $\alpha_I(\liftIr{\pow{C}})$
& Joint under- and over-approximation
& \S~\ref{ss:intervals-hyper}
\\
\hline
\end{tabular}
}
\end{table*}

\section{Semantic Model}\label{sec:language}

\subsection{Background}
Let $\Card$ denote the class of cardinal numbers.
Posets are denoted %
$(\lat,\sqsubseteq)$, with join and meet of a subset $X\subseteq \lat$ denoted %
$\sqcup X$ and $\sqcap X$, resp., when they exist.
The down closure of a subset $X\subseteq \lat$ is defined as $\down X \triangleq \{d\in \lat \mid \exists x\in X.\, d\sqsubseteq x\}$.
If join and meet exist for any subset $X\subseteq \lat$, then $\lat$ is a complete lattice and 
$\bot$ and $\top$ denote its bottom and top elements, resp.

Let $(\lat,\sqsubseteq)$ be a complete lattice. A (\emph{join})
\emph{basis} for $\lat$ is a subset $\base[\lat] \subseteq \lat$ such
that each $d \in \lat$ is the join of the
basis elements below $d$, i.e., $d = \sqcup \ebase{d}$, where
$\ebase{d} \triangleq \set{b \in \base[\lat] \mid b \sqsubseteq
  d}$.
The basis $\base[D]$ is called \emph{pointed} if $\bot \in \base[D]$.
Clearly, $\base[D]$ is a basis if and only if $\base[D] \cup \set{\bot}$ is a basis.
 Any complete lattice $\lat$ admits a trivial basis consisting of $\lat$ itself. 
It is interesting to consider minimal bases including elements that cannot be further 
decomposed through joins. Thus, recall that $d \in \lat$ is completely join-irreducible if, for any subset $X \subseteq \lat$, $d = \sqcup X$ implies $d \in X$. The subset of these irreducible 
elements is denoted by $\Ir{\lat}$ and we write $\liftIr{\lat}$ for $\Ir{\lat} \cup \set{\bot}$.
Since any basis must contain $\Ir{\lat}$, if $\Ir{\lat}$ forms a basis itself, it is the unique minimal basis for $\lat$, called the \emph{irreducible basis} of $\lat$.

\begin{example}[intervals]
  \label{ex:intervals}
  Consider the complete lattice of integer intervals %
    $\Int \triangleq \{ \interval{l}{u} \mid l \in \Z \cup \{-\infty\},\, 
    u \in \Z \cup \{+\infty\}\ \land\ l\leq u \} \cup \set{\varnothing}$
  endowed with the inclusion order. It admits an irreducible basis
  $\Ir{\Int} = \set {\interval{z}{z} \mid z \in \Z}$.
\end{example}
We will rely on the following notions of density and weight. 

\begin{definition}[Dense subset]
  \label{def:dense}
  Let $\lat$ be a complete lattice with a basis $\base[\lat]$.  A
  subset $X \subseteq \lat$ is \emph{dense} (wrt $\base[\lat]$) if
  for every $b \in \base[\lat]$ such that
  $b \sqsubseteq \sqcup X$, there exists $x \in X$ with
  $b \sqsubseteq x$.  \hfill $\lozenge$
\end{definition}

Intuitively, a subset is dense whenever it properly covers the elements of the basis. E.g., in the complete lattice of intervals $\Int$, with irreducible basis (Example~\ref{ex:intervals}),  $X = \set{[-1, 0], [2, 3]}$ is not dense as $[1,1] \sqsubseteq \sqcup X = [-1,3]$, while instead $\set{[-1, 1], [2, 3]}$ is dense.

\begin{definition}[Weight wrt a Basis]
  \label{def:weight}
  Let $\lat$ be a complete lattice with a basis $\base[\lat]$.
  The
  \emph{weight} of an element $d\in D$ wrt the basis
  $\base[\lat]$ is defined as $\w{d} \triangleq \min \{ |Y| \mid Y \subseteq \ebase[\lat]{x},\,  \ebase[\lat]{d} \subseteq \down{Y} \}$.
  The notation extends to sets $X \subseteq D$ by letting $\w{X} \triangleq \sup\set{ \w{d} \mid d \in X}$.
  \hfill$\lozenge$
\end{definition}

The weight of $d$ is intuitively the number of elements of the basis
needed for covering $d$. For instance, considering again the lattice
of intervals $\Int$ with irreducible basis, we have $\w{[2,3]} = 2$ as
$[2,3] = [2,2] \sqcup [3,3]$, and $\w{[0, \infty]} =
\aleph_0$. Instead, if we take as a basis the full lattice $\Int$,
then $\w{d} = 1$ for all $d \in \Int$.

A \emph{complete monoid}~\cite{Krob87,Kar04} is a structure
$(M, \oplus)$ where $\oplus$ is an
infinitary sum operation that maps each indexed family $\{m_i\}_{i \in I}$ of elements of $M$,
with $I$ arbitrary index set, to 
$\bigoplus_{i \in I} m_i \in M$ so that: 
\begin{enumerate}[{\rm (i)}]
\item $\bigoplus_{i \in \set{j}} m_i = m_j$,
\item  \label{def:ass-mon} for any partition
$(I_j)_{j\in J}$ of the index set $I$, 
  $   \textstyle \bigoplus_{i \in I} m_i = \bigoplus_{j \in J} (\bigoplus_{i \in I_j} m_i)\, $.
\end{enumerate}
\smallskip
Property~\eqref{def:ass-mon} is an infinitary form of associativity,  
that  entails infinitary commutativity %
and the existence of a neutral element for $\oplus$,
namely $0_\oplus \triangleq \bigoplus_{i \in \varnothing} m_i \in M$. 
We will %
write $m_1 \oplus \cdots \oplus m_n$ instead of $\bigoplus_{j \in [1,n]} m_j$ and  $\bigoplus_{\N} m$ for the sum of countably many  $m$.

Given $S\subseteq M$, we denote by $\mon{S}$ the least complete
submonoid including $S$, which is 
the complete submonoid generated by $S$.%

An \emph{ordered complete monoid} is a structure $(M, \sqsubseteq, \oplus)$ such that $(M, \oplus)$ is a complete monoid and $(M,\sqsubseteq)$ is a poset such that $\oplus$ is monotone, i.e.,
given indexed families $\{m_i\}_{i \in I}$ and $\{n_i\}_{i \in I}$ of elements of $M$, if $m_i \sqsubseteq n_i$ for all $i \in I$, then $\bigoplus_{i \in I} m_i \sqsubseteq \bigoplus_{i \in I} n_i$.

\begin{definition}[$\kappa$-Quantale]\label{def:quantale}
Given a cardinal number $\kappa\in \Card$, an (\emph{infinitary non-strict}) \emph{$\kappa$-quantale} is an ordered complete monoid $(M, \sqsubseteq, \oplus)$ such that
the poset 
$(M,\sqsubseteq)$ has
$\kappa$-joins, i.e., each subset
$X \subseteq M$ with $|X| \leq \kappa$ has the join $\sqcup X$ in $(M,\sqsubseteq)$, and for 
all $\{x_{i,j} \mid i\in I, j\in J(i)\}$, for 
index sets $I$,
$(J_i)_{i \in I}$ such that, for all $i\in I$, $0<|J_i|  \leq \kappa$, 
the following $\oplus$-$\sqcup$ distributivity law holds:
\begin{equation}
  \label{eq:distr}
    \bigoplus_{i \in I} \bigsqcup_{j \in J_i} x_{i,j} = 
    \bigsqcup_{\overset{\beta: I \to \bigcup_{i \in I} J_i}{\beta(i) \in J_i}} \bigoplus_{i \in I} x_{i,\beta(i)}
  \end{equation}
  where 
  the join in the right-hand side of~\eqref{eq:distr} is required to exist, while 
   joins $\bigsqcup_{j \in J_i} x_{i,j}$ in the left-hand side exist because 
  $|J_i|\leq \kappa$. 
  \hfill$\lozenge$
\end{definition}

Note that a $\kappa$-quantale always includes a bottom element
$\bot = \sqcup \varnothing$. Since $\oplus$-$\sqcup$ distributivity~\eqref{eq:distr} is restricted to 
non-empty joins (as $|J_i|>0$), $\bot$ is not required to act as annihilator for
$\oplus$. Note that every ordered complete monoid is trivially a $1$-quantale.
\subsection{Command Syntax}

We consider a core %
language of \emph{regular commands}, 
parameterised by %
elementary commands.

\begin{definition}[Commands]
\label{def:rc}
The syntax regular commands $\r \in \Reg$ is defined as below, where
$\e\in \Ecom$ is a fixed set of elementary commands:
\begin{align*}
    \Reg \ni \r & := \e \mid \r;\r \mid \r + \r \mid \r^* 
     \tag*{$\lozenge$}
\end{align*}

\end{definition}

We assume that the set of elementary commands always includes the \textit{skip} command %
and the \textit{diverge} command, denoted $\1$ and $\0$, resp. 
In examples, we will use assignments $x := \mathit{exp}$
and Boolean filters $b?$, where $\mathit{exp}$ is %
an arithmetic expression and $b$ a Boolean expression.
We write $;$ for sequential composition, $+$ for nondeterministic choice, and $(\cdot)^*$ for Kleene star iteration.
We %
inductively define $\r^0 \triangleq \1$ and $\r^{i+1} \triangleq \r ; \r^i$, denoting %
the sequential composition of a fixed number of copies of $\r$.
We derive other standard control flow constructs as usual:
\begin{align*}
  \hspace{-14pt} 
    \textbf{assert}(b) %
     \triangleq b? %
   \qquad 
   \textbf{if}~b~\textbf{then}~\r_1~\textbf{else}~\r_2 %
    \triangleq (b? ; \r_1) + (\neg b? ; \r_2) %
    \qquad 
    \textbf{while}~b~\textbf{do}~\r  %
    \triangleq (b?;\r)^*;\neg b?
\end{align*}

\subsection{Command Semantics} 

Our command semantics is designed to subsume collecting semantics, as well as a broad class of abstract and hyperproperty semantics found in the literature. 
While interpretation lattices in most applications are lattices of sets, we treat the lattice elements abstractly to accommodate both subset and superset inclusion.
We work with semantic structures that form both ordered complete monoids and complete lattices. The monoidal structure captures nondeterministic choice and models computation, while the lattice structure aggregates results additively through joins. This approach generalises the derivation of collecting and abstract semantics from a basic semantics, since we define the semantics on a suitable join basis and extend it to the entire lattice via joins.

\begin{definition}[Interpretation monoid]
  \label{de:based-mon}
  An \emph{interpretation monoid} is a tuple
  $(D, \sqsubseteq, \oplus, \base[D])$ such that: (i)~$(D, \sqsubseteq, \oplus)$ is an ordered complete monoid,
  (ii)~$(D, \sqsubseteq)$ is a complete lattice, (iii)~$\base[D]$ is a pointed basis of~$D$, and (iv)~$\mon{\base[D]}$ is a
  $\kappa_D$-quantale where 
  $\kappa_D \triangleq \w{\mon{\base[D]}}$.
  \hfill$\lozenge$
\end{definition}

The weight $\kappa_D$ is relevant 
only in the instantiation of Definition~\ref{de:based-mon} to hyperproperties; 
in all other cases it %
is inessential.
We now define some key examples of interpretation domains.
\begin{instance}[Irreducible powerset monoid]
  \label{ex:monoid-lattice-constructions-irr-pow}
  Let $X$ be %
  a set and consider the complete Boolean lattice
  $(\pow{X}, \subseteq)$ with $\sqcup = \cup$ and $\sqcap = \cap$.
  The irreducibles are $\Ir{\pow{X}} = \set{ \set{x} \mid x \in
    X}$. Then, $\pow{X}$ with the pointed basis $\liftIr{\pow{X}}$
  and sum $\oplus = \cup$ %
  is an interpretation monoid, referred
  to as the \emph{irreducible powerset monoid}
  $(\pow{X}, \subseteq, \cup, \liftIr{\pow{X}})$.
    In fact, $\mon{\liftIr{\pow{X}}} = \pow{X}$ and
    $(\pow{X}, \subseteq, \cup)$ is a $\kappa$-quantale for any
    $\kappa \in \Card$ since it has all joins and the
    distributivity law~\eqref{eq:distr} holds without restrictions. 
    In particular, it is a $\kappa$-quantale for
    $\kappa = \w{\pow{X}}$.
  \hfill$\lozenge$
\end{instance}

\begin{instance}[Irreducible lattice monoid, Intervals]  
  \label{ex:monoid-lattice-constructions-irr-lat}
  By generalising ~\S~\ref{ex:monoid-lattice-constructions-irr-pow} %
  if $(\lat, \leq)$ 
  is a complete lattice with the irreducible basis then we
  can consider the \emph{irreducible lattice monoid} over $\lat$ by
  fixing $\base[\lat] = \liftIr{\lat}$ and  defining $\oplus \triangleq \vee$, 
  namely the interpretation monoid $(\lat, \leq, \vee, \liftIr{\lat})$.
  As in~\S~\ref{ex:monoid-lattice-constructions-irr-pow}, %
  $\mon{\liftIr{D}} = \Int$ is a 
  $\kappa$ quantale for any $\kappa \in \Card$.
   
  As an example, this interpretation monoid can be instantiated to
    the complete lattice of integer intervals $\Int$ seen in Example~\ref{ex:intervals}.
\end{instance}

\begin{instance}[Simple monoid]
  \label{ex:monoid-lattice-constructions-simple}
  Let $(C, \leq, \oplus)$ be an ordered complete monoid with
  $(C, \leq)$ a complete lattice. Then, taking the trivial basis, we
  always obtain an interpretation monoid $(C, \leq, \oplus, C)$.
  In fact, for all $c \in C$, we have $\ebase[C]{c} =
  \down{c}$. Consequently,  $\w{c} \leq 1$ ($\w{c} =1$ for $c \neq \bot$ and $\w{\bot}=0$), thus
  implying $\kappa_C = 1$.  Hence, Definition~\ref{de:based-mon}
  requires the underlying structure to be a $1$-quantale.
  
  In particular, $(C, \leq, \vee, C)$, where $\oplus = \vee$, is an
  interpretation monoid that we refer as the \emph{simple monoid} over
  $D$. 
  Similarly, reversing the order one still gets an interpretation monoid
  $(C, \geq, \vee, C)$, called the \emph{dual simple monoid}.
  \hfill$\lozenge$
\end{instance}
\begin{instance}[Hyper monoid]
\label{ex:monoid-lattice-constructions-hyper}
  Let $(C, \leq)$ be a complete lattice with join $\vee$. The powerset lattice $(\pow{C}, \subseteq)$ can be seen as an ordered complete monoid by taking the set of singletons as the basis, i.e., $\liftIr{\pow{C}} = \set{\{c\} \mid c \in C} \cup \set{\varnothing}$, and letting the sum be:
  \begin{equation*}
  \textstyle
    \bigoplus_{i \in I} X_i \triangleq \left\{ \bigvee_{i \in I} x_i \ \middle|\ \forall i \in I.\, x_i \in X_i \right\} \,.
  \end{equation*}
  This is the analogous of the $\otimes$ operation on hyper-assertions in~\cite{DM:HHL}.
  Observe that the monoid operation is closed over the basis, i.e., $\mon{\liftIr{\pow{C}}} = \liftIr{\pow{C}}$.
  Since every basis element $X \in \liftIr{\pow{C}}$ is a singleton or the empty set, we have $\w{c} \leq 1$. Thus
  $\kappa_{\liftIr{\pow{C}}} = 1$ and the conditions of a 1-quantale trivially hold.

  We refer to this structure as the \emph{hyper monoid} over $C$. It serves as the foundation for defining hypersemantics, instantiating $C = \pow{\Sigma}$ where $\Sigma$ is %
  the set of computation states.
  \hfill$\lozenge$
\end{instance} %

Command semantics is constructed by first defining the semantics of elementary commands on the 
basis. Note that the use of bases containing $\bot$ enables the definition of non-strict semantics.
Since the language $\Reg$ (Definition~\ref{def:rc}) is parametric with respect to the set of elementary commands $\Ecom$, its semantics is determined by the interpretation of these commands within the interpretation monoid $(D, \sqsubseteq, \oplus, \base[D])$.
This is defined by monotone functions $\bsem{\e}: \base[D] \to \mon{\base[D]}$ that specify the behavior of elementary commands on the basis elements. These functions are then lifted to the entire domain $D$.
Because $\base[D]$ is a join basis, any monotone function $f : \base[D] \to D$ admits a unique \emph{join extension} $\add{f}: D \to D$, defined as:
\begin{equation*}
  \add{f}(d) \triangleq \bigsqcup_{b \in \ebase[D]{d}} f(b) \,.
\end{equation*}
These join extensions play a pivotal role in sequential composition.

\begin{lemmarep}[monotonicity of additive extensions]
    \label{le:additive-sq}
  Let $f : \base[D] \to D$ be a function monotone wrt $\sqsubseteq$.
  Then also $\add{f} : D \to D$ is monotone wrt $\sqsubseteq$.
\end{lemmarep}

\begin{proof}
Let $d, d' \in D$ be such that $d \sqsubseteq d'$. Then, since $d \sqsubseteq d'$ implies $\ebase{d} \subseteq \ebase{d'}$ it follows that: 
  \begin{align*}
    &f(d)  = \bigsqcup_{b \in \ebase{d}} f(b) %
    \sqsubseteq \bigsqcup_{b' \in \ebase{d'}} f(b') = \add{f}(d') \tag*{\qedhere}
  \end{align*}
\end{proof}

\begin{definition}[Basis and Full Semantics]
  \label{de:semantics}
  Let \((D, \sqsubseteq, \oplus, \base[D])\) be an interpretation monoid. We assume that
  elementary commands \(\e \in \Ecom\) are equipped with a monotone semantics
  \(\bsem{\e} : \base[D] \to \mon{\base[D]}\), defined on basis elements. In particular, for all \(b \in \base[D]\), we define $\bsem{\0} b  \triangleq 0_\oplus$ and $\bsem{\1} b  \triangleq b$.

For regular commands $\r \in \Reg$ and $b \in \base[D]$, the \emph{basis semantics} $\bsem{\r}: \base[D] \to \mon{\base[D]}$  is defined as: %
  \begin{align*}
    \bsem{\r_1;\r_2} b %
    \triangleq\add{\bsem{\r_2}}(\bsem{\r_1}b) \qquad \qquad %
    \bsem{\r_1 + \r_2} b %
    \triangleq (\bsem{\r_1} b) \oplus (\bsem{\r_2} b)  \qquad \qquad %
    \bsem{\r^*} b %
    \triangleq \bigoplus_{i \geq 0} \bsem{\r^i}b\, .
  \end{align*}
The \emph{full semantics} $\sem{\r} : D \to D$ is defined as the join extension of the basis semantics, $\sem{\r} \triangleq \add{\bsem{\r}}$, namely
  $\sem{\r}d \triangleq \bigsqcup_{b \in \ebase[D]{d}} \bsem{\r}b$. %
\end{definition}

A simple but crucial observation is that the full semantics $\sem{\r}$
is additive on dense subsets (see Definition~\ref{def:dense}). %
Since we assume pointed bases, the empty set is never dense.

\begin{lemmarep}[Additivity on dense subsets]
  \label{le:add-dense}
  If $X \subseteq D$ is dense, then $\sem{\r}(\sqcup X) = \bigsqcup_{d \in X} \sem{\r}d$.
\end{lemmarep}

\begin{proof}
  Just observe that, as an immediate consequence of density of $X$, we
  have $\ebase{\sqcup X} = \bigcup_{d \in X} \ebase{d}$ and thus
  \[ \sem{\r} \sqcup X
     = \bigsqcup_{b \in \ebase{\sqcup X}} \bsem{\r}b
     = \bigsqcup_{b \in \bigcup_{d \in X} \ebase{d}} \bsem{\r}b
     = \bigsqcup_{d \in X} \bigsqcup_{b \in \ebase{d}} \bsem{\r}b
     = \bigsqcup_{d \in X} \sem{\r}d \tag*{\qedhere}
    \]
\end{proof}

The following results establish some basic properties of the full semantics 
introduced in Definition~\ref{de:semantics}, in particular showing that the 
 semantics defined on the basis is well-defined. We say that $\oplus$ is (jointly) strict when $\bigoplus_{i \in I} \bot = \bot$ for all index sets $I$.

\begin{lemma}[Basic properties of $\sem{\cdot}$]\label{le:basic-prop-sem}
  For all $\r\in \Reg$:
  \begin{enumerate}[{\rm (1)}]
  \item\label{le:basic-prop-sem:well-def} $\bsem{\r}$ is well-defined, i.e., 
    for all $b \in \base[D]$, $\bsem{\r}b \in \mon{\base[D]}$.
  \item\label{le:basic-prop-sem:monot}
    The full semantics $\sem{\r}$ is monotone wrt\/ $\sqsubseteq$.
    \item\label{le:basic-prop-sem:distrib} If $D$ is a completely
meet-distributive lattice and the basis is $\liftIr{D}$ then $\sem{\r}$ is additive 
on non-empty subsets. Moreover, if, $\oplus$ and all $\e\in  \Ecom$, $\bsem{\e}$ are strict then $\sem{\r}$ 
is additive.
  \end{enumerate}
\end{lemma}

\begin{proof}
  \eqref{le:basic-prop-sem:well-def} By structural induction on the command syntax. Let
  $b \in \base[D]$.
  \begin{itemize}
  \item
    If $\r = \e$ we have that $\bsem{\e}b \in \mon{\base[D]}$, since by
    assumption $\bsem{\e}: \base[D] \to \mon{\base[D]}$.
    This case includes the cases $\r = \0$ and $\r = \1$.

  \item
    If $\r = \r_1 + \r_2$ we have
    $\bsem{\r_1 + \r_2}b = \bsem{\r_1}b \oplus \bsem{\r_2}b \in
    \mon{\base[D]}$ because by inductive hypothesis
    $\bsem{\r_i}b \in \mon{\base[D]}$ for $i = 1, 2$ and
    $\mon{\base[D]}$ is a submonoid.

  \item
    If $\r = \r_1; \r_2$ we have that
    \begin{equation}
      \label{eq:composition1}
      \bsem{\r_1 ; \r_2}b
      = \add{\bsem{\r_2}}(\bsem{\r_1}b) 
      = \bigsqcup_{b' \in \ebase{\bsem{\r_1}b}} \bsem{\r_2}b'
    \end{equation}
    Now, by inductive hypothesis, $\bsem{\r_1}b \in \mon{\base[D]}$
    and thus, by the definition of interpretation monoid, there is
    $Y \subseteq \base[D]$ such that
    $\down{Y} = \ebase[D]{\bsem{\r_1}b}$ and $|Y| \leq \kappa_D$. From
    this, by monotonicity of $\bsem{\r_2}$ we deduce that
    \begin{equation}
      \label{eq:composition2}
      \down{\set{\sem{\r_2}b' \mid b' \in Y}} = \down{\set{\sem{\r_2}b'
          \mid b' \in\ebase[D]{\bsem{\r_1}b}}}.
    \end{equation}
    Moreover, again by
    inductive hypothesis, $\bsem{\r_2}b' \in \mon{\base[D]}$ for all
    $b' \in Y \subseteq \ebase[D]{\bsem{\r_1}b}$. Finally
    $|\set{\sem{\r_2}b' \mid b' \in Y}| \leq |Y| \leq \kappa_D$, hence
    for the fact that $\mon{\base[D]}$ is a $\kappa_D$-quantale and
    thus it is closed by $\kappa_D$ joins, we deduce
    \begin{equation}
      \label{eq:composition3}
      \sqcup \set{\sem{\r_2}b' \mid b' \in Y} \in \mon{\base[D]}
    \end{equation}
    Putting things together we obtain
    \begin{align*}
      &
      \bsem{\r_1 ; \r_2}b =
      & \mbox{[by \eqref{eq:composition1}]}\\
      & = \sqcup{\set{\sem{\r_2}b' \mid b' \in\ebase[D]{\bsem{\r_1}b}}}
      & \mbox{[by \eqref{eq:composition2}]}\\
      &= \sqcup{\set{\sem{\r_2}b' \mid b' \in Y}} \in \mon{\base[D]}
      & \mbox{[by \eqref{eq:composition3}]}
    \end{align*}

  \item
    If $\r = \r^*$ we have
    $\bsem{\r^*}b = \bigoplus_{i \geq 0} \bsem{\r^i}b$.  By inductive
    hypothesis $\bsem{\r}b \in \mon{\base[D]}$.  Moreover, for all
    $i \geq 0$ $\bsem{\r^i}b \in \mon{\base[D]}$. In fact, by straightwforward induction of $i$, we have: 
    \begin{itemize}

    \item
      $\bsem{\r^0}b = b \in \base[D] \subseteq \mon{\base[D]}$ by definition of $\mon{(\cdot)}$.
      \item
        $\bsem{\r^{i+1}}b = \bsem{\r; \r^i}b =
        \add{\bsem{\r^i}}(\bsem{\r}b) = \bigsqcup_{b'\in
          \ebase{\bsem{\r}b}} \bsem{\r^i}b'$ and we conclude as in the
        case $\r = \r_1; \r_2$ case since by inductive hypothesis
        $\bsem{\r^i}b' \in \mon{\base[D]}$ and
        $\bsem{\r}b \in \mon{\base[D]}$.
        Then, $\bigoplus_{i \geq 0} \bsem{\r^i}b \in \mon{\base[D]}$ by
        definition of $\mon{(\cdot)}$.
    \end{itemize}
  \end{itemize}

\bigskip

\noindent
\eqref{le:basic-prop-sem:monot} We show that if $\bsem{\e}: \base[D] \to \mon{\base[D]}$ 
is monotone wrt $\sqsubseteq$ then 
$\bsem{\r}$ is monotone wrt $\sqsubseteq$. The result for $\sem{\r}$ follows 
from Lemma~\ref{le:additive-sq}
recalling that $\sem{\r} = \add{\bsem{\r}}$.

We proceed by structural induction on the command syntax and show that for all $b, b' \in \base[D]$,
if $b \sqsubseteq b'$, then $\bsem{\r}b \sqsubseteq \bsem{\r}b'$. 
\begin{itemize}
  \item If $\r = \e$, monotonicity is by assumption. This case subsumes the cases $\r = \0$ and $\r = \1$. 

  \item If $\r = \r_1 + \r_2$, then by inductive hypothesis we have
    $\bsem{\r_1}$ and $\bsem{\r_2}$ are monotone. Hence
    \begin{align*}
      \bsem{\r_1 + \r_2}b %
      & = \bsem{\r_1}b \oplus \bsem{\r_2}b %
      & %
      [\mbox{definition of } \bsem{\r_1 + \r_2}] \\
      & \sqsubseteq \bsem{\r_1}b' \oplus \bsem{\r_2}b' %
      & %
      [\mbox{inductive hypothesis and monotonicity of } \oplus] \\
      & = \bsem{\r_1 + \r_2}b' %
      & %
      [\mbox{definition of } \bsem{\r_1 + \r_2}]
    \end{align*}

  \item If $\r = \r_1; \r_2$, then by inductive hypothesis
    $\bsem{\r_1}$ and $\bsem{\r_2}$ are monotone. Hence, by
    Lemma~\ref{le:additive-sq}, $\add{\bsem{\r_2}}$ is monotone and thus
    $\bsem{\r_1; \r_2} = \add{\bsem{\r_2}} \circ \bsem{\r_1}$ is monotone because it is composition of monotone functions.
    
    \item If $\r = \r_1^*$
     by inductive hypothesis we have that $\bsem{\r_1}$ is monotone,
     hence $\bsem{\r_1^i}$ is monotone for all $i \in \N$. In fact, by straightforward induction on $i$ we have:
     \begin{itemize}
     \item $\bsem{\r_1^0} = \bsem{\1}$ is monotone,
     \item $\bsem{\r_1^{i+1}} = \add{\bsem{\r_1}} \circ \bsem{\r_1^i}$
       is monotone because it is composition of monotone functions:
       $\bsem{\r_1^i}$ is monotone by the inductive hypothesis and
       $\add{\bsem{\r_1}}$ is monotone by the inductive hypothesis and
       Lemma~\ref{le:additive-sq}.
     \end{itemize}
     Since $\bigoplus$ is monotone we conclude.
   \end{itemize}
   
\bigskip

\noindent
\eqref{le:basic-prop-sem:distrib}
Observe that, in this setup, for all
$X\subseteq D$, if $X \neq \varnothing$ then
$X$ is dense. In fact, for
all $b\in \liftIr{D}$, if $b \sqsubseteq \sqcup X$, then
\( b = b \sqcap \sqcup X = \sqcup \set{b \sqcap x \mid x \in X } \),
so that there exists $x \in X$ such that $b = b \sqcap x$, and, in
turn, $b \sqsubseteq x$.  Hence, by Lemma~\ref{le:add-dense}, the full semantics $\sem{\r}$ is
additive on $X$.

Moreover, since $\bsem{\e}$ is strict for all $\e\in \Ecom$, 
we have also $\sem{\r}\bot = \bot$.

Therefore $\sem{\r}$ is additive on any subset $X\subseteq D$.  \hfill
\end{proof}

From Definition~\ref{de:semantics}, it follows immediately
that nondeterministic choice  is commutative and associative, and that
sequential composition is associative, with $\1$ as its
neutral element. We next highlight a number of properties of the full semantics.

 \begin{definition}[Additive combination of a family of functions]
  Given a family of functions $f_i : \lat \to \lat$, indexed by
$i \in I$, let us define their \emph{additive combination}  $\add{\bigoplus}_{i\in I} f_i : \lat \to \lat$ as follows:
\begin{equation*}
  \big(\add{\bigoplus}_{i\in I} f_i \big)(d) \triangleq \bigsqcup_{b \in \ebase[D]{d}}
\big(\bigoplus_{i\in I} f_i(b) \big)
\end{equation*}
 \end{definition}

\begin{lemma}\label{le:additive-comb}
  For all $\r,\r_1,\r_2\in \Reg$ and $d \in \lat$, the following equalities hold:
  \begin{enumerate}[{\rm (1)}] 
    \item
    \label{le:additive-comb:sum}
    $\sem{\r_1 + \r_2}d = (\sem{\r_1} \add{\oplus} \sem{\r_2})d$
    \item
    \label{le:additive-comb:star}
    $\sem{\r^*}d = (\add{\bigoplus}_{i \geq 0} \sem{\r^i})(d)$
  \end{enumerate}
\end{lemma}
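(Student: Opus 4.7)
\medskip

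\noindent\textbf{Proof plan.} Both equalities are essentially unfoldings of the defining equations in Definition~\ref{de:semantics}, once one records a simple auxiliary fact that lets the outer basis-join commute with the monoidal combinations on the right-hand sides.

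\medskip

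\noindent\textbf{Step 1 (auxiliary identity on basis elements).} I would first observe that for every $b \in \base[D]$ and every $\r \in \Reg$,
\[
  \sem{\r}b \;=\; \bsem{\r}b.
\]
Indeed, $\sem{\r}b = \bigsqcup_{b' \in \ebase[D]{b}} \bsem{\r}b'$, and since $b$ itself lies in $\ebase[D]{b}$ and is its greatest element, Lemma~\ref{le:basic-prop-sem}(\ref{le:basic-prop-sem:monot}) (monotonicity of $\bsem{\r}$) ensures that $\bsem{\r}b$ is the join on the right. This observation is what lets us replace $\bsem{\r_i}$ by $\sem{\r_i}$ inside basis joins, matching the form of the additive combination.

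\medskip

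\noindent\textbf{Step 2 (item 1).} For the sum case I would simply chain the definitions:
\[
  \sem{\r_1 + \r_2}d \;=\; \bigsqcup_{b \in \ebase[D]{d}} \bsem{\r_1 + \r_2}b \;=\; \bigsqcup_{b \in \ebase[D]{d}} \bigl(\bsem{\r_1}b \oplus \bsem{\r_2}b\bigr),
\]
using Definition~\ref{de:semantics} for the full semantics and for $\bsem{\r_1 + \r_2}$. Then, applying Step~1 to each summand inside the bracket, the expression becomes $\bigsqcup_{b \in \ebase[D]{d}} (\sem{\r_1}b \oplus \sem{\r_2}b)$, which is exactly $(\sem{\r_1} \add{\oplus}\, \sem{\r_2})(d)$ by the definition of additive combination.

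\medskip

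\noindent\textbf{Step 3 (item 2).} The Kleene-star case proceeds in the same manner: unfold $\sem{\r^*}d = \bigsqcup_{b \in \ebase[D]{d}} \bsem{\r^*}b = \bigsqcup_{b \in \ebase[D]{d}} \bigoplus_{i \geq 0} \bsem{\r^i}b$, and then use Step~1 applied to each $\r^i$ to rewrite $\bsem{\r^i}b$ as $\sem{\r^i}b$; the result matches $(\add{\bigoplus}_{i \geq 0} \sem{\r^i})(d)$ definitionally. No distributivity or quantale structure is invoked, because the outer join is over the basis elements of a single $d$, not over joins that would need to commute with $\oplus$.

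\medskip

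\noindent\textbf{Expected difficulty.} There is no substantive obstacle: the statement is a restatement of Definition~\ref{de:semantics} through the lens of the new operator $\add{\bigoplus}$. The only non-trivial ingredient is the auxiliary identity $\sem{\r}b = \bsem{\r}b$ for $b \in \base[D]$, which itself is an immediate consequence of monotonicity of $\bsem{\r}$ established in Lemma~\ref{le:basic-prop-sem}(\ref{le:basic-prop-sem:monot}). Once this is in place, both items follow by a one-line computation.
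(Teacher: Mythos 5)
Your proposal is correct and follows essentially the same route as the paper: unfold the definitions of the full semantics and of $\bsem{\r_1+\r_2}$ (resp.\ $\bsem{\r^*}$), then identify the result with the additive combination. The only difference is that you make explicit the auxiliary identity $\sem{\r}b = \bsem{\r}b$ for $b \in \base[D]$ (via monotonicity of $\bsem{\r}$), a step the paper's computation uses silently when it replaces $\bsem{\r_i}b$ by $\sem{\r_i}b$ under the basis join.
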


\begin{proof}
\bigskip

\eqref{le:additive-comb:sum}
For all  $d \in D$:
    \begin{align*}
      \sem{\r_1 + \r_2}d
      = \add{\bsem{\r_1 + \r_2}}d %
      = \bigsqcup_{b \in \ebase{d}} \left(\bsem{\r_1}b \oplus \bsem{\r_2}b\right) %
      = \bigsqcup_{b \in \ebase{d}} \left( \sem{\r_1}b \oplus \sem{\r_2}b \right) %
      = (\sem{\r_1} \add{\oplus} \sem{\r_2})d
    \end{align*}
\eqref{le:additive-comb:star}
For all $d \in D$
    \begin{align*}
      \sem{\r^*}d 
      = \add{\bsem{\r^*}}d %
      = \bigsqcup_{b \in \ebase{d}}\left(\bigoplus_{i\geq 0} \bsem{\r^i}b\right) %
      = \left(\add{\bigoplus}_{i\geq 0} \sem{\r^i}\right)d
    \end{align*}
    \hfill
\end{proof}

\begin{lemmarep}[Properties of $\sem{\cdot}$]
  \label{le:kleene-axioms}
  For all $\r,\r_1,\r_2\in \Reg$ and $d \in \lat$, the properties below hold:
  \begin{enumerate}[{\rm (1)}]

  \item
    \label{le:kleene-axioms:seq}
    $\sem{\r_1;\r_2}d  \sqsubseteq \sem{\r_2}(\sem{\r_1}d)$
    
  \item
    \label{le:kleene-axioms:sum}
    $\sem{\r_1 + \r_2}d %
    \sqsubseteq (\sem{\r_1}d) \oplus (\sem{\r_2}d)$

  \item
    \label{le:kleene-axioms:star}
    $\sem{\r^*}d %
    \sqsubseteq \bigoplus_{i \in \N}(\sem{\r^i} d)$
  \item
    \label{le:kleene-axioms:left-exp}
    $\sem{\r^*} = \sem{1 + (\r; \r^*)}$ \hfill\text{\rm (left-expansion)}
    
  \item 
    \label{le:kleene-axioms:left-dist}
    $\sem{\r; (\r_1 + \r_2)} = \sem{(\r; \r_1) + (\r; \r_2)}$ \hfill\text{\rm (left-distributivity)}

  \end{enumerate}
\end{lemmarep}

\begin{proof}%

  \eqref{le:kleene-axioms:seq}
  For all $d \in D$
  \begin{align*}
    \sem{\r_1; \r_2}d 
    &= \add{\bsem{\r_1; \r_2}}d \\
    & 
    = \bigsqcup_{b \in \ebase{d}} \bsem{\r_1; \r_2}b \\
    & 
    = \bigsqcup_{b \in \ebase{d}} \add{\bsem{\r_2}} (\bsem{\r_1} b) \\
    & 
    \sqsubseteq \add{\bsem{\r_2}} \Bigl(\bigsqcup_{b \in \ebase{d}}  \bsem{\r_1} b\Bigl) \\
    & 
    =  \add{\bsem{\r_2}} (\add{\bsem{\r_1}} d) \\
    &
    = \sem{\r_2} (\sem{\r_1} d) 
  \end{align*}

  \bigskip

\noindent
  \eqref{le:kleene-axioms:sum} For all  $d \in D$:
    \begin{align*}
      \sem{\r_1 + \r_2}d
      & = \add{\bsem{\r_1 + \r_2}}d\\
      & = \bigsqcup_{b \in \ebase{d}} \Bigl(\bsem{\r_1}b \oplus \bsem{\r_2}b \Bigr)\\
      & \sqsubseteq   \Bigl ( \bigsqcup_{b \in \ebase{d}} \bsem{\r_1}b \Bigr) 
      \oplus \Bigl (\bigsqcup_{b \in \ebase{d}} \bsem{\r_2}b \Bigr) \\ 
      & = \sem{\r_1}d \oplus \sem{\r_2}d
    \end{align*}
    as
    $\bsem{\r_i}b \sqsubseteq \bigsqcup_{b \in \ebase{d}}
    \bsem{\r_i}b$ by monotonicity of $\bsem{\r_i}$ for $i = 1, 2$.
    Hence by monotonicity of $\oplus$, for all $b \in \ebase{d}$
    \begin{center}
      $\bsem{\r_1}b \oplus \bsem{\r_2}b \sqsubseteq \left ( 
        \bigsqcup_{b \in \ebase{d}} \bsem{\r_1}b \right) \oplus 
        \left (\bigsqcup_{b \in \ebase{d}} \bsem{\r_2}b \right)$.
    \end{center}

    Since the inequality holds for all
      $b \in \ebase{d}$ it also holds for
    $\bigsqcup_{b \in \ebase{d}} \bsem{\r_1}b \oplus \bsem{\r_2}b$.
    
    \bigskip

\noindent
    \eqref{le:kleene-axioms:star} For all $d \in D$
    \begin{align*}
       \sem{\r^*}d 
       = \add{\bsem{\r^*}}d %
      = \bigsqcup_{b \in \ebase{d}}\Biggl(\bigoplus_{i\geq 0} \bsem{\r^i}b\Biggr) %
      \sqsubseteq \bigoplus_{i \geq 0} \bigsqcup_{b \in \ebase{d}} \bsem{\r^i}b %
      = \bigoplus_{i \geq 0} \sem{\r^i}d
    \end{align*}
    and
    $\bsem{\r^i}b \sqsubseteq \bigsqcup_{b \in \ebase{d}} \bsem{\r^i}b
    = \sem{\r^i}d$ by monotonicity of $\bsem{\r^i}$ for all $i$ and
    for all $b \in \ebase{d}$. Then by monotonicity of
    $\bigoplus_{i \geq 0}$ we have
    $\bigoplus_{i\geq 0} \bsem{\r^i}b \sqsubseteq \bigoplus_{i \geq 0}
    \sem{\r^i}d$, and since it holds for all $b \in \ebase{d}$ it also
    holds for
    $\bigsqcup_{b \in \ebase{d}}(\bigoplus_{i\geq 0} \bsem{\r^i}b)$.
    
    \bigskip

  \noindent
  \eqref{le:kleene-axioms:left-exp}--\eqref{le:kleene-axioms:left-dist}
  Below, for showing $\sem{\r_1} = \sem{\r_2}$, we just prove the equality
  for the semantics on basis elements $\bsem{\r_1} = \bsem{\r_2}$. Then the
  general equality follows since
  $\sem{\r_i}d = \add{\bsem{\r_i}}d = \bigsqcup_{b \in \ebase{d}}
  \bsem{\r_i}b$.

\bigskip

\noindent    
    \eqref{le:kleene-axioms:left-exp}
  Let $b \in \base[D]$. Then
\begin{align*}
  \bsem{\r^*}b 
  & = \bigoplus_{i \geq 0} \bsem{\r^i}b \\ 
  & = b \oplus \bigoplus_{i \geq 1} \bsem{\r^i}b \\ 
  & = b \oplus \bigoplus_{i \geq 0} \bsem{\r; \r^{i}}b\\
  & = b \oplus \bigoplus_{i \geq 0} \add{\bsem{\r^i}}(\bsem{\r}b) \\
  & = b \oplus \bigoplus_{i \geq 0}
  \bigsqcup_{b' \in \ebase{\bsem{\r}b}} \bsem{\r^i}b'  \\ 
  & = b \oplus \bigsqcup_{b' \in \ebase{\bsem{\r}b}} \bigoplus_{i \geq 0}
  \bsem{\r^i}b'  \\ 
   & \quad \mbox{[by the $\oplus$-$\sqcup$ distributivity law \eqref{eq:distr}]}\\
   & = b \oplus \bigsqcup_{b' \in \ebase{\bsem{\r}b}} \bsem{\r^*}b'  \\ 
& = \bsem{\1}b \oplus \add{\bsem{\r^*}}(\bsem{\r}b) \\ 
 & = \bsem{1 + (\r; \r^*)}b
  \end{align*} 
    \bigskip

\noindent
  \eqref{le:kleene-axioms:left-dist}
    Let $b \in \base[D]$. Then
  \begin{align*}
    \bsem{\r; (\r_1 + \r_2)}b
    & =  \add{\bsem{\r_1 + \r_2}}(\bsem{\r}b)\\          
    & = \bigsqcup_{b' \in \ebase{\bsem{\r}b}} \bsem{\r_1 + \r_2}b'\\
    & = \bigsqcup_{b' \in \ebase{\bsem{\r}b}} (\bsem{\r_1}b' \oplus \bsem{\r_2}b') %
    & %
    \mbox{[by the $\oplus$-$\sqcup$ distributivity law \eqref{eq:distr}]}\\
    & = \Bigl (\bigsqcup_{b' \in \ebase{\bsem{\r}b}} \bsem{\r_1}b' \Bigr ) \oplus
      \Bigl ( \bigsqcup_{b' \in \ebase{\bsem{\r}b}} \bsem{\r_2}b' \Bigr )\\
    & =  \add{\bsem{\r_1}}(\bsem{\r}b) \oplus \add{\bsem{\r_2}}(\bsem{\r}b)\\          
    & =  \bsem{\r;\r_1}b \oplus \bsem{\r; \r_2}b\\ 
    & =  \bsem{(\r;\r_1) + (\r; \r_2)}b \tag*{\qedhere}
  \end{align*}
\end{proof}

The inequalities \eqref{le:kleene-axioms:seq}--\eqref{le:kleene-axioms:star} above state %
that, generally, the semantics of sequential composition, non-deterministic choice, and Kleene iteration are \emph{over-approximated} by their inductive definitions, as is often the case in abstract interpretation.
Example~\ref{ex:intervals-hole} below proves %
that the inequality \eqref{le:kleene-axioms:seq} for sequential composition can indeed be strict.
However, if the full semantics is additive, then~\eqref{le:kleene-axioms:seq} holds as an equality.

We next observe that, if the pointwise extension of $\oplus$ to functions preserves additivity\footnote{Let $\{f_i\}_{i \in I}$ be a family of functions $f_i : D \to D$. 
Define their pointwise combination $\bigoplus_{i \in I} f_i$ by $(\bigoplus_{i \in I} f_i)(d) = \bigoplus_{i \in I} f_i(d)$. We say that $\oplus$ \emph{preserves additivity} if, whenever every $f_i$ is additive, the resulting function $\bigoplus_{i \in I} f_i$ is also additive.}, then \eqref{le:kleene-axioms:sum} and \eqref{le:kleene-axioms:star} also become equalities. In this case, the full semantics coincides precisely with the inductive definition. %

\begin{proposition}[Inductive semantics]
  \label{pr:inductive}
  Let $(D, \sqsubseteq, \oplus, \base[D])$ be a interpretation monoid with an
  irreducible basis, let the full semantics $\sem{\cdot}: D \to D$ be additive
  and assume that $\oplus$ preserves additivity.
  Then, for all $d \in D$, it holds that:
  \begin{align*}
    &\sem{\e}d  = \bigsqcup_{b \in \ebase{d}} \bsem{\e}b
    \qquad\qquad\quad 
    \sem{\r_1;\r_2}d  = \sem{\r_2}(\sem{\r_1}d)\\
    &\sem{\r_1 + \r_2}d = \sem{\r_1}d \oplus \sem{\r_2}d
    \qquad 
    \sem{\r^*}d  = \Bigl (\bigoplus_{i \geq 0} \sem{\r^i}\Bigr)d
  \end{align*}
\end{proposition}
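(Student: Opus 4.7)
My plan is to derive each of the four equalities from the already-established inequalities of Lemma~\ref{le:kleene-axioms} together with the two extra assumptions: additivity of $\sem{\cdot}$ and preservation of additivity by $\oplus$. A preliminary remark I would make at the outset is that on any basis element $b \in \base[D]$ one has $\sem{\r}b = \bsem{\r}b$: indeed $b \in \ebase{b}$ dominates every other element of $\ebase{b}$, so by monotonicity of $\bsem{\r}$ (Lemma~\ref{le:basic-prop-sem}\eqref{le:basic-prop-sem:monot}) the join in $\sem{\r}b = \bigsqcup_{b' \in \ebase{b}}\bsem{\r}b'$ is realized at $b$. This identification is what lets the basis semantics and the full semantics be used interchangeably inside joins indexed by $\ebase{d}$.

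The first equation is then immediate from the definition $\sem{\e} = \add{\bsem{\e}}$. For the sequential case, Lemma~\ref{le:kleene-axioms}\eqref{le:kleene-axioms:seq} already yields $\sem{\r_1;\r_2}d \sqsubseteq \sem{\r_2}(\sem{\r_1}d)$; for the converse direction I would rewrite
\begin{equation*}
\sem{\r_2}(\sem{\r_1}d) = \sem{\r_2}\Bigl(\bigsqcup_{b \in \ebase{d}} \bsem{\r_1}b\Bigr) = \bigsqcup_{b \in \ebase{d}} \sem{\r_2}(\bsem{\r_1}b),
\end{equation*}
using additivity of $\sem{\r_2}$. Since $\sem{\r_2}(\bsem{\r_1}b) = \add{\bsem{\r_2}}(\bsem{\r_1}b) = \bsem{\r_1;\r_2}b$ by definition, the right-hand side collapses to $\add{\bsem{\r_1;\r_2}}d = \sem{\r_1;\r_2}d$.

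For the nondeterministic choice, I would invoke Lemma~\ref{le:additive-comb}\eqref{le:additive-comb:sum} to get $\sem{\r_1+\r_2}d = \bigsqcup_{b \in \ebase{d}}(\sem{\r_1}b \oplus \sem{\r_2}b)$. By assumption $\oplus$ preserves additivity, so the pointwise function $d \mapsto \sem{\r_1}d \oplus \sem{\r_2}d$ is additive; applying it to $d = \bigsqcup_{b \in \ebase{d}} b$ therefore yields $\sem{\r_1}d \oplus \sem{\r_2}d = \bigsqcup_b(\sem{\r_1}b \oplus \sem{\r_2}b)$, matching the previous expression. The Kleene-star case is entirely analogous: Lemma~\ref{le:additive-comb}\eqref{le:additive-comb:star} gives $\sem{\r^*}d = \bigsqcup_{b \in \ebase{d}}\bigl(\bigoplus_{i \geq 0} \sem{\r^i}b\bigr)$, and preservation of additivity by the infinitary $\bigoplus$ ensures the function $\bigoplus_{i \geq 0} \sem{\r^i}$ is additive, so the outer join over $\ebase{d}$ collapses back to its value at $d$.

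The step I expect to require the most care is justifying that preservation of additivity genuinely applies in the countable combination needed for the star, because the assumption as stated concerns a single family of additive functions, and one should verify that each $\sem{\r^i}$ is itself additive. This follows by a straightforward induction on $i$ from the composition equation for sequencing (already established in the plan above) together with additivity of $\sem{\r}$; once this inductive observation is recorded, preservation of additivity of $\oplus$ does the remaining work and the star equation drops out uniformly with the choice equation.
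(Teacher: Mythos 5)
Your proof is correct and follows essentially the same route as the paper's: the choice and star equations are obtained from Lemma~\ref{le:additive-comb} by observing that, under the two hypotheses, the additive combination $\add{\bigoplus}_{i}\sem{\r^i}$ coincides with the pointwise $\bigoplus_{i}\sem{\r^i}$ (your "collapse of the outer join"), and the sequential case is handled by pulling the join through $\sem{\r_2}$ via its additivity, exactly as in the paper. Your preliminary identification $\sem{\r}b=\bsem{\r}b$ on basis elements and the closing remark on additivity of each $\sem{\r^i}$ make explicit two points the paper leaves implicit, but they do not change the argument.
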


\begin{proof}
  The proof proceeds by a routine induction on the command syntax. The case of elementary
  commands is by definition.

  \smallskip
  
  For $\r_1 + \r_2$ and $\r^*$ one uses the fact that if $f_i$,
  $i \in I$ is a family of additive functions then
  $\add{\bigoplus_{i \in I}} f_i = \bigoplus_{i \in I} f_i$ . In fact
  \begin{align*}
    \Bigr(\add{\bigoplus_{i \in I}} f_i\Bigl)(d) 
    & = \bigsqcup_{b \in \ebase{d}} \bigoplus_{i \in I} f_i (b) \\
    & = \bigsqcup_{b \in \ebase{d}} \Bigl(\bigoplus_{i \in I} f_i\Bigr)(b) \\
    & = \bigoplus_{i \in I} f_i \Bigl(\bigsqcup_{b \in \ebase{d}} b\Bigr) \\
    & = \Bigl(\bigoplus_{i \in I} f_i\Bigr)(d)
  \end{align*}
  Then we conclude by 
  Lemma~\ref{le:kleene-axioms}~\eqref{le:kleene-axioms:sum},\eqref{le:kleene-axioms:star} and
  Lemma~\ref{le:additive-comb}~\eqref{le:additive-comb:sum},\eqref{le:additive-comb:star}.
  
  The only delicate case is sequential composition, for which
  Lemma~\ref{le:kleene-axioms}~\eqref{le:kleene-axioms:seq} gives only
  an inequality. In this setting
  \begin{align*}
    \sem{\r_1; \r_2}d %
    & = \bigsqcup_{b \in \ebase{d}} \bsem{\r_1; \r_2}b \\ 
    & =  \bigsqcup_{b \in \ebase{d}} \add{\bsem{\r_2}} (\bsem{\r_1}b) \\
    & =  \bigsqcup_{b \in \ebase{d}} \sem{\r_2} (\bsem{\r_1}b) \\ 
    & =  \sem{\r_2} \Bigl(\bigsqcup_{b \in \ebase{d}}  \bsem{\r_1}b\Bigr) %
    & %
    [\mbox{additivity of $\sem{\r_2}$}] \\ 
    & = \sem{\r_2}(\sem{\r_1}d)\tag*{\qedhere}
  \end{align*}
\end{proof}

\begin{example}\label{ex:intervals-hole} 
   Consider the interpretation monoid of integer intervals (from ~\S~\ref{ex:monoid-lattice-constructions-irr-lat}), and the command
  $\mathsf{p} \triangleq \mathsf{\mathsf{(\mathit{x} \neq 0?)}; \mathsf{(\mathit{x} = 0?)}}$, where
  $\mathit{x}$ is an integer variable. 
  While the precise construction of the semantics is detailed in \S~\ref{ss:interval-semantics}, we observe here the discrepancy between the full semantics and the inductive composition. 
   For the input $[-1, 1]$, the full semantics (defined via the basis decomposition $[-1, 1] = [-1, -1] \sqcup [0, 0] \sqcup [1, 1]$) yields:
    \begin{align*} 
      & \sem{\mathsf{p}}[-1,1] = \bsem{\mathsf{p}}[-1,-1] \sqcup \bsem{\mathsf{p}}[0,0] \sqcup \bsem{\mathsf{p}}[1,1]\\ 
      & = \add{\bsem{\mathsf{(\mathit{x} = 0?)}}}(\bsem{\mathsf{(\mathit{x} \neq 0?)}}[-1,-1]) \sqcup \add{\bsem{\mathsf{(\mathit{x} = 0?)}}}(\bsem{\mathsf{(\mathit{x} \neq 0?)}}[0,0]) \\ & \phantom{=\;} \sqcup \add{\bsem{\mathsf{(\mathit{x} = 0?)}}}(\bsem{\mathsf{(\mathit{x} \neq 0?)}}[1,1]) \\ 
      & = \add{\bsem{\mathsf{(\mathit{x} = 0?)}}}[-1,-1] \sqcup \add{\bsem{\mathsf{(\mathit{x} = 0?)}}}[1,1] 
      = \varnothing\, .
  \end{align*}
 Instead, the inductive composition over\hyp{}approximates the result 
 $\sem{\mathsf{(\mathit{x} = 0?)}} \left (\sem{\mathsf{(\mathit{x} \neq 0?)}}[-1, 1] \right ) =
   \sem{\mathsf{(\mathit{x} = 0?)}}[-1, 1]  = [0, 0]$.
  \hfill$\lozenge$
\end{example}

While Lemma~\ref{le:kleene-axioms} shows that the Kleene star satisfies 
the left-expansion law~\eqref{le:kleene-axioms:left-exp} and that sequential composition 
left-distributes over choice~\eqref{le:kleene-axioms:left-dist}, other axioms of Kleene algebras 
do not hold in our full semantics. In particular, sequential composition does 
not right-distribute over choice; in general,
$(\r_1 + \r_2); \r \neq (\r_1 ; \r) + (\r_2; \r)$,
as shown by the following example.

\begin{example}\label{ex:intervals-kleene-fail}
  Consider again the interpretation monoid of integer intervals 
  and the command
 $\mathsf{q} \triangleq \mathsf{(inc \ \mathit{x} + dec \ \mathit{x})}; \mathsf{(\mathit{x} = 0?)}$,
  where $\mathsf{inc} \ x$ is a shorthand for $ x:= x+1$ and $\mathsf{dec} \ x$ for $x:=x-1$. 
  We show the failure of right-distributivity of sequential composition over choice: 
  $\sem{\mathsf{q}}[0, 0] = [0,0]$, whereas the distributed form $(\mathsf{inc}\ \mathit{x}; \mathsf{(\mathit{x} = 0?)}) + (\mathsf{dec}\ \mathit{x}; \mathsf{(\mathit{x} = 0?)})$ evaluates to $\varnothing$. Indeed, we have:
  \[
  \hspace{-8pt}
    \begin{tabular}{l c l}
    $\sem{\mathsf{q}}[0, 0] = 
     \add{\bsem{\mathsf{(\mathit{x} = 0?)}}}(\bsem{\mathsf{(inc \ \mathit{x} + dec \ \mathit{x})}}[0, 0])$  
     & &
    $\sem{(\mathsf{inc \ \mathit{x} }; \mathsf{(\mathit{x} = 0?)}) + (\mathsf{dec \ \mathit{x} }; \mathsf{(\mathit{x} = 0?)})}[0, 0]$
     \\
     $= \add{\bsem{\mathsf{(\mathit{x} = 0?)}}}(\bsem{\mathsf{inc} \ x}[0, 0] \oplus \bsem{\mathsf{dec} \ x}[0, 0])$ 
     & &
     $= \bsem{(\mathsf{inc \ \mathit{x} }; \mathsf{(\mathit{x} = 0?)}) + (\mathsf{dec \ \mathit{x} }; \mathsf{(\mathit{x} = 0?)})}[0, 0]$
     \\
     $= \add{\bsem{\mathsf{(\mathit{x} = 0?)}}}(
      [1, 1] \oplus [-1, -1])$ 
      & &
      $= \bsem{(\mathsf{inc \ \mathit{x} }; \mathsf{(\mathit{x} = 0?)})}[0, 0] \oplus \bsem{(\mathsf{dec \ \mathit{x} }; \mathsf{(\mathit{x} = 0?)})}[0, 0]$
      \\
    $= \add{\bsem{\mathsf{(\mathit{x} = 0?)}}}
      [-1, 1] 
    = [0,0]\,$  
    & &
    $= \add{\bsem{\mathsf{(\mathit{x} = 0?)}}}[1, 1] \oplus \add{\bsem{\mathsf{(\mathit{x} = 0?)}}}[-1, -1]
    = \varnothing$
    \\ 
\end{tabular}
\]

\end{example}

\subsection{Instances}

We present some %
instantiations of the interpretation monoid with the corresponding semantics.
\begin{instance}[Inductive Semantics]
\label{ss:inductive-semantics}
Given a generic complete lattice $(C, \leq)$, the standard inductive semantics of regular commands over $C$ 
(cf., e.g.,~\cite{OHearn20,KozenKAT,cousot21,BGGR21}) %
is recovered as a direct instance of Definition~\ref{de:semantics} using the simple monoid $(C, \leq, \vee, C)$ (cf.~\S~\ref{ex:monoid-lattice-constructions-irr-lat}).
In this setting, the join extension is redundant %
since the basis decomposition is simply $\ebase[C]{c} = \down{c}$. Consequently, %
by monotonicity of $\bsem{\r}$ (Lemma~\ref{le:basic-prop-sem}~\eqref{le:basic-prop-sem:monot}), the full semantics coincides with the basis semantics:
  $\sem{\r}c = \vee \set{\bsem{\r}c' \mid c' \leq c} = \bsem{\r}c$. 
Explicitly, the semantics satisfies the standard inductive equations:
$\sem{\e} c =\bsem{\e}c$, $\sem{\r_1;\r_2} c=
\sem{\r_2}(\sem{\r_1}c)$, $\sem{\r_1 + \r_2} c =(\sem{\r_1}  c) \vee (\sem{\r_2} c)$, and $\sem{\r^*} c = \bigvee_{i \geq 0} \sem{\r^i} c$.
We refer to this instance as the \emph{inductive semantics} in $C$, 
denoted %
$\isem[C]{\r}$.
\end{instance}

\begin{instance}[Collecting Semantics]
\label{ss:collecting-semantics}
  We consider an imperative language with integer variables
  $x, y, z, \ldots$ ranging over $\Var$, and program states $\sigma,
  \tau, \ldots$ ranging over $\Sigma \triangleq \Var \to \mathbb{Z}$. For a
  state $\sigma$, variable $x$, and value $v \in \mathbb{Z}$, 
  we write $\subst{\sigma}{v}{x}$ to denote the state update.
  The generalization to other value types---such as floating-point numbers, bounded integers, floating-point numbers, reals, or rationals---is straightforward. 

  The strongest postcondition of assignments $x:=\mathit{exp}$ and Boolean filters $b?$ is
  defined as usual. For all $X \in \pow{\Sigma}$:
  \begin{align*}
    \hspace{-15pt}
    \post{x:=\mathit{exp}}(X) \triangleq \set{ \subst{\sigma}{\sigma(\mathit{exp})}{x} \in \Sigma \mid \sigma \in X}\, , \qquad %
     \post{b?}(X) \triangleq \set{\sigma \in X\mid \sigma(b)=\mathit{tt}}\, ,
  \end{align*}
  where $\sigma(\mathit{exp})\in \mathbb{Z}$ and
  $\sigma(b)\in \{\mathit{tt},\mathit{ff}\}$ denote, resp., the
  evaluation of the expression $\mathit{exp}$ and the Boolean condition $b$ in state
  $\sigma$.

  The \emph{collecting} (or \emph{Hoare}) \emph{semantics} of commands
  is the inductive semantics over $(\pow{\Sigma}, \subseteq)$ of
  \S~\ref{ss:inductive-semantics} above, i.e., the full semantics in
  the simple monoid
  $(\pow{\Sigma}, \subseteq, \cup, \pow{\Sigma})$. 

  It can also be obtained by instantiating
  Definition~\ref{de:semantics} to the irreducible powerset monoid
  $(\pow{\Sigma}, \subseteq, \cup, \liftIr{\pow{\Sigma}})$ (cf.~\S~\ref{ex:monoid-lattice-constructions-irr-pow}), where for all
  $\e\in \Ecom$ and $\{\sigma\}\in \Ir{\pow{\Sigma}}$:
  \[
  \bsem{\e}(\varnothing)\triangleq \varnothing\, ,
  \qquad \qquad
  \bsem{\e}(\set{\sigma}) \triangleq \post{\e}(\set{\sigma})\, .
  \]
  In this construction
  $\mon{\liftIr{\pow{\Sigma}}} = \pow{\Sigma}$. Then, for all
  $\r\in \Reg$, $\bsem{\r} : \liftIr{\pow{\Sigma}} \to \pow{\Sigma}$
  is the strict semantics of singleton states, which is then extended
  additively to sets  via $\sem{\r} =
  \add{\bsem{\r}}$. %

  The distinction
  between these two alternative  definitions---using the simple monoid vs the irreducible monoid---will play a central role
  when we define the logic (see \S~\ref{ss:instances-concrete}), as the choice of the lattice basis
  determines the applicability of a key inference rule (join).
\end{instance}

  \begin{instance}[Semantics for Incorrectness Logic]
  \label{ss:incorrectness-semantics}
  To obtain an incorrectness logic in the style of~\cite{OHearn20}, we
  can consider the dual simple monoid over
  $(\pow{\Sigma}, \subseteq)$, i.e.,
  $(\pow{\Sigma}, \supseteq, \cup, \pow{\Sigma})$.
  Note that $\varnothing$
  is the monoid neutral element and $\Sigma$ is the lattice
  bottom.
  In this setting, Definition~\ref{de:semantics} yields
  the collecting semantics defined inductively. Crucially, non-deterministic choice is 
  interpreted with $\oplus = \cup$, which is distinct from the lattice 
  join $\cap$. Hence, for $X \in \base[\pow{\Sigma}] = \pow{\Sigma}$, 
  $\bsem{\r_1 + \r_2} X = (\bsem{\r_1} X) \cup (\bsem{\r_2} X)$. Sequential composition is derived as follows: $\bsem{\r_1;\r_2} X = \add{\bsem{\r_2}}(\bsem{\r_1}X) = \bigcap_{Y \in \ebase{\bsem{\r_1}X}} \bsem{\r_2}Y = \bsem{\r_2}\bsem{\r_1}X$.
  In fact, 
  the intersection over supersets %
  is vacuous as $\ebase[\pow{\Sigma}]{\bsem{\r_1}X} = \set{Y \in \pow{\Sigma} \mid Y \supseteq \bsem{\r_1}X}$, and  $\bsem{\r_2}$ is monotone
  (Lemma~\ref{le:basic-prop-sem}~\eqref{le:basic-prop-sem:monot}).
  For the same reason, $\sem{\r}X = \add{\bsem{\r}}X = \bsem{\r} X$.
  It is worth noting that this semantics is non-strict, 
  i.e., we do not require $\sem{\r}\Sigma = \Sigma$. 
    
  Finally, we remark that while incorrectness logic~\cite{OHearn20}
  distinguishes between normal termination and errors using markers (namely,
  {\color{green!70!black}$\mathsf{ok}$} and {\color{red}$\mathsf{er}$}), we omit this distinction 
  here. 
  We focus, for simplicity, on the core principle of under-approximating
  the set of reachable states, consistent with recent approaches such as~\cite{AscariBGL25,VerschtK25}.
  \end{instance}

\begin{instance}[Interval Semantics]
  \label{ss:interval-semantics} 
  We again consider the setting of \S~\ref{ss:collecting-semantics}, 
  and, for simplicity of notation, restrict attention to a single variable, i.e., $\Var = \set{x}$, so that 
  $\Sigma =\mathbb{Z}$.  
  Within the standard abstract interpretation framework~\cite{cousot21,CC77}, 
  we consider the usual interval abstraction map 
  $\alpha: \pow{\Z} \rightarrow \Int$, which can be straightforwardly  
  extended to states with multiple variables. 
  Given $S \subseteq \Z$, the abstraction $\alpha(S) \in \Int$ is defined as $\alpha(S) \triangleq \interval{\inf S}{\sup S}$, with the usual convention that 
  $\alpha(\varnothing) = \interval{+\infty}{-\infty} = \varnothing$.
  On the other hand, the concretization map $\gamma: \Int \rightarrow \pow{\mathbb{Z}}$
  acts as identity: for $\interval{l}{u}\in \Int$, $\gamma(\interval{l}{u}) 
  = \set{ z \in \Z \mid  z \in [l,u]}$.
  Then, the interval semantics for $\e\in \Ecom$ is defined
  as follows: for all $a \in \Int$,
    $\sem{\e}a \triangleq \alpha(\post{\e}(\gamma(a)))\, $.

  This construction can be carried out either in the simple monoid over
  $\Int$ (cf.~\S~\ref{ex:monoid-lattice-constructions-simple}), in which all elements of $\Int$ belong to the basis,
  or 
  in the irreducible monoid over 
  $\Int$ (cf.~\S~\ref{ex:monoid-lattice-constructions-irr-lat}),
  namely $(\Int, \sqsubseteq, \sqcup, \liftIr{\Int})$,
  where the basis elements are the irreducibles, namely, $\liftIr{\Int} = \set {\interval{z}{z} \mid z \in \Z}\cup\{\varnothing\}$.
  In the former case, we recover the inductive abstract semantics
  (cf.~\S~\ref{ss:inductive-semantics}) for the interval abstraction (as seen, e.g., in \cite{cousot21,min-tut17}), 
  whereas the latter yields a strictly
  finer semantics, as illustrated by the following example.
\end{instance}

\begin{example}\label{ex:intervals-hole-semantics}
  Consider the program
  $\mathsf{p} \triangleq \mathsf{\mathsf{(\mathit{x} \neq 0?)}; \mathsf{(\mathit{x} = 0?)}}$ from Example~\ref{ex:intervals-hole}.
  In the simple monoid $(\Int, \sqsubseteq, \sqcup, \Int)$, we compute %
    $\sem{\mathsf{\mathsf{(\mathit{x} \neq 0?)}; \mathsf{(\mathit{x} = 0?)}}}[-1,1] = [0, 0]%
    $,
  which coincides with the standard inductive abstract semantics, where 
  $\isem[\Int]{\r_1; \r_2}= \isem[\Int]{\r_2} \circ \isem[\Int]{\r_1}$.
  By contrast, as already observed in Example~\ref{ex:intervals-hole}, in the irreducible monoid 
  $(\Int, \sqsubseteq, \sqcup, \liftIr{\Int})$ we obtain 
    $\sem{\mathsf{\mathsf{(\mathit{x} \neq 0?)}; \mathsf{(\mathit{x} = 0?)}}}[-1,
1] = \varnothing$. %
  The formal derivation for the irreducible semantics in the \SHA\ logic is given later in
  Example~\ref{ex:intervals-logic} and Figure~\ref{fig:intervals}.
  \hfill$\lozenge$
\end{example}

\begin{instance}[Collecting Hypersemantics]
\label{ss:collecting-hypersemantics}

The \emph{collecting hypersemantics} is defined by applying the hyper monoid construction (cf.~\S~\ref{ex:monoid-lattice-constructions-hyper}) 
to %
the concrete domain $(\pow{\Sigma}, \subseteq)$.
The resulting interpretation monoid is
  $(\pow{\pow{\Sigma}}, \subseteq, \oplus, \liftIr{\pow{\pow{\Sigma}}})$, 
whose basis is the set of hyper singletons %
  $\liftIr{\pow{\pow{\Sigma}}} = \{\{X\} \mid X \in \pow{\Sigma}\} \cup \{\varnothing\}$. %
Observe that %
\(\varnothing\) and \(\{\varnothing\}\) %
play distinct roles: \(\varnothing\)
is the bottom element of the lattice, whereas \(\{\varnothing\}\) is the
neutral element for \(\oplus\).

For $\e \in \Ecom$ and a basis element $\{X\} \in \liftIr{\pow{\pow{\Sigma}}}$, the semantics is defined via the strongest postcondition:
  $\bsem{\e}(\{X\}) \triangleq \{\post{\e}(X)\} %
  $.

Since the underlying lattice is completely meet-distributive and the basis consists of irreducibles, the full semantics $\sem{\r}$, defined as additive extension, acts pointwise on the set of properties.
By Lemma~\ref{le:basic-prop-sem}~\eqref{le:basic-prop-sem:distrib}, for any $\r \in \Reg$ and hyperproperty $H \in \pow{\pow{\Sigma}}$:
  $\sem{\r} H = \{\post{\r}(X) \mid X \in H\}$.%

This construction extends to any complete lattice $(C, \leq)$ equipped with an inductive semantics $\isem[C]{\cdot}$ (as defined in \S~\ref{ss:inductive-semantics}).
By replacing $(\pow{\Sigma}, \subseteq)$ with $(C, \leq)$, and defining the elementary semantics on basis elements as $\bsem{\e}(\{c\}) \triangleq \{\isem[C]{\e}c\}$, we obtain a generalised hypersemantics where, for $H\in \wp(C)$: $\sem{\r} H = \{\isem[C]{\r}c \mid c \in H\}$.
\end{instance}

\section{The Program Logic \texorpdfstring{$\SHA$}{APPL}}\label{sec:logic}

We introduce an Abstract Program Property Logic, %
$\SHA$, %
as a general unifying framework for reasoning about regular commands in terms of pre- and postconditions. 
Abstract program properties range over a fixed language $\L$ and logical judgements take the form of triples $\as{h} \ \r \ \as{k}$, where $h, k \in \L$. 
The intended validity condition 
of such a judgment is that $\sem{\r}h \sqsubseteq k$.
The resulting logical system is always sound and, when %
the assertion language $\L$ is sufficiently expressive, also relatively complete.
The framework subsumes basic Hoare logic for state properties
and Hoare-style logics for hyperproperties.
Although based on over-approximation
via the order $\sqsubseteq$, the logic is flexible enough to support under-approximate reasoning. %

\subsection{Logical Rules, Soundness and Completeness}\label{ss:hoare}

We assume that the full semantics (Definition~\ref{de:semantics}) is defined over a reference interpretation monoid $(D, \sqsubseteq, \oplus, \base[D])$, 
with logical assertions ranging over a fixed subset $\L \subseteq D$.

\begin{definition}[Validity]
  An $\SHA$ \emph{triple} is written 
  $\as{h} \ \r \ \as{k}$,
  where $h, k \in \L$ and $\r\in \Reg$. 
  Such a triple is \emph{valid} if 
  $\sem{\r}{h} \sqsubseteq k$.
  \hfill$\lozenge$
\end{definition}

\begin{figure*}[t]
\hspace{-20pt}
\resizebox{1.08\textwidth}{!}{
    \begin{tabular}{c}\label{tab:hyperHoare}
          \vspace{15pt}
          $\inference[(basic)]{\sem{\e}{h} \sqsubseteq k}{\vdash \as{h} \ \e \ \as{k}}$
          \qquad
          $\inference[(seq)]{\vdash \as{h} \ \r_1 \ \as{k'} & \vdash \as{k'} \ \r_2 \ \as{k}}
          {\vdash \as{h} \ \r_1; \r_2 \ \as{k}}$ 
          \qquad
          $\inference[(cons)]{h \sqsubseteq h' & \vdash \as{h'} \ \r \ \as{k'} & k' \sqsubseteq k}
          {\vdash \as{h} \ \r \ \as{k}}$ 
          \\
          \vspace{15pt}
          $\inference[(choice)]{\vdash \as{h} \ \r_1 \ \as{k_1} & \vdash \as{h} \ \r_2 \ \as{k_2} & k_1\oplus k_2 \sqsubseteq k}
          {\vdash \as{h} \ \r_1 + \r_2 \ \as{k}}$
          \qquad
          $\inference[(iter)]{\forall i \in \N.\: \vdash \as{h_i} \ \r \ \as{h_{i+1}} & \bigoplus_i h_i \sqsubseteq k}{\vdash \as{h_0} \ \r^* \ \as{k}}$
          \\
          \vspace{15pt}
          $\inference[(rec)]{\vdash \as{h} \ \r \ \as{k'} & \vdash \as{k'} \ \r^* \ \as{l} & h\oplus l\sqsubseteq k}{\vdash \as{h} \ \r^* \ \as{k}}$
          \qquad 
          $\inference[(inv)]{\vdash \as{h} \ \r \ {\as{h}} %
          & \bigoplus_{\N} h \sqsubseteq k}{\vdash \as{h} \ \r^* \ \as{k}}$ \\
          \vspace{15pt}
          $\inference[(join)]{\{h_i \mid i \in I\} \text{{\small ~dense}} & h \sqsubseteq \bigsqcup_{i \in I} h_i & \forall i \in I.\: \vdash \as{h_i} \ \r \ \as{k_i}& \bigsqcup_{i \in I} k_i \sqsubseteq k}{\vdash \as{h}\ \r \ \as{k}}$
          \\
          \vspace{15pt}
           $\inference[(meet)]{h \sqsubseteq \bigsqcap_{i \in I} h_i & \forall i \in I.\: \vdash \as{h_i} \ \r \ \as{k_i} & \bigsqcap_{i \in I} k_i \sqsubseteq k}{\vdash \as{h}
          \ \r \ \as{k}}$
          \vspace{-12.5pt}
    \end{tabular}
    }
\caption{The program logic $\SHA$ over the interpretation monoid $(D, \sqsubseteq, \oplus, \base[D])$.}
\label{fig:rules}
\end{figure*}
The inference rules of $\SHA$ are collected in Figure~\ref{fig:rules}. 
Rule (basic) covers elementary commands. Its premise is satisfied by
any over-approximation of the exact semantics, as needed if %
$\sem{\e}{h}$ is not expressible in the language $\L$. 
The rules (seq) and (cons) are the standard sequencing and consequence rules, respectively. 
In particular, rule (cons) permits the weakening of 
preconditions and the strengthening of postconditions according to the order $\sqsubseteq$. 
Rule (choice) allows reasoning about non-deterministic computations. 
The side condition $k_1 \oplus k_2 \sqsubseteq k$ captures %
the possible approximation of the sum %
of the two branches. 
Rule (iter) enables reasoning about iteration via an arbitrary sequence of intermediate assertions $\{h_i\}_{i \in \mathbb{N}}$, whose monoidal sum is  approximated by 
the postcondition $k$.
Alternatively, iteration can be handled using the two finitary rules (rec) and (inv). 
The recursive rule (rec) is based on the standard unfolding of the Kleene star, and its side condition $h \oplus l \sqsubseteq k$ allows for approximation of the monoidal 
combination of the initial precondition $h$ and the postcondition $l$ of the recursive call. Rule (inv) relies on an invariant-based characterization of iteration, where $h$ plays the role of a loop invariant. 
The side condition $\bigoplus_{\mathbb{N}} h \sqsubseteq k$ 
allows for approximating its unrolling into the postcondition $k$.
Finally, the distinguishing rules (join) and (meet) govern reasoning about the lattice structure of the property domain $D$. 
In particular, rule (join) crucially relies on the notion of \emph{density} (Definition~\ref{def:dense}); without it, the rule would be unsound. 
It ensures that the inferred precondition $h$ can be over-approximated by the join of a countable %
family of preconditions $\{h_i\}_{i \in \mathbb{N}}$ in $\L$, over which the command $\r$ can be analysed. 
We first prove %
the soundness of the proof system, i.e., that every derivable triple is valid.

\begin{theoremrep}[Soundness]
  \label{prop:sound}
    Let $h, k \in \L$ and $\r\in \Reg$. Then %
        $\vdash \as{h} \ \r \ \as{k} 
        \ \Rightarrow \ %
        \sem{\r}{h} \sqsubseteq k$. %
\end{theoremrep}

\begin{proof}
  By structural induction on the derivation. We consider the last rule applied in the derivation.

  \bigskip

  \noindent
  (basic) $\inference{\sem{\e}{h} \sqsubseteq k}{\vdash \as{h} \ \e \ \as{k}}$:
  we conclude by the side condition $\sem{\e}{h} \sqsubseteq k$.

  \bigskip\bigskip

  \noindent
  (seq) $\inference{\vdash \as{h} \ \r_1 \ \as{k'} & \vdash \as{k'} \ \r_2 \ \as{k}}
          {\vdash \as{h} \ \r_1; \r_2 \ \as{k}}$:
          \bigskip

          \noindent
          by inductive hypothesis we have $\sem{\r_1}h \sqsubseteq k'$ and $\sem{\r_2}k' \sqsubseteq k$, then:
        \begin{equation*}
          \sem{\r_1; \r_2}h \sqsubseteq \sem{\r_2}\left ( \sem{\r_1}h \right)
            \sqsubseteq \sem{\r_2}k'  \sqsubseteq k
        \end{equation*}
        where we used Lemma~\ref{le:kleene-axioms}~\eqref{le:kleene-axioms:seq} and Lemma~\ref{le:additive-comb}~\eqref{le:additive-comb:star}. 

  \bigskip

  \noindent
  (choice) $\inference{
            \vdash \as{h} \ \r_1 \ \as{k_1} & \vdash \as{h} \ \r_2 \ \as{k_2} & k_1\oplus k_2 \sqsubseteq k
            }
          {\vdash \as{h} \ \r_1 + \r_2 \ \as{k}}$:
          \bigskip

        \noindent
        by inductive hypothesis we have $\sem{\r_i}h \sqsubseteq k_i$ for $i = 1, 2$, and 
        by the side condition $k_1 \oplus k_2 \sqsubseteq k$, hence
        \begin{equation*}
            \sem{\r_1 + \r_2}h \sqsubseteq \sem{\r_1} h \oplus \sem{\r_2} h
            \sqsubseteq k_1 \oplus k_2 \sqsubseteq k
        \end{equation*}
        where we used Lemma~\ref{le:kleene-axioms}~\eqref{le:kleene-axioms:sum}. 

  \bigskip

  \noindent
  (rec) $\inference{\vdash \as{h} \ \r \ \as{k'} & \vdash \as{k'} \ \r^* \ \as{l} & h\oplus l\sqsubseteq k}{\vdash \as{h} \ \r^* \ \as{k}}$:
        \bigskip

        \noindent
       applying the inductive hypothesis to the premise of the rule we obtain:
  \begin{center}
    $\sem{\r}h \sqsubseteq k'$ and $\sem{\r^*}k' = \add{\bigoplus_{i \in \N}} \sem{\r^i}k' \sqsubseteq l$
  \end{center}
 Therefore
 \begin{align*}
   \sem{\r^*}h 
   & = \add{\bigoplus_{i \geq 0}} \sem{\r^i}h\\
   & = \bigsqcup_{b \in \ebase{h}} \bigoplus_{i \geq 0} \bsem{\r^i}b\\
   & = \bigsqcup_{b \in \ebase{h}} (b \oplus \bigoplus_{i \geq 1} \bsem{\r^i}b)%
   & %
   \mbox{[associativity of $\oplus$]}\\
   & \sqsubseteq \bigsqcup_{b \in \ebase{h}} (h \oplus \bigoplus_{i \geq 1} \bsem{\r^i}b)%
   & %
   \mbox{[$b \sqsubseteq h$ and monotonicity of $\oplus$]}\\
   & \sqsubseteq h \oplus \bigsqcup_{b \in \ebase{h}} \bigoplus_{i \geq 1} \bsem{\r^i}b %
   & %
   \mbox{[monotonicity of $\oplus$ and properties of lub]}\\
   & = h \oplus \bigsqcup_{b \in \ebase{h}} \bigoplus_{i \geq 0} \add{\bsem{\r^i}}(\bsem{\r}b)\\
   & = h \oplus \bigsqcup_{b \in \ebase{h}} \bigoplus_{i \geq 0} \bigsqcup_{b' \in \ebase{\bsem{\r}b}}\bsem{\r^i}b'\\
   & =  h \oplus \bigsqcup_{b \in \ebase{h}} \bigsqcup_{b' \in \ebase{\bsem{\r}b}} \bigoplus_{i \geq 0} \bsem{\r^i}b' %
   & %
   \mbox{[by the $\oplus$-$\sqcup$ distributivity law \eqref{eq:distr}]}\\
   & = h \oplus  \bigsqcup_{b \in \ebase{h}} \add{\bsem{\r^*}}(\bsem{\r}b) \\ 
   & \sqsubseteq h \oplus \add{\bsem{\r^*}} \Bigl (  \bigsqcup_{b \in \ebase{h}} \bsem{\r}b \Bigr ) \\ 
   & = h \oplus \sem{\r^*}(\sem{\r}h)\\
   & \sqsubseteq  h \oplus  \sem{\r^*}k' %
   & %
   \mbox{[monotonicity of $\oplus$]}\\
   & \sqsubseteq h \oplus  l %
   & %
   \mbox{[ind. hyp. and monotonicity wrt $\oplus$]}\\
   & \sqsubseteq k %
   & %
   \mbox{[side condition $h \oplus l \sqsubseteq k$]}\\
\end{align*}

  \bigskip

  \noindent
  (inv) 
  $\inference{\vdash \as{h} \ \r \ {h} %
  & \bigoplus_{\N} h \sqsubseteq k}{\vdash \as{h} \ \r^* \ \as{k}}$:
        \bigskip

        \noindent
        by inductive hypothesis $\sem{\r}h \sqsubseteq h$ %
        with side condition $\bigoplus_\N h \sqsubseteq k$.
Then for all $i \in \N$ it holds $\sem{\r^i}h \sqsubseteq h$. In fact, by induction on $i$:
\begin{itemize}
    \item $\sem{\r^0}h = h \sqsubseteq h$ by reflexivity of $\sqsubseteq$,
    \item $\sem{\r^{i+1}}h = \sem{\r}(\sem{\r^i}h) \sqsubseteq \sem{\r}h %
    \sqsubseteq h$ and we conclude by transitivity of $\sqsubseteq$.
  \end{itemize}
 Therefore
  \begin{equation*}
    \sem{\r^*}h = \add{\bigoplus_{i \in \N}}(\sem{\r^i} h) \sqsubseteq \bigoplus_{i \in \N}(\sem{\r^i} h) \sqsubseteq \bigoplus_{i \in \N} h \sqsubseteq k
  \end{equation*}
   where we used Lemma~\ref{le:kleene-axioms}~\eqref{le:kleene-axioms:star}. 

  \bigskip
  \noindent
  (iter) $\inference{\forall i \in \N.\: \vdash \as{h_i} \ \r \ \as{h_{i+1}} & \bigoplus_i h_i \sqsubseteq k}{\vdash \as{h_0} \ \r^* \ \as{k}}$:
        \bigskip

        \noindent
         by inductive hypothesis we have $\sem{\r}h_{i} \sqsubseteq h_{i+1}$ for all $i \in \N$.
        For all $ i \geq 1$ it holds that
        $\sem{\r^i} h_{0} \sqsubseteq \sem{\r}h_{i-1}$. In fact, by induction on $i$:
        \begin{itemize}
            \item $\sem{\r^1} h_{0} = \sem{\r}h_{0}$, hence $\sem{\r} h_{0} \sqsubseteq \sem{\r}h_{0}$ by reflexivity of $\sqsubseteq$,
            \item $\sem{\r^{i+1}} h_{0} = \sem{\r}(\sem{\r^i} h_{0}) \sqsubseteq \sem{\r}(\sem{\r}h_{i-1}) \sqsubseteq \sem{\r}h_{i}$
            by the two inductive hypotheses and the monotonicity of $\sem{\r}$.
        \end{itemize}
        Hence in particular by the inductive hypotheses we have that for all $i \geq 1$,
        $\sem{\r^i} h_{0} \sqsubseteq \sem{\r}h_{i-1} \sqsubseteq h_{i}$,
        hence by monotonicity of $\oplus$ we have that
        \begin{equation}
            \bigoplus_{i \geq 1} \sem{\r^i} h_0 \sqsubseteq \bigoplus_{i \geq 1} h_i
            \tag{$\dagger$}
            \label{dagger}
        \end{equation}
        and therefore
        \begin{align*}
            \sem{\r^*}h_{0} %
            & = \add{\bigoplus_{i \in \N}} \sem{\r^i} h_{0} \\
            & \sqsubseteq \bigoplus_{i \in \N} \sem{\r^i} h_{0} %
            & %
            [\mbox{Lemma~\ref{le:kleene-axioms}~\eqref{le:kleene-axioms:star}}]\\
            & = h_{0} \oplus \Bigl ( \bigoplus_{i \geq 1} \sem{\r^i}h_{0} \Bigr )  %
            & %
            [\mbox{by~\eqref{dagger}}]\\
            & \sqsubseteq h_{0} \oplus \Bigl (\bigoplus_{i \geq 1} h_{i} \Bigr ) \\
            & = \bigoplus_{i \in \N} h_{i} \sqsubseteq k %
            & %
            [\mbox{side condition}]
        \end{align*}

  \bigskip\bigskip

  \noindent
  (cons) $\inference{ h \sqsubseteq h' & \vdash \as{h'} \ \r \ \as{k'} & k' \sqsubseteq k}
          {\vdash \as{h} \ \r \ \as{k}}$:

        \bigskip

        \noindent
        by inductive hypothesis we have $\sem{\r}h' \sqsubseteq k'$ and by side condition
        $h \sqsubseteq h'$, which implies $\sem{\r}h \sqsubseteq \sem{\r}h'$, and by side condition $k' \sqsubseteq k$,
        then
        \begin{equation*}
          \sem{\r}h \sqsubseteq \sem{\r}h' \sqsubseteq k' \sqsubseteq k\, .
        \end{equation*}

  \bigskip\bigskip

  \noindent
  (join) $\inference{\set{h_i}_{i \in \N} \text{dense} & h \sqsubseteq \displaystyle\bigsqcup_{i \in \N} h_i & \forall i \in \N. \vdash \as{h_i} \ \r \ \as{k_i}& \displaystyle\bigsqcup_{i \in \N} k_i \sqsubseteq k}{\vdash \as{h}
          \ \r \ \as{k}}$:
        \bigskip

        \noindent
         by inductive hypothesis we have $\sem{\r}(h_i) \sqsubseteq k_i$ for all $i \in I$, hence
        \begin{equation}
            \bigsqcup_{i \in \N} \sem{\r}h_i \sqsubseteq \bigsqcup_{i \in \N} k_i
            \tag{$\clubsuit$}
            \label{club}
         \end{equation}
         then
         \begin{align*}
            \sem{\r}h %
            & \sqsubseteq \sem{\r}\Bigl ( \bigsqcup_{i \in \N} h_i \Bigr ) %
            & %
            [\mbox{side condition and monotonicity of } \sem{\r}] \\
            & = \bigsqcup_{i \in \N} \sem{\r}(h_i) %
            & %
            [\mbox{additivity of } \sem{\r} \mbox{ on } \set{h_i \mid i \in \N} \mbox{dense} ] \\
            & \sqsubseteq \bigsqcup_{i \in \N} k_i %
            & %
            [\mbox{by~\eqref{club}}] \\
            & \sqsubseteq k %
            & %
            [\mbox{side condition}]
         \end{align*}

  \bigskip

  \noindent
  (meet) $\inference{h \sqsubseteq \bigsqcap_{i \in \N} h_i & \forall i \in \N.\: \vdash \as{h_i} \ \r \ \as{k_i} & \bigsqcap_{i \in \N} k_i \sqsubseteq k}{\vdash \as{h}
          \ \r \ \as{k}}$:
        \bigskip

        \noindent
         by inductive hypothesis we have $\sem{\r}h_i \sqsubseteq k_i$ for all $i \in I$, hence
         \begin{equation}
         \bigsqcap_{i \in \N} \sem{\r}h_i \sqsubseteq \bigsqcap_{i \in \N} k_i
         \tag{$\diamondsuit$}
            \label{diamond}
         \end{equation}
         and for all $i \in \N$,
         $\sem{\r}\left ( \bigsqcap_{i \in \N} h_i \right ) \sqsubseteq \sem{\r_i}h_i $ by monotonicity of $\sem{\r}$,
         hence \begin{equation}
            \sem{\r}\Bigl ( \bigsqcap_{i \in \N} h_i \Bigr ) \sqsubseteq \bigsqcap_{i \in \N} \sem{\r} h_i
            \tag{$\heartsuit$}
            \label{heart}
         \end{equation}
         then \begin{align*}
            \sem{\r}h %
            & \sqsubseteq \sem{\r}\Bigl ( \bigsqcap_{i \in \N} h_i \Bigr ) %
            & %
            [\mbox{monotonicity of } \sem{\r}] \\
            & \sqsubseteq \bigsqcap_{i \in \N} \sem{\r} (h_i) %
            & %
            [(\mbox{by \ref{heart}})] \\
            & \sqsubseteq \bigsqcap_{i \in \N} k_i  %
            & %
            [\mbox{by~(\ref{diamond})}] \\
            & \sqsubseteq k %
            & %
            [\mbox{side condition}] \tag*{\qedhere}
         \end{align*}
\end{proof}

The rule system of $\SHA$ is (relatively) complete when %
the language $\L$ is sufficiently rich. %
The completeness proof relies on the key 
observation that strongest postconditions are always 
derivable in $\vdash_{\SHA}$.

\begin{lemma}[Derivability of strongest post]
  \label{prop:rel-compl-direct} Let $\L= D$. 
  Let $h \in \L$ and $\r\in \Reg$. Then,
 $\vdash \as{h} \ \r \ \as{\sem{\r}h}$
can be derived in {\rm $\SHA$}.
\end{lemma}

\begin{proof}
   We prove the result holds on every basis element $b \in \base[D]$.  The result on $\sem{\r}$ then follows by application of the (join) rule: let $h \in D$ 
   and assume $\vdash \as{b} \ \r \ \as{\bsem{\r}b}$ is derivable for all $b \in \ebase{h}$, then
   
   \begin{equation*}
    \inference[(join)]{\ebase{h} \text{ dense} %
    & \forall b \in \ebase{h} .\: \vdash \as{b} \ \r \ \as{\bsem{\r}b}& \bigsqcup_{b \in \ebase{h}} \bsem{\r}b = \sem{\r}h}{\vdash \as{h}
          \ \r \ \as{\sem{\r}h}}
      \end{equation*}
     
   Let $b \in \base[D]$. We prove that $\vdash \as{b} \ \r \ \as{\sem{\r}b}$ by structural induction on the command syntax. 
 
   \begin{itemize}
    \item If $\r = \e$ then just apply $\inference[(basic)]{\bsem{\e}{b} = k}{\vdash \as{b} \ \e \ \as{k}}$. 
    \bigskip
     \item If $\r = \r_1 + \r_2$, by definition $\bsem{\r_1 + \r_2}b=\bsem{\r_1} b \oplus \bsem{\r_2} b$.
       We use the inductive hypothesis on $b$ to know that $\vdash \as{b} \ \r_i \ \as{\bsem{\r_i} b}$ are derivable for $i = 1, 2$. Then
       \begin{equation*}
           \inference[(choice)]{\vdash \as{b} \ \r_1 \ \as{\bsem{\r_1} b} & \vdash \as{b} \ \r_2 \ \as{\bsem{\r_2} b} & \bsem{\r_1}b \oplus \bsem{\r_2}b = k_b}{\vdash \as{b} \ \r_1 + \r_2 \ \as{k_b}}
      \end{equation*}
     \item If $\r = \r_1; \r_2$, by definition $\bsem{\r_1; \r_2}b= \add{\bsem{\r_2}}(\sem{\r_1}b)$, 
     and by inductive hypothesis we have $\vdash \as{b} \ \r_1 \ \as{\bsem{\r_1}b}$
     and that all  $\vdash \as{b'} \ \r_2 \ \as{\bsem{\r_2}b'}$ for $b' \in \ebase{\bsem{\r}b}$ are derivable.
     Then we can apply the rule (join), since $\ebase{\bsem{\r_1}b}$ is dense, to get \begin{equation*}
      \inference[(join)]{\forall b' \in \ebase{\bsem{\r_1}b} .\: \vdash 
        \as{b'} \ \r_2 \ \as{\bsem{\r_2}b' } & \bigsqcup_{b' \in \ebase{\bsem{\r_1}b}} \bsem{\r_2}b' = \add{\bsem{\r_2}}(\bsem{\r_1}b)}{\vdash \as{\bsem{\r_1}b} \ \r_2 \ \as{\add{\bsem{\r_2}}(\bsem{\r_1}b)}}
     \end{equation*}
     then we can derive
     \begin{equation*}
         \inference[(seq)]{\vdash \as{b} \ \r_1 \ \as{\bsem{\r_1}b} & \vdash \as{\sem{\r_1}b} \ \r_2 \ \as{\add{\bsem{\r_2}}(\bsem{\r_1}b)}}
       {\vdash \as{b} \ \r_1; \r_2 \ \as{\add{\bsem{\r_2}}(\bsem{\r_1}b)}}
     \end{equation*}
     \item If $\r = \r_1^*$, %
     by definition
     $\bsem{\r_1^*}b = \bigoplus_{i \geq 0} \bsem{\r_1^i} b$. 
    Define $k_0 = b$ and $k_{i+1} = \add{\bsem{\r_1}}k_i$ for all $i \geq 1$.
    Then by inductive hypothesis we have that for all $i \geq 0$ each $\vdash \as{b'} \ \r_1 \as{\bsem{\r_1}b'}$ is derivable for $b' \in \ebase{k_i}$, 
    hence again by applying the rule (join) for $b' \in \ebase{k_i}$, since $k_{i+1} = \add{\bsem{\r_1}}k_i = \bigsqcup_{b' \in \ebase{k_i}} \bsem{\r_1}b'$, 
    we have that 
    $\vdash \as{k_i} \ \r_1 \ \as{k_{i+1}}$ is derivable for all $i \geq 0$. 
    Now we show that $k_i = \bsem{\r_1^i} b$ for all $i \geq 0$ by induction on $i$.
    \begin{itemize}
        \item $k_0 = b = \bsem{\r_1^0}b$.
         \item $k_{i+1} = \add{\bsem{\r_1}}k_i = \add{\bsem{\r_1}}(\bsem{\r_1^i} b) = \bsem{\r_1^{i+1}} b$ by the inductive hypothesis.
    \end{itemize}
    Then we can derive
    \begin{equation*}
             \inference[(iter)]{
                \forall i \in \N &  \vdash \as{k_i} \ \r_1 \ \as{k_{i+1}} & \bigoplus_i k_i = k_b 
              }{
                \vdash \as{k_0} \ \r_1^* \ \as{k_b}
                }
            \end{equation*}
         which is, as $k_0 = b$ and $\bigoplus_i k_i = \bigoplus_i \bsem{\r_1^i} b$,
         \begin{equation*}
             \inference[(iter)]{
                \forall i \in \N & \vdash \as{k_i} \ \r_1 \ \as{k_{i+1}} & \bigoplus_i \bsem{\r_1^i} b = k_b
              }{
                \vdash \as{b} \ \r_1^* \ \as{k_b}
                }
       \end{equation*}
    \end{itemize}
   \hfill
 \end{proof}

\begin{theoremrep}[Relative completeness]
  \label{prop:weak-rel-compl}
  Let $\L= D$. 
  Let $h,k \in \L$ and $\r\in \Reg$. Then %
 $\sem{\r}h \sqsubseteq k  
 \ \Rightarrow %
 \vdash \as{h} \  \r \ \as{k}$. %
\end{theoremrep}

\begin{proof}
  By Proposition~\ref{prop:rel-compl-direct} we can derive $\vdash \as{h} \ \r \ \as{\sem{\r}h}$, 
  and conclude using rule (cons).
  \hfill
\end{proof}

The completeness result extends naturally to any logical language $\L \subseteq D$ that is closed under the semantics of elementary commands $\bsem{\e}$, the 
monoid operation $\oplus$, and disjunction $\sqcup$.
Additionally, the language must be sufficiently expressive for the basis $\base[D]$: specifically, there must exist a subset of assertions $\mathcal{B} \subseteq \L \cap \base[D]$ such that $\down{\mathcal{B}} = \cup_{h\in \L} \ebase[D]{h}$.
Clearly, these conditions are trivially satisfied when $\L = D$.%

It turns out that relative 
completeness relies only on the rules (basic), (seq), (choice), (iter), (cons), and (join), 
whereas the rules (rec), (inv), and (meet) are auxiliary. %
The next example demonstrates that without the rule (iter) the proof system is no longer relatively complete (but still sound). 
\begin{example}
  \label{ex:incompleteness}
   Consider the %
   program: 
    $\r \triangleq ((x=1?);(x:=x-2)) + (x:=x+2)$
  interpreted 
  over the irreducible interval monoid (cf.~\S~\ref{ex:monoid-lattice-constructions-irr-lat}), where $\oplus = \sqcup^\Int$. 
  We have $\sem{\r^*}[0,0] = [0,+\infty)$ and the Hoare 
  triple $\as{[0,0]} \ \r^* \ \as{[0,+\infty)}$
  can be derived using the
  infinitary rule (iter):
  \begin{equation*}
    \inference[(iter)]{
      \forall i \in \N &
      \vdash \as{[2i,2i]} \ \r \ \as{2(i+1), 2(i+1)}
    }
    {
      \vdash \as{[0,0]} \ \r^* \ \as{[0,+\infty) = \bigoplus_{i\in\N} [2i,2i]}
    }
  \end{equation*}
  
  The %
  triple can also be %
  derived via the %
  finitary rules (inv) and (rec), with invariant %
  $[2,+\infty)$: %
\noindent
\[ %
\inference[(rec)]{
    \as{[0,0]}\ \r\ \as{[2,2]}
    &
    \inference[(cons)]{
    \inference[(inv)]
      {
        \as{[2,+\infty)}\ \r\ \as{[4,+\infty)}
      }
      {
        \as{[2,+\infty)}\ \r^*\ \as{[2,+\infty)}
      }}{\as{[2,2]}\ \r^*\ \as{[2,+\infty)}}
  }
  {
    \as{[0,0]}\ \r^*\ \as{[0,0] \oplus [2,+\infty) = [0,+\infty)}
  }
  \]

\medskip

\noindent
  However, consider the modified program %
    $\ti \triangleq ((x~\mathsf{mod}~2 = 1?); (x:=x-2)) + (x:=x+2)\, $.
  It still holds that
  $\sem{\ti^*}[0,0] = [0,+\infty)$, and the triple $\as{[0,0]} \ \ti^* \ \as{[0,+\infty)}$ is
  derivable via the infinitary rule (iter) as above. 
  Instead, the finitary rules only allow us to derive the trivial triple
  $\as{[0,0]} \ \ti^* \ \as{(-\infty,+\infty)}$.
  To see why, 
  suppose we use (rec) to advance to an interval $[2k,2k]\sqsubseteq [i,j]$ and 
  $\sem{\ti}[i,j] \sqsubseteq [i,j]$. Since $2k \in [i,j]$ and the right branch 
  executes $x:=x+2$, it must be %
  $2k+2 \in [i,j]$, and %
  since intervals are convex, 
  also $2k+1 \in [i, j]$.
  The presence of $2k+1$ also %
  triggers the left branch guard
  $(x~\mathsf{mod}~2 = 1?)$, executing 
  $x:=x-2$, and 
  producing $2k-1$. 
  Hence, for $[i,j]$ to be invariant, also $2k-1 \in [i, j]$. %
  By induction, the interval must extend indefinitely in both directions. 
  Thus, the only valid invariant is $[i,j] = (-\infty,+\infty)$.
  \hfill$\lozenge$
\end{example}

\subsection{Some Key \texorpdfstring{$\SHA$}{APPL} Instances}\label{ss:instances-concrete}
\begin{instance}[Hoare Logic]
\label{ex:hoare-logic}

As discussed in~\S~\ref{ss:collecting-semantics}, the collecting (Hoare) semantics of commands can be recovered in two distinct ways: 
by instantiating the interpretation monoid with either the 
irreducible monoid $(\pow{\Sigma}, \subseteq, \cup, \liftIr{\pow{\Sigma}})$ (cf.~\S~\ref{ex:monoid-lattice-constructions-irr-pow}), or the simple monoid $(\pow{\Sigma}, \subseteq, \cup, \pow{\Sigma})$ (cf.~\S~\ref{ex:monoid-lattice-constructions-simple}).
Accordingly, Hoare logic admits two formulations, %
which induce two proof systems, denoted $\vdash_{\text{irr}}$ and $\vdash_{\text{sim}}$.
These systems differ fundamentally in their treatment of the basis.
In the simple monoid, the basis is the entire domain $\pow{\Sigma}$, %
so any $h \in \pow{\Sigma}$ is covered by the singleton dense set $\set{h}$. %
In the context of $\vdash_{\text{sim}}$, this renders the (join) rule redundant: the rule infers a property for $h$ based on a dense cover $\{h_i\}$, but since we can choose the trivial cover $\{h\}$, the premise requires proving the property for $h$ itself, thus 
reducing the rule to a tautology or subsumed via (cons).

Nevertheless, since $\vdash_{\text{irr}}$ and $\vdash_{\text{sim}}$ are sound and complete with respect to the same semantics, they are equivalent. This implies the well-known fact that the (join) rule is not essential for the relative completeness of standard Hoare logic.
Interestingly, the redundancy of the (join) rule in Hoare logic is an instance of a more general result, which relates the necessity of the rule to the properties of the interpretation monoid.
\end{instance}

\begin{propositionrep}[Completeness without (join)]
\label{pr:nojoin}
  Given an interpretation monoid $(D, \sqsubseteq, \oplus, \base[D])$, if $\base[D]$ is irreducible, the full semantics $\sem{\cdot}$ is additive and $\oplus$ preserves additivity, then
  the rules {\rm (basic)}, {\rm (seq)}, {\rm (choice)}, {\rm (iter)}, and {\rm (cons)} define a relatively complete proof system.
\end{propositionrep}
\begin{proof}
   We show that we can deduce the post, as in
   Proposition~\ref{prop:rel-compl-direct},
  and conclude as in Theorem~\ref{prop:weak-rel-compl}. Using
   Proposition~\ref{pr:inductive} it is an immediate proof by induction on the command syntax. Let $h \in D$.
\begin{itemize}
    \item If $\r = \e$ then just apply $\inference[(basic)]{\sem{\e}{h} = k}{\vdash \as{h} \ \e \ \as{k}}$. 
    \smallskip
     \item If $\r = \r_1 + \r_2$, by Proposition~\ref{pr:inductive}, $\sem{\r_1 + \r_2}d=\sem{\r_1} h \oplus \bsem{\r_2} h$.
    By inductive hypothesis $\vdash \as{h} \ \r_i \ \as{\sem{\r_i} h}$ are derivable for $i = 1, 2$. Then
       \begin{equation*}
        \inference[(choice)]{\vdash \as{h} \ \r_1 \ \as{\sem{\r_1} h} & \vdash \as{h} \ \r_2 \ \as{\sem{\r_2} h} & \sem{\r_1}d \oplus \sem{\r_2}d = k}{\vdash \as{h} \ \r_1 + \r_2 \ \as{k}}
      \end{equation*}
     \item If $\r = \r_1; \r_2$, by Proposition~\ref{pr:inductive}, $\sem{\r_1; \r_2}h= \sem{\r_2}(\sem{\r_1}h)$, 
     and by inductive hypothesis we have $\vdash \as{h} \ \r_1 \ \as{\sem{\r_1}d}$
     and that all  $\vdash \as{\sem{\r_1}d} \ \r_2 \ \as{\sem{\r_2}(\sem{\r_1}d)}$ are derivable.
     Then we can derive
     \begin{equation*}
         \inference[(seq)]{\vdash \as{h} \ \r_1 \ \as{\sem{\r_1}d} & \vdash \as{\sem{\r_1}d} \ \r_2 \ \as{\sem{\r_2}(\sem{\r_1}d)}}
       {\vdash \as{h} \ \r_1; \r_2 \ \as{\sem{\r_2}(\sem{\r_1}d)}}
     \end{equation*}
   \item If $\r = \r_1^*$, %
   by Proposition~\ref{pr:inductive},
     $\sem{\r_1^*}d = (\bigoplus_{i \geq 0} \sem{\r_1^i} ) h$.  Define
     $h_0 = h$ and $h_{i+1} = \sem{\r_1}d_i$ for all $i \geq 1$.  Then
     by inductive hypothesis we have that for all $i \geq 0$ the triple
     $\vdash \as{h_i} \ \r_1 \as{h_{i+1}}$ is derivable.  Now an
     inductive argument using Proposition~\ref{pr:inductive} shows that $h_i = \sem{\r_1}^i h_0 = \sem{\r_1^i} h_0$ for all
     $i \geq 0$.
    Then we can derive
    \begin{equation*}
             \inference[(iter)]{\forall i \in \N &  \vdash \as{h_i} \ \r_1 \ \as{h_{i+1}} & \bigoplus_i h_i = k }{\vdash \as{h_0} \ \r_1^* \ \as{k}}
            \end{equation*}
         which is what we wanted as $\bigoplus_i h_i = \bigoplus_i \sem{\r_1^i} h$.
    \end{itemize}
   \hfill
 \end{proof}
 
By Lemma~\ref{le:basic-prop-sem}~\eqref{le:basic-prop-sem:distrib}, if $(D,\sqsubseteq)$ is completely meet-distributive with an irreducible basis, and the semantics of elementary commands is strict, then the full semantics is  additive.
If, additionally, $\oplus = \sqcup$ (which inherently preserves additivity), then Proposition~\ref{pr:nojoin} applies.
This directly covers Hoare logic over the irreducible powerset monoid discussed above.
\begin{instance}[Interval Hoare Logic]
\label{sec:iHl}

Following \S~\ref{ss:interval-semantics}, the irreducible interpretation monoid $(\Int, \sqsubseteq, \bigsqcup, \liftIr{\Int})$ and the simple interpretation monoid $(\Int, \sqsubseteq, \bigsqcup, \Int)$ induce two distinct interval semantics, denoted $\sem{\cdot}_{\text{irr}}$ and $\sem{\cdot}_{\text{sim}}$, respectively.
The former %
is more precise than the latter %
which coincides with the standard inductive interval semantics (cf.~\S~\ref{ss:inductive-semantics}).
Consequently, the $\SHA$ proof system $\vdash_{\text{irr}}$ induced by $\sem{\cdot}_{\text{irr}}$ is strictly stronger than %
$\vdash_{\text{sim}}$ induced by $\sem{\cdot}_{\text{sim}}$.
This difference in expressiveness is illustrated by the following example.

  \begin{figure*}[t]
    \hspace{-20pt}
      \resizebox{1.06\textwidth}{!}{
        \inference[(join)]{
          \inference[(seq)]{
            \vdash \as{[-1, 0]} \ (\mathit{x} \neq 0?) \ \as{[-1, -1]}
            &
            \vdash \as{[-1, -1]} \ (\mathit{x} = 0?) \ \as{\varnothing}
          }
          {
            \vdash \as{[-1, 0]} \ (\mathit{x} \neq 0?); (\mathit{x} = 0?) \ \as{\varnothing}
          }
          & \inference[(seq)]
          {
            \vdash \as{[0, 1]} \ (\mathit{x} \neq 0?) \ \as{[1, 1]}
            &
            \vdash \as{[1, 1]} \ (\mathit{x} = 0?) \ \as{\varnothing}}
          {
            \vdash \as{[0, 1]} \ (\mathit{x} \neq 0?); (\mathit{x} = 0?) \ \as{\varnothing}
          }
        }
        {
        \vdash \as{[-1, 1]} \ (\mathit{x} \neq 0?); (\mathit{x} = 0?) \ \as{\varnothing}}     
      }
    \caption{Derivation for Example~\ref{ex:intervals-logic}. Side condition for the (join) rule: $[-1,1] \sqsubseteq [-1,0] \sqcup [0,1]$.}
    \label{fig:intervals}
  \end{figure*}
\end{instance}

\begin{example}\label{ex:intervals-logic}
  Consider again the program $\mathsf{p}$ from Example~\ref{ex:intervals-hole}.
  As illustrated in Figure~\ref{fig:intervals}, in the irreducible interval monoid we can derive the precise triple
    $\vdash_{\text{irr}} \as{[-1, 1]} \ \mathsf{p} \ \as{\varnothing}$.
  This derivation is possible by applying the (join) rule. The interval $[-1, 1]$ is split using the cover $\{[-1, 0], [0, 1]\}$, which is a valid dense subset in the irreducible monoid.
  \hfill$\lozenge$
\end{example}

\begin{instance}[Incorrectness Logic]
\label{ex:incorrectness-logic}

As discussed in \S~\ref{ss:incorrectness-semantics}, incorrectness logic can be recovered by relying on the interpretation monoid $(\pow{\Sigma}, \supseteq, \cup, \pow{\Sigma})$.
In this setting, the full semantics coincides with the basis one, i.e., $\sem{\r} = \bsem{\r}$. Consequently, the additive extension is trivial, rendering the 
(join) rule redundant in the proof system.

While relative completeness already guarantees that the $\SHA$ proof system matches the expressive power of incorrectness logic, one can show that most
rules from~\cite{OHearn20} (referred to here using \textsc{Small Caps}) are can be recovered in $\SHA$. First, a crucial general observation is that for any program $\r$ and precondition $h$, the trivial triple $\vdash \as{h} \ \r \ \as{\top}$ can always be derived in $\SHA$ by following the inductive structure of $\r$ using the rules (basic), (seq), (choice), and (iter).
In the reversed lattice $(\pow{\Sigma}, \supseteq)$, the top element $\top$ is the empty set $\varnothing$. Thus, this structural derivation corresponds exactly to the \textsc{Empty-Under-Approximate} rule---written in incorrectness logic as $[h] \ \r \ [\varnothing]$---which states that the vacuous under-approximation is always valid. Using this property, we can recover the other rules

\textsc{Choice}: This rule is recovered by applying the $\SHA$ rule 
(choice) and using the trivial approximation for the unchosen branch. For example, to derive $[h] \ \r_1 + \r_2 \ [k]$ from $[h] \ \r_1 \ [k]$, we use the premise $\vdash \as{h} \ \r_1 \as{k}$ alongside the trivial premise $\vdash \as{h} \ \r_2 \as{\top}$. Since the monoid operation is $\cup$ and $\top = \varnothing$, the conclusion yields $k \cup \varnothing = k$.
    
\textsc{Disjunction}: This is a direct instance of the (meet) rule. In the reversed lattice, the meet operation $\sqcap$ corresponds to set union $\cup$. Thus, applying (meet) to premises with postconditions $k_1$ and $k_2$ yields the postcondition $k_1 \cup k_2$.

\textsc{Iterate-Zero} corresponds to $\vdash \as{h} \ \r^* \ \as{h}$, %
derived by applying (rec) 
to the trivial premises $\vdash \as{h} \ \r \ \as{\top}$ and $\vdash \as{\top} \ \r^* \ \as{\top}$: since 
$\top = \varnothing$, 
in the conclusion $h \oplus \top = h \cup \varnothing = h$.

 \textsc{Iterate-Non-Zero}: This rule roughly coincides with (rec). The minor distinction is that \textsc{Iterate-Non-Zero} typically uses the expansion $\r^*;\r$ (looping then executing the body) rather than $\r;\r^*$, often to facilitate reasoning about errors within the loop. 
 As we do not consider error states here, this distinction is irrelevant.

\textsc{Backwards-Variant} coincides with the infinitary rule (iter).

  \begin{figure*}[t]
      \resizebox{\textwidth}{!}{
    $\inference[(iter)]{
            \inference[(choice)]{
              \dots 
            }{\vdash \as{h_0} \ \r_1 + \r_2 \ \as{h_1}} 
            & 
            \inference[(choice)]{
              \dots
            }{\vdash \as{h_1} \ \r_1 + \r_2 \ \as{h_2}} 
            & \set{0, 2, 1000} \subseteq h_0 \cup  h_1 \cup h_2
          }{\vdash \as{h_0} \ (\r_1 + \r_2)^* \ \as{\set{0, 2, 1000}}}$
      }
    \caption{Derivation for Example~\ref{ex:incorr-LCL-example}. %
    $h_0 \triangleq \set{0, 999}$, $h_1 \triangleq \set{1, 998, 1000}$, $h_2 \triangleq \set{0, 2, 997, 999}$.} %
    \label{fig:incorr-LCL-example}
  \end{figure*} 
\end{instance}

\begin{example}
\label{ex:incorr-LCL-example}
  Consider a variation of~\cite[Example 5.2]{BruniGGR23}, by taking the program 
  $\mathsf{u} \triangleq (\r_1 + \r_2)^*$ where %
    $\r_1 \triangleq (\mathit{x} > 0?); \mathsf{dec} \ x$ and
    $\r_2 \triangleq (\mathit{x} < 1000?); \mathsf{inc} \ x$. 
  We want to show that the safety specification $(\mathit{x} \neq 2)$ is violated. We do this by deriving the incorrectness triple
    ${\vdash \as{\set{0, 999}} \ \mathsf{u} \ \as{\set{0, 2, 1000}}}$
  in the $\SHA$ proof system instantiated with the simple monoid for incorrectness logic (\S~\ref{ss:incorrectness-semantics}).
  The %
  start of the derivation is shown in Figure~\ref{fig:incorr-LCL-example}, {%
  with $h_1 = \sem{\r_1}h_0 \cup \sem{\r_2}h_0 = \set{998} \cup \set{1, 1000}$, $h_2 = \sem{\r_1}h_1 \cup \sem{\r_2}h_1 = \set{0, 997, 999} \cup \set{2, 999}$.}
  \hfill$\lozenge$
\end{example}

\begin{instance}[Hyper Hoare Logic]
\label{ex:hyperhoare-logic}

 Consider the hyper monoid from~\S~\ref{ex:monoid-lattice-constructions-hyper} instantiated to %
 $(C, \leq) = (\pow{\Sigma}, \subseteq)$. This yields the interpretation monoid $(\pow{\pow{\Sigma}}, \subseteq, \oplus, \liftIr{\pow{\pow{\Sigma}}})$ for collecting hypersemantics  
(see \S~\ref{ss:collecting-hypersemantics}). 
Recall that in this setting, the operation $\oplus$ is defined for any family $\{h_i\}_{i\in I}$ of hyperproperties (subsets of $\pow{\Sigma}$) as %
  $\bigoplus_{i \in I} h_i \triangleq \set{ \cup_{i \in I} X_i \mid \forall
    i \in I.\, X_i \in h_i }$. %
Instantiating $\SHA$ with this interpretation monoid yields a proof system for hyperproperties %
equivalent to the Hyper Hoare Logic (HHL) of Dardinier and M\"{u}ller~\cite{DM:HHL}.
As an illustration, the following example reproduces, within $\SHA$, 
the HHL derivation 
from~\cite[Example 1]{DM:HHL}. 

  \begin{figure*}[t]
    \hspace{-20pt}
      \resizebox{1.04\textwidth}{!}{
        \inference[(join)]{
        \inference[(choice)] {\vdash \as{\{P_0\}} \ \1 \ \as{\{P_0\}} & \vdash \as{\{P_0\}} \ x := x + 1 \ \as{\{P_1\}}}{\vdash \as{\{P_0\}} \ \1 + (x := x + 1) \ \as{\{P_0 \oplus P_1\}}}
        & \inference[(choice)]{\vdash \as{\{P_2\}} \ \1 \ \as{\{P_2\}} 
        & \vdash \as{\{P_2\}} \ x := x + 1 \ \as{\{P_3\}}}{\vdash \as{\{P_2\}} \ \1 + (x := x + 1) \ \as{\{P_2 \oplus P_3\}}}
        }{\vdash 
        \as{\set{P_0, P_2}} \ \1 + (x := x + 1) \ \as{\set{P_0 \oplus P_1, P_2 \oplus P_3}}
        }
      }
    \caption{Derivation for Example~\ref{ex:hyper-muller}. Side conditions for the (join) rule: $\set{\set{P_0}, \set{P_2}}$ is dense, $\{P_0 \oplus P_1\} \cup \{P_2 \oplus P_3\} \subseteq \set{P_0 \oplus P_1, P_2 \oplus P_3}$.}
    \label{fig:hyper-muller}
  \end{figure*} 
\end{instance}

\begin{example}\label{ex:hyper-muller}
  As a motivating example, Dardinier and M\"{u}ller~\cite{DM:HHL} consider the program $\mathsf{s} \triangleq \1 + (x := x + 1)$.
  Following their notation, for a value $v \in \mathbb{Z}$, let $P_v \triangleq \{x/v\} \in \pow{\Sigma}$ denote the singleton state where $x$ has value $v$.
  They demonstrate that the triple
    $\as{\{P_0, P_2\}} \, \mathsf{s} \, \as{\{P_0 \otimes P_1, P_2 \otimes P_3\}}$
  is derivable in HHL, critically relying on their infinitary rule (Exist).
  Given their definition of $\otimes$ (cf.~\cite[Definition~6]{DM:HHL}), this corresponds to:
  \begin{equation*}\label{eq:hhl-triple}
    \as{\{\{x/0\}, \{x/2\}\}} \, \mathsf{s} \, \as{\{\{x/0,x/1\}, \{x/2,x/3\}\}} \,.
  \end{equation*}
  The derivation in Figure~\ref{fig:hyper-muller} demonstrates that this triple can be derived in $\SHA$ as %
   $ \vdash \as{\{P_0, P_2\}} \ \mathsf{s} \ \as{\{P_0 \oplus P_1, P_2 \oplus P_3\}}$, %
where $\{P_0\oplus P_1, P_2\oplus P_3\} =\{\{x/0,x/1\}, \{x/2,x/3\}\}$.
  This inference relies crucially on the (join) rule. 
  As $\{\{P_0\}, \{P_2\}\}$ is dense, the (join) rule allows us to split the analysis into computations starting from $\{P_0\}$ and $\{P_2\}$ separately.
  This mirrors the effect of the (Exist) rule in~\cite{DM:HHL}, which uses logical variables to separate traces.
  \hfill$\lozenge$
\end{example}

\begin{remark}
  The {\rm (join)} rule is pivotal for the completeness of logics for hyperproperties. 
  As observed above, it corresponds conceptually to the 
  (Exist) rule of {\rm HHL}~\cite{DM:HHL} and is directly related to the (Exist) rule used by Zilberstein~\cite{Zil:toplas}.
  In contrast, Cousot and Wang~\cite{CousotW25} do not include a primitive join/exist rule, but instead combine this functionality within specific rules for $\mathsf{if}$ (rule {\rm (69)}) 
  and $\mathsf{loop}$ (rule {\rm (70)}).
  
  We believe this work provides a principled explanation for why our {\rm (join)} rule is indispensable in hyperlogics, yet %
  redundant in ordinary Hoare logic (cf.~Proposition~\ref{pr:nojoin}). The %
  reason is algebraic: in hyperlogics, the composition operation $\oplus$ (Cartesian product-like union) does not preserve additivity, whereas in standard Hoare logic (simple union), it does.
  \hfill$\lozenge$
\end{remark}

\section{Sublogic as an Abstraction}\label{sec:abs-sublogic}

As discussed in \S~\ref{sec:logic}, when the logical language $\L \subsetneq D$ is restricted to a proper subset of the semantic lattice $D$, the proof system for $\SHA$-triples generally loses relative completeness. Nevertheless, the system remains sound: any derivable triple still provides a valid approximation of program behavior.
In this section, we build on the simple yet central observation that restricting the language $\L$ can itself be viewed as a form of abstraction, establishing a direct connection to abstract interpretation.
To make this reasoning effective, it is desirable to perform the entire verification process directly at the abstract level. This assumes the availability of computable abstract operations for the semantics of elementary commands, the monoidal operator $\oplus$, and the lattice join and meet.
This %
can be achieved systematically when $\L$ forms an \emph{abstract domain} in the standard sense of abstract interpretation.
Accordingly, we first detail the abstraction process induced by an abstract domain (\S~\ref{ss:abs-domain}). We then clarify the expressive power of the resulting abstract logic by identifying an abstract semantics with respect to which the logic is both sound and complete (\S~\ref{ss:abstract-semantics}).

\subsection{The Abstract Logic \texorpdfstring{$\SHA_A$}{SHA\_A}}
\label{ss:abs-domain}

Assume that the logical language corresponds to an abstract domain, specifically $\L = \gamma(A)$, where $D \galoiS{\alpha}{\gamma} A$ is a Galois insertion.
In this setup, $\alpha : D \to A$, the abstraction map, and $\gamma : A \to D$, the concretization map, are adjoint functions satisfying
$\alpha(h) \sqsubseteq_A a$ iff $h \sqsubseteq_D \gamma(a)$. 
Since this is an insertion, $\alpha$ is surjective (and additive), while $\gamma$ is injective (and co-additive).
Consequently, we have an order embedding: $a_1 \sqsubseteq_A a_2 \iff \gamma(a_1) \sqsubseteq_D \gamma(a_2)$.
Any abstract element $a \in A$ satisfying $h \sqsubseteq_D \gamma(a)$ constitutes a sound over-approximation of the concrete value $h$. The value $\alpha(h)$ represents the \emph{most precise} (i.e., smallest) such over-approximation available in $A$, namely,
$\alpha(h) = \bigsqcap_A \{a \in A \mid h \sqsubseteq_D \gamma(a) \}$. 
In this setting, any concrete operation $f : D^n \to D$ admits a canonical abstract counterpart known as its \emph{best correct approximation} (BCA): it is denoted $f^A : A^n \to A$ and defined by
$f^A(a_1, \ldots, a_n) \triangleq \alpha(f(\gamma(a_1), \ldots, \gamma(a_n)))$. 

This definition provides a systematic method to derive an abstract deduction system. One can show that  both the notion of a basis and the associated notion of density for an interpretation monoid are preserved through abstraction (see Lemma~\ref{le:base-transf}). Then, 
in the rules presented in Figure~\ref{fig:rules}, we replace the concrete order $\sqsubseteq_D$ with the abstract order $\sqsubseteq_A$, and we replace the concrete semantic operations (meet, join, and the monoid sum) with the respective best correct approximations in $A$.

\begin{lemma}[Basis and Density Preservation]
\label{le:base-transf}
  Let $(D, \sqsubseteq, \oplus, \base[D])$ be an interpretation monoid and let $D \galoiS{\alpha}{\gamma} A$ be a Galois insertion. 
  \begin{enumerate}[\rm (1)]
    \item \label{le:base-transf:base}
      The set $\alpha(\base[D]) = \{ \alpha(p) \mid p \in \base[D] \}$ constitutes a basis for $A$.
    \item \label{le:base-transf:density}
      If $X \subseteq A$ is dense in $A$ wrt the basis $\alpha(\base[D])$, then its concretization $\gamma(X) = \{\gamma(a) \mid a \in X\}$ is dense in $D$ wrt the basis $\base[D]$.
  \end{enumerate}
\end{lemma}

\begin{proof}
  \eqref{le:base-transf:base}
  Immediate consequence of the fact that $\alpha$ is additive and
  surjective. In fact, for all $a \in A$, by surjectivity of $\alpha$
  there is $h \in D$ such that $a = \alpha(h)$. Now observe
  $\alpha(\ebase{h}) \subseteq \ebase{a}$: in fact, if $p \in \ebase{h}$,
  by monotonicity, $\alpha(p) \sqsubseteq_A \alpha(h)$ and thus
  $\alpha(p) \in \ebase{a}$.
  
  Thus
  \begin{align*}
    a & = \alpha(h)\\
      & = \alpha\left (\sqcup\ebase{h}\right ) & \mbox{[$\base[D]$ is a base for $D$]}\\
      & = \sqcup\alpha(\ebase{h}) & \mbox{[$\alpha$ additive]}\\
      & \sqsubseteq_A {\bigsqcup}_A \ebase{a} & \mbox{[observation above]}\\
      & = a                    
  \end{align*}
  and thus we conclude $a =\bigsqcup_A \ebase{a}$.

  \bigskip
  \noindent
  \eqref{le:base-transf:density}
  Assume $X \subseteq A$ dense in $A$
  (wrt $\alpha(\base[D])$). Let $p \in \base[D]$. If
  $p \sqsubseteq \sqcup\gamma(X) \sqsubseteq \gamma(\bigsqcup_A
  X)$ then, by adjointness, $\alpha(p) \sqsubseteq_A \bigsqcup_A
  X$. Hence by density of $X$, there is $a \in X$ such that
  $\alpha(p) \sqsubseteq_A a$ and thus, again by adjointness,
  $p \sqsubseteq \gamma(a)$, as desired.
  \hfill
\end{proof}

\begin{figure*}[t]
    \begin{tabular}{c}\label{tab:hyperHoare-abstract}
          \vspace{15pt}
          $\inference[(basic)]{\sem{\e}^A{a} \sqsubseteq_A b}{\vdash_A \as{a} \ \e \ \as{b}}$
          \quad
          $\inference[(choice)]{\vdash_A \as{a} \ \r_1 \ \as{b_1} & \vdash_A \as{a} \ \r_2 \ \as{b_2} & b_1\oplus^A b_2 \sqsubseteq_A b}
          {\vdash_A \as{a} \ \r_1 + \r_2 \ \as{b}}$
          \\
            $\inference[(join)]{\set{a_i}_{i \in I} \text{dense} & h \sqsubseteq_A \bigsqcup^A_{i \in I} a_i & \forall i \in I.\: \vdash_A \as{a_i} \ \r \ \as{b_i}& \bigsqcup^A_{i \in \N} b_i \sqsubseteq_A b}{\vdash_A \as{a}
          \ \r \ \as{b}}$
    \end{tabular}
  \caption{The abstract program logic $\SHA_{A}$: some key rules, where $\sem{\e}^A a$ abbreviates
$\bigsqcup^A \{ \bsem{\e}^A b \in A \mid {b \in \alpha(\base[D])},\, \alpha(b) \sqsubseteq_A a\}$.
  }
\label{fig:rules-abstract}
\end{figure*}

\begin{definition}[Abstract program logic {\rm $\SHA_A$}]
  \label{de:abstract-rules}
  The
  \emph{abstract program logic} $\SHA_{A}$ induced by
  the abstraction  $D \galoiS{\alpha}{\gamma} A$ 
  is obtained from the proof system in Figure~\ref{fig:rules} by systematically 
  applying the following replacements: 
  \begin{enumerate}[{\rm (i)}]
  \item the order $\sqsubseteq$ with the abstract order $\sqsubseteq_A$;
  \item the operations $\sqcup$, $\sqcap$, $\oplus$, and 
  the semantic functions $\bsem{\e}$ of elementary commands with their
  respective BCAs in $A$, namely, $\sqcup^A$, $\sqcap^A$,
  $\oplus^A$, and
    $\bsem{\e}^A$;
  \item in the rule (join), density refers to the abstract basis
  $\alpha(\base[D])$. \hfill$\lozenge$
  \end{enumerate}
\end{definition}

Since the system $\SHA_A$ derives over-approximations of program properties, 
abstract operations over-approximate their concrete counterparts.
Furthermore, because 
the abstract semantics is monotone with respect to the abstract order,
the abstract program logic $\SHA_A$ remains sound.
Some representative instances of the inference 
rules of $\SHA_A$ obtained 
through Definition~\ref{de:abstract-rules} 
are shown in Figure~\ref{fig:rules-abstract}.

\begin{theoremrep}[$\SHA_A$ soundness]
  \label{prop:sound-abstract}
  Let $a, b \in A$ and $\r\in \Reg$. If
  $\vdash_A \as{a} \ \r \ \as{b}$ is derivable in {\rm $\SHA_A$},
  then $\sem{\r}{\gamma(a)} \sqsubseteq \gamma(b)$, and, consequently, 
  $\vdash \as{\gamma(a)} \ \r \ \as{\gamma(b)}$ holds. %
\end{theoremrep}

\begin{proof}
  Immediate consequence of the fact that for all abstract rules, once
  abstract properties are replaced by their concretisation, are
  instances of the corresponding concrete rules. For instance, the
  abstract rule (choice) requires in the premises
  $b_1 \oplus^A b_2 \sqsubseteq_A b$, i.e.,
  $\alpha(\gamma(b_1) \oplus \gamma(b_2)) \sqsubseteq_A b$. Thus, by
  adjointness
  \begin{center}
    $\gamma(b_1) \oplus \gamma(b_2) \sqsubseteq  \gamma(b)$.
  \end{center}
  which is a
  valid premise for (choice) in the concrete.

  In the case of rule (join), we exploit Lemma~\ref{le:base-transf}~\eqref{le:base-transf:density} for deducing that the premise
  $\set{a_i \mid i \in I}$ being dense in $A$ implies that
  $\set{\gamma(a_i) \mid i \in I}$ is dense in $D$.
  \hfill
\end{proof}

Concerning the converse implication, observe that since the abstract domain provides best (over-)approximations for all operations appearing in the logic, the abstract inference rules possess the same deductive power as the concrete ones when restricted to abstract properties.
The only potential exception is the (join) rule.
Indeed, the premise of the abstract (join) rule requires the family $\{a_i \mid i \in I\}$ to be dense in $A$. By Lemma~\ref{le:base-transf}~\eqref{le:base-transf:density}, 
density in $A$ implies density of $\{\gamma(a_i)\}$ in $D$, but the reverse is not generally true. Thus, the abstract requirement is stronger (less permissive) than the concrete one.
However, when density is reflected by the abstraction, the two rules become equivalent. We say that an abstraction $\alpha$ \emph{reflects density} if, for every $X \subseteq A$, 
density of %
$\gamma(X)$ in $D$ implies density of $X$ in $A$.
Under this condition, the abstract and concrete proof systems have equivalent deductive power on abstract properties.

\begin{propositionrep}[Equivalence of $\SHA$ \& $\SHA_A$]
  \label{prop:equivalence}
  Let $a, b \in A$ and $\r\in \Reg$. If $\alpha$ reflects density then 
  $\vdash_A \as{a} \ \r \ \as{b}$ is derivable in {\rm $\SHA_A$}
  iff $\vdash \as{\gamma(a)} \ \r \ \as{\gamma(b)}$ is derivable in {\rm $\SHA$}.
\end{propositionrep}

\begin{proof}
  Under the additional assumptions, all concrete rules, if used on (concretisations of) abstract elements can be seen as instances of the abstract ones. For instance, the
  concrete rule (choice) would be
  \begin{center}
      $\inference[(choice)]{\vdash \as{\gamma(a)} \ \r_1 \ \as{\gamma(b_1)} & \vdash_A \as{\gamma(a)} \ \r_2 \ \as{\gamma(b_2)} & \gamma(b_1)\oplus \gamma(b_2) \sqsubseteq \gamma(b)}
      {\vdash \as{\gamma(a)} \ \r_1 + \r_2 \ \as{\gamma(b)}}$
  \end{center}
  The request in
  $\gamma(b_1) \oplus \gamma(b_2) \sqsubseteq  \gamma(b)$, by adjointness, implies
  \begin{center}
    $b_1 \oplus^A b_2  = \alpha(\gamma(b_1) \oplus \gamma(b_2)) \sqsubseteq_A = b$.
  \end{center}
   which is a
   valid premise for (choice) in the abstract system.

   In the case of the rule (join)
  \begin{center}
   $ \inference[(join)]{\set{\gamma(a_i)}_{i \in I} \text{dense} & \gamma(a) \sqsubseteq \bigsqcup_{i \in I} \gamma(a_i) & \forall i \in I.\: \vdash \as{\gamma(a_i)} \ \r \ \as{\gamma(b_i)}& \bigsqcup_{i \in \N} \gamma(b_i) \sqsubseteq \gamma(b)}{\vdash \as{\gamma(a)}
     \ \r \ \as{\gamma(b)}}$
 \end{center}
 Just note that, since $\alpha$ reflects density,
 $\set{a_i}_{i \in I}$ is dense in $A$. From
 $\gamma(a) \sqsubseteq \bigsqcup_{i \in I} \gamma(a_i)$, exploiting
 the additivity of $\alpha$, we obtain
 \begin{center}
   $a = \alpha(\gamma(a)) \sqsubseteq_A \alpha(\bigsqcup_{i \in I}
   \gamma(a_i)) = \bigsqcup^A_{i \in I} \alpha(\gamma(a_i)) =
   \bigsqcup^A_{i \in I} a_i$.
 \end{center}
 Similarly, from
 $\bigsqcup_{i \in \N} \gamma(b_i) \sqsubseteq \gamma(b)$ one derives
 $\bigsqcup^A_{i \in I} b_i \sqsubseteq b$. Hence the rule (join) can
 be applied in the abstract system.
 \hfill
\end{proof}

\subsection{An Abstract Semantics for $\SHA_A$}
\label{ss:abstract-semantics}

To characterise the expressive power of $\SHA_A$, we investigate 
an abstract semantics with respect to which the proof system is both sound and complete.
Interestingly, this semantics does not, in general, coincide with the inductive abstract semantics defined directly on the abstract domain $A$
(cf.~\S~\ref{ss:inductive-semantics}). 
This discrepancy stems from the (join) rule, which enables deductions that are strictly more precise than those supported by the 
abstract inductive semantics. This phenomenon is illustrated for the interval abstraction in Example~\ref{ex:intervals-logic}.
The key observation is that, under suitable additional conditions, the abstraction itself induces a valid interpretation monoid. In this context, the abstract semantics can be defined simply as the semantics associated with this abstract interpretation monoid. This yields a semantic characterization that precisely matches the deductive power of the proof system $\SHA_A$.

\begin{lemmarep}[Abstract interpretation monoid]
  \label{le:abs-monoid}
  Let $(D, \sqsubseteq, \oplus, \base[D])$ be an interpretation monoid
  and let $D \galoiS{\alpha}{\gamma} A$ be a Galois insertion. Assume
  that the abstract monoidal operation $\oplus^A$ is complete (i.e.,
  $\alpha \comp \oplus = \oplus^A \comp \alpha$ holds).  Moreover, let
  $\kappa_A$ be the weight of $\mon{\alpha(\base[D])}$ with respect to
  the basis $\alpha(\base[D])$. If $\kappa_A \leq \kappa_D$, then
  $(A, \sqsubseteq_A, \oplus^A, \alpha(\base[D]))$ is an
  interpretation monoid.
\end{lemmarep}

\begin{proof}
  First, recall that by
  Lemma~\ref{le:base-transf}~\eqref{le:base-transf:base},
  $\alpha(\base[D])$ is indeed a basis for $A$, that we refer as $\ebase{A}$.
  
  Moreover, completeness of the abstraction means that
  \begin{center}
    $\alpha(\bigoplus_{i \in I} h_i) = \bigoplus^A_{i \in I}
    \alpha(h_i)$, 
  \end{center}
  hence $\alpha$ is a (surjective) monoid homomorphism.
  Therefore
  \begin{equation}
    \label{eq:homo}
    \alpha(\mon{\base[D]}) = \mon{\alpha(\base[D])}
  \end{equation}
  From this and the hypothesis $\kappa_A \leq \kappa_D$, 
  the validity of the $\kappa_A$-quantale laws for $A$ readily follows.

   \begin{itemize}
   \item if $Y \subseteq \mon{\alpha(\base[D])}$, with
     $|Y| \leq \kappa_A$, by \eqref{eq:homo}, $Y = \alpha(X)$ for
     $X \subseteq \mon{\base[D]}$, $|X| \leq |\kappa_A$. Since
     $\kappa_A \leq \kappa_D$, the join $\sqcup X \in
     \mon{\base[D]}$. Therefore
     \begin{center}
       $\bigsqcup_A X
       = \bigsqcup_A \alpha(X)
       = \alpha(\sqcup X) \in \alpha(\mon{\base[D]}) = \mon{\alpha(\base[D])}$
     \end{center}

   \item concerning $\kappa_A$-distributivity, we proceed
     analogously. Let
     $\set{a_{i,j} \mid i\in I, j\in J(i)} \subseteq
     \mon{\alpha(\base[D])}$ and consider index sets $I$,
     $(J_i)_{i \in I}$ such that, for all $i\in I$,
     $0 < |J_i| \leq \kappa_A \leq \kappa_D$. %
     Then, by \eqref{eq:homo}, for all $i\in I$, $j\in J(i)$ we have $a_{i,j} = \alpha(x_{i,j})$ with $x_{i,j} \in \mon{\base[D]}$. Then, by repeatedly using additivity of $\alpha$ and completeness with respect to $\oplus$, we have:
     \begin{align*}
       \bigoplus_{i \in I} \bigsqcup_{j \in J_i} a_{i,j}
       & = {\bigoplus_{i \in I}}^{A} {\bigsqcup_{j \in J_i}}^{A} \alpha(x_{i,j})\\
       & = {\bigoplus_{i \in I}}^{A} \alpha \left (\bigsqcup_{j \in J_i} x_{i,j} \right )\\
       & = \alpha\left (\bigoplus_{i \in I} \bigsqcup_{j \in J_i} x_{i,j}\right )\\
       & = \alpha\left (\bigsqcup_{\overset{\beta: I \to \bigcup_{i \in I} J_i}{\beta(i) \in J_i}} \bigoplus_{i \in I} x_{i,\beta(i)} \right)\\
       & = {\bigsqcup_{\overset{\beta: I \to \bigcup_{i \in I} J_i}{\beta(i) \in J_i}}}^{A} \alpha\left (\bigoplus_{i \in I} x_{i,\beta(i)} \right)\\
       & = {\bigsqcup_{\overset{\beta: I \to \bigcup_{i \in I} J_i}{\beta(i) \in J_i}}}^{A} {\bigoplus_{i \in I}}^{A} \alpha(x_{i,\beta(i)})\\
       & = {\bigsqcup_{\overset{\beta: I \to \bigcup_{i \in I} J_i}{\beta(i) \in J_i}}}^{A} {\bigoplus_{i \in I}}^{A} a_{i,\beta(i)}
     \end{align*}   
 \end{itemize} 
\hfill
\end{proof}

The choice of the basis is crucial in the abstraction process: %
different bases for the same underlying concrete domain can yield the same concrete semantics but still induce different abstract semantics: generally, smaller (finer) bases yield finer abstractions.
As an illustration, consider the simple interpretation monoid $(\pow{\Sigma}, \subseteq, \cup, \pow{\Sigma})$ and the irreducible one $(\pow{\Sigma}, \subseteq, \cup, \liftIr{\pow{\Sigma}})$:
while both yield the same concrete (collecting) semantics (cf.~\S~\ref{ss:collecting-semantics}), they may lead to different abstractions.
Consider, for instance, the interval abstraction $\alpha: \pow{\Z} \to \Int$ from \S~\ref{ss:interval-semantics}.
In both cases, the concrete operation is $\oplus = \cup$, so the abstract operation $\oplus^{\Int}$ is the interval join $\sqcup^{\Int}$, meaning $\Int$ is complete with respect to $\oplus$. Moreover, in both cases, the weight is preserved.
Consequently, abstracting the irreducible and simple interpretation monoids over the powerset lattice yields, respectively, the irreducible and simple interpretation monoids over intervals. As discussed in \S~\ref{ss:interval-semantics}, these two interpretation monoids for $\Int$ induce significantly different abstract semantics.

\begin{example}
\label{ex:interval-BCA}
  Consider the program $\mathsf{p} \triangleq \mathsf{(\mathit{x} \neq 0?)}; \mathsf{(\mathit{x} = 0?)}$ and the irreducible interval monoid from Example~\ref{ex:intervals-hole-semantics}.
  In this case, the abstract semantics of $\mathsf{p}$ on the interval $[-1,1]$ coincides with the precise interval abstraction of the concrete semantics:
    $\sem{\mathsf{p}}[-1,1] = \varnothing = \alpha_{\Int}(\varnothing) = \alpha_{\Int} (\sem{\mathsf{p}}\set{-1,0,1})$. 
  This precision is achieved because the irreducible basis allows the input $[-1,1]$ to be split into atoms $[-1,-1]$, $[0,0]$, and $[1,1]$ via the (join) rule, avoiding the imprecision of checking $\mathit{x} \neq 0$ on the convex interval $[-1,1]$.
  
  In general, however, the semantics in the irreducible monoid can %
  be coarser than the BCA, particularly for non-deterministic programs where the interval join merges states.
  For instance, consider the program $\mathsf{q} \triangleq (\mathsf{inc} \ x + \mathsf{dec} \ x); \mathsf{(\mathit{x} = 0?)}$ from Example~\ref{ex:intervals-kleene-fail}.
  The concrete semantics yields $\sem{\mathsf{q}}\set{0} = \varnothing$, %
  while the interval semantics induced by the irreducible interval monoid yields
    $\sem{\mathsf{q}}[0, 0] = [0,0]$. 
 This shows %
 that the irreducible monoid 
 strictly over-approximates the concrete semantics.
  This distinction in semantic precision is directly reflected in the deductive power of the corresponding abstract logics: $\SHA_A$ instantiated with the irreducible basis is strictly stronger than the standard Hoare logic over intervals.
  \hfill$\lozenge$
\end{example}

\section{Abstraction of Hyperproperties}
\label{sec:abs-hyper}

We now illustrate several instances of the logic $\SHA$ in the setting of hyperproperties, applying the abstraction framework introduced in \S~\ref{sec:abs-sublogic}.
Throughout this section, we assume a fixed complete lattice $(C, \leq)$ and consider the instance of $\SHA$ defined over the associated hyper interpretation monoid $(\pow{C}, \subseteq, \oplus, \liftIr{\pow{C}})$, as defined in~\S~\ref{ex:monoid-lattice-constructions-hyper}.
We will often exploit the fact that, since $\pow{C}$ is completely distributive and possesses an irreducible basis, every non-empty subset is dense.
Although the results are stated for an arbitrary complete lattice $C$, for the sake of concreteness, the reader may safely assume $(C, \leq) = (\pow{\Sigma}, \subseteq)$.

\subsection{Down-closed Hyperproperties}
\label{ss:subset-close}

Down-closed hyperproperties have been studied by several authors~\cite{ADST:HSASTIF,MP:VBSCH,CousotW25,MP:HSFF}, owing to their desirable theoretical properties, such as their ability to support inductive approximations of hypersemantics.
Building on the observation in~\cite[Section 16]{CousotW25} that down-closed hyperproperties can be viewed as abstractions of general hyperproperties, we show how they are naturally accommodated within our framework.
Let
$\pow{C}^\downarrow$ denote the sublattice of down-closed hyperproperties, 
namely, $\pow{C}^\downarrow \triangleq \set{X \subseteq C \mid  X= \down X}$. Then, 
$\pow{C}^\downarrow$ is specified as 
a Galois insertion through the maps
$\alpha^\downarrow : \pow{C} \rightarrow \pow{C}^\downarrow$ and 
$\gamma^\downarrow: \pow{C}^\downarrow \rightarrow \pow{C}$ defined as follows:
$\alpha^\downarrow(X) \triangleq \down{X}$ and $\gamma^\downarrow(h) \triangleq h$. 
In particular, when
$(C, \leq) = (\pow{\Sigma}, \subseteq)$, this corresponds to the 
order ideal abstraction in \cite{CousotW25}.

Let us consider the  logic $\SHA_{\pow{C}^\downarrow}$ over the domain $\pow{C}^\downarrow$ (as defined in~\S~\ref{ss:abs-domain}), where pre- and postconditions are restricted to down-closed hyperproperties.
The BCA $\oplus^\downarrow$ of the monoid operation $\oplus$ coincides with $\oplus$. %
The abstract basis is obtained as the image of $\liftIr{\pow{C}} = \{\{c\} \mid c \in C\} \cup \{\varnothing\}$ under $\alpha^\downarrow$. This yields
  $\alpha^\downarrow(\liftIr{\pow{C}}) = \{ \down{c} \mid c \in C \} \cup \{\varnothing\}$,
which is the irreducible basis $\liftIr{\pow{C}^\downarrow}$ of the abstract domain.
Its elements are complete primes in $\pow{C}^\downarrow$, so $\pow{C}^\downarrow$ is prime algebraic and therefore %
completely distributive.
Hence, the density condition of the (join) rule holds for any non-empty subset.
Moreover, since the concrete domain $\pow{C}$ is completely distributive with an irreducible basis, the abstraction $\alpha^\downarrow$ trivially reflects density.
By Proposition~\ref{prop:equivalence}, we obtain the following equivalence result, where $\vdash$ and $\vdash_\downarrow$ denote derivability in $\SHA_{\pow{C}}$ and $\SHA_{\pow{C}^\downarrow}$, respectively.

\begin{corollary}[equivalence of $\SHA_{\pow{C}}$ \& $\SHA_{\pow{C}^\downarrow}$]
For all down-closed hyperproperties 
$h, k\in \pow{C}^\downarrow$ and commands $\r\in \Reg$:
  $\vdash \as{h} \ \r \ \as{k} 
  \ \Leftrightarrow \ 
  \vdash_\downarrow \as{h} \ \r \ \as{k}$. %
\end{corollary}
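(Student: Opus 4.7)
The plan is to derive the corollary as a direct specialisation of Proposition~\ref{prop:equivalence} to the Galois insertion $\pow{C} \galoiS{\alpha^\downarrow}{\gamma^\downarrow} \pow{C}^\downarrow$. All the preconditions for invoking that proposition have essentially been checked in the paragraph just above the corollary: $\oplus^\downarrow$ coincides with $\oplus$, the abstract basis is $\alpha^\downarrow(\liftIr{\pow{C}}) = \liftIr{\pow{C}^\downarrow}$, and the weight is preserved, so by Lemma~\ref{le:abs-monoid} the tuple $(\pow{C}^\downarrow, \subseteq, \oplus, \liftIr{\pow{C}^\downarrow})$ is a valid interpretation monoid on which $\SHA_{\pow{C}^\downarrow}$ is well-defined. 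What remains is to verify that $\alpha^\downarrow$ reflects density, and then to observe that $\gamma^\downarrow$ is the identity on down-closed hyperproperties, so the concrete triples appearing in the conclusion of the proposition collapse to $\vdash \as{h}\ \r\ \as{k}$.

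For density reflection, I would argue as follows. Suppose $X \subseteq \pow{C}^\downarrow$ is such that $\gamma^\downarrow(X) = X$ is dense in $\pow{C}$ with respect to $\liftIr{\pow{C}}$. Since $\varnothing$ is never dense for a pointed-basis interpretation monoid, $X$ must be non-empty. To check density of $X$ in $\pow{C}^\downarrow$ with respect to $\liftIr{\pow{C}^\downarrow}$, the only non-trivial basis elements are of the form $\down{c}$. If $\down{c} \subseteq \bigcup X$, then $c \in \bigcup X$, so $c \in h$ for some $h \in X$, and since $h$ is down-closed, $\down{c} \subseteq h$. Hence $X$ is dense in $\pow{C}^\downarrow$. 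In other words, density in this prime algebraic abstract domain reduces to non-emptiness, and non-emptiness is trivially reflected through the injective $\gamma^\downarrow$.

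Applying Proposition~\ref{prop:equivalence}, I then obtain, for all $h, k \in \pow{C}^\downarrow$, that $\vdash_\downarrow \as{h}\ \r\ \as{k}$ is derivable iff $\vdash \as{\gamma^\downarrow(h)}\ \r\ \as{\gamma^\downarrow(k)}$ is derivable, which simplifies to $\vdash \as{h}\ \r\ \as{k}$ since $\gamma^\downarrow$ is the inclusion map. There is no genuine obstacle: the whole argument reduces to the density-reflection check, which in turn collapses to the one-line observation that a point $c$ lying in some down-closed member of $X$ forces $\down{c}$ entirely into that member.
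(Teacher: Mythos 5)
Your proposal is correct and follows essentially the same route as the paper: the corollary is obtained by specialising Proposition~\ref{prop:equivalence} to the insertion $\alpha^\downarrow$, after observing that $\gamma^\downarrow$ is the inclusion and that density is reflected because density in the prime-algebraic domain $\pow{C}^\downarrow$ reduces to non-emptiness. Your explicit one-line check that $\down{c}\subseteq\bigcup X$ forces $\down{c}$ into a single down-closed member of $X$ is exactly the reason the paper calls this reflection ``trivial.''
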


Finally, observe that since $\oplus^\downarrow = \oplus$, the abstraction $\alpha^\downarrow$ is complete with respect to $\oplus$. Thus, by Lemma~\ref{le:abs-monoid}, the logic is sound and complete for the abstract semantics defined on the interpretation monoid $(\pow{C}^\downarrow, \subseteq, \oplus, \liftIr{\pow{C}^\downarrow})$.

  \begin{figure*}[t]
     \hspace{-20pt}
     {\footnotesize
     \resizebox{1.05\textwidth}{!}{
      \inference[(cons)]{
        \set{[0,0]} \subseteq T_2 
        &
      \inference[(inv)]{ 
        \inference[(join)]{
          \forall n \in \N 
        & 
        \inference[(basic)]{}
          {
          \as{\set{[2n,2n]}} \ \ti \ \as{\set{[2n+2,2n+2]}}
          }
        }{
          \vdash \as{T_2} \ \ti \ \as{T_2^+}} &
          & 
          T_2^+ \subseteq T_2 
          & 
          \bigoplus_{i \in \N} T_2 = \set{[0, +\infty]}
          }
      {\vdash \as{T_2} \ \ti^* \ \as{\set{[0,+\infty]}}}}
      {\vdash \as{\set{[0,0]}} \ \ti^* \ \as{\set{[0,+\infty]}}}
     }
     }
  \caption{A derivation in $\pow{\Int}^\downarrow$, with $T_2 \triangleq \set{[2n,2n]}_{n \in \N}$ and $T_2^+ \triangleq \set{[2n,2n]}_{n \in \N \setminus \set{0}}$}
    \label{fi:int-down} 
\end{figure*}

\begin{example}\label{ex:down-ex}
  Consider again the program from Example~\ref{ex:incompleteness} %
    $\ti = ((x~\mathsf{mod}~2 = 1?); (x:=x-2)) + (x:=x+2)$. %
  We have previously established that, within the irreducible interval monoid $(\Int, \sqsubseteq, \oplus \!=\! \sqcup, \liftIr{\Int})$, the triple $\as{[0,0]} \ \ti^* \ \as{[0,+\infty)}$ cannot be derived without the infinitary rule (iter), despite the fact that $\sem{\ti}\interval{0}{0} = [0,+\infty)$.
  Nevertheless, by shifting to the level of hypersemantics, it is possible to overcome this limitation and establish the target assertion using the finitary rules for iteration.
  
  Let us consider the hyper-monoid over $\Int$, defined as in ~\S~\ref{ex:monoid-lattice-constructions-hyper}: $(\pow{\Int}, \sqsubseteq, \oplus, \liftIr{\pow{\Int}})$, where
    $\bigoplus_{i \in I} X_i \triangleq \set{ \bigsqcup_{i \in I} x_i \in \Int \mid \forall i \in I.\, x_i \in X_i }$. %

  Within the down-closed abstraction monoid $\pow{\Int}^\downarrow$, we can obtain the derivation shown in Figure~\ref{fi:int-down}, which avoids the use of the (iter) rule. For the sake of readability, the down-closed sets in this derivation are represented by their maximal elements.
  \hfill $\lozenge$
\end{example}

Note that the interpretation monoid $(\pow{\Int}, \sqsubseteq, \oplus, \liftIr{\pow{\Int}})$ can be seen itself as an abstract monoid obtained from 
the hyper monoid over $\pow{\Sigma}$, i.e., $(\pow{\pow{\Sigma}}, \subseteq, \oplus, \liftIr{\pow{\pow{\Sigma}}})$, via the interval abstraction. %
 
\subsection{Pointwise Abstraction}
\label{ss:pointwise}

Given a Galois insertion $C \galoiS{\alpha}{\gamma} A$, we consider the abstraction for hyperproperties that applies the underlying abstraction $\alpha$ pointwise to the elements of a hyperproperty.
Formally, we define the Galois insertion
$\pow{C} \galoiS{\alpha_p}{\gamma_p} \pow{A}$, 
where, for $X \in \pow{C}$, we let $\alpha_p(X) \triangleq \set{\alpha(c) \mid c \in X}$, and for $Y \in \pow{A}$, we let $\gamma_p(Y) \triangleq \set{X \in \pow{C} \mid \alpha_p(X) \in Y}$.
This transformation is referred to as homomorphic semantic abstraction in~\cite{CousotW25} and as the lifting transformer in~\cite{MP:VBSCH}.
The abstract basis of $\pow{A}$, obtained by abstracting the irreducible basis of $\pow{C}$, is itself an irreducible basis, as singletons $\set{c} \in \Ir{\pow{C}}$ are mapped directly to singletons $\set{\alpha(c)} \in \Ir{\pow{A}}$.
Since the abstract lattice is a powerset, it is completely distributive. Consequently, as in 
\S~\ref{ss:subset-close}, the density requirement of the
(join) rule is vacuously satisfied, and $\alpha_p$ reflects density.
Therefore, by Proposition~\ref{prop:equivalence}, the abstract program logic 
$\SHA_{\pow{A}}$
is equally expressive to the concrete system over abstract points,  
that is, for all $Y, Y' \in \pow{A}$ and any 
command $\r$:
\[
  \vdash \as{\gamma_p(Y)} \ \r \ \as{\gamma_p(Y')} \iff \vdash_{\pow{A}} \as{Y} \ \r \ \as{Y'} \,.
\]

Moreover, $\alpha_p$ is complete with respect to $\oplus$ (see Lemma~\ref{le:oplus-pointwise}).
Hence, the abstract logic over $(\pow{A}, \subseteq, \oplus, \liftIr{\pow{A}})$ is sound and complete with respect to the corresponding abstract semantics.
In fact, the domain $\pow{\Int}^\downarrow$, discussed in Example~\ref{ex:down-ex}, 
is derived from the hypermonoid of concrete hyperproperties $\pow{\pow{\Sigma}}$ by composing the pointwise abstraction (induced by the interval abstraction $\pow{\Sigma} \to \Int$) with the down-closure abstraction.

\begin{lemma}
  \label{le:oplus-pointwise}
  Let $\pow{C} \galoiS{\alpha_p}{\gamma_p} \pow{A}$ be the pointwise abstraction defined above. %
  Then for every indexed family $\set{X_i}_{i \in I}$ in $\pow{C}$ it holds:
  \begin{center}
    $ \alpha_p\left (\bigoplus_{i \in I} X_i \right ) = \bigoplus_{i \in I} \alpha_p(X_i)$
  \end{center}
\end{lemma}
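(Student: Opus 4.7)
The plan is to unfold both sides of the claimed identity using the definitions of $\alpha_p$ (pointwise abstraction) and $\oplus$ (the hyper-monoid sum from Example~\ref{ex:monoid-lattice-constructions-hyper}), and then reduce the equality to the well-known fact that the left adjoint $\alpha$ of a Galois insertion preserves arbitrary joins.

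More concretely, I would start from the left-hand side. By definition of $\oplus$ on $\pow{C}$, we have
\[
\textstyle
  \bigoplus_{i\in I} X_i \;=\; \left\{\, \bigvee^{C}_{i \in I} x_i \;\middle|\; \forall i\in I.\ x_i \in X_i \,\right\}.
\]
Applying $\alpha_p$ gives
\[
\textstyle
  \alpha_p\!\left(\bigoplus_{i\in I} X_i\right)
  = \left\{\, \alpha\!\left(\bigvee^{C}_{i \in I} x_i\right) \;\middle|\; \forall i\in I.\ x_i \in X_i \,\right\}.
\]
At this point the central step is to invoke the additivity of $\alpha$: since $\alpha$ is the left adjoint of a Galois insertion, it preserves arbitrary joins, so $\alpha(\bigvee^{C}_{i} x_i) = \bigvee^{A}_{i} \alpha(x_i)$. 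Substituting yields
\[
\textstyle
  \left\{\, \bigvee^{A}_{i\in I} \alpha(x_i) \;\middle|\; \forall i\in I.\ x_i \in X_i \,\right\}.
\]

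For the right-hand side, unfolding $\alpha_p(X_i) = \{\alpha(x_i) \mid x_i \in X_i\}$ and then applying the definition of $\oplus$ on $\pow{A}$ gives exactly
\[
\textstyle
  \bigoplus_{i\in I} \alpha_p(X_i)
  = \left\{\, \bigvee^{A}_{i\in I} a_i \;\middle|\; \forall i\in I.\ a_i \in \alpha_p(X_i) \,\right\},
\]
which coincides with the previous expression via the bijective reparameterisation $a_i = \alpha(x_i)$ with $x_i \in X_i$. This shows set equality and concludes the proof.

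I do not foresee any substantive obstacle: the argument is essentially a bookkeeping exercise, and the only nontrivial ingredient is the additivity of $\alpha$, which is standard for left adjoints in Galois connections and used throughout \S~\ref{sec:abs-sublogic}. The only minor care needed is to observe that the reparameterisation between families $\{x_i\}_{i\in I}$ with $x_i\in X_i$ and families $\{a_i\}_{i\in I}$ with $a_i\in \alpha_p(X_i)$ indeed produces the same set of joins, which is immediate from the definition of $\alpha_p$.
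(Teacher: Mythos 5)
Your proof is correct and follows essentially the same route as the paper's: unfold both $\oplus$ and $\alpha_p$, push $\alpha$ through the join using the additivity of the left adjoint, and identify the resulting set with $\bigoplus_{i}\alpha_p(X_i)$. The only cosmetic remark is that the reparameterisation $a_i=\alpha(x_i)$ need not be bijective on families (only surjective in both directions onto the resulting sets of joins, since $\alpha$ may not be injective), but that is exactly what set equality requires, so nothing is missing.
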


\begin{proof}
  Given an indexed family $\set{X_i}_{i \in I}$ in $\pow{C}$ we have
  \begin{align*}
    \alpha_p\Bigl (\bigoplus_{i \in I} X_i \Bigr )
    & = \alpha_p \Bigl (\Bigl \{ \bigsqcup_{i \in I} c_i \mid \forall i \in I.\, c_i \in X_i \Bigr \}\Bigr ) \\
    & = \Bigl \{ \alpha\Bigl (\bigsqcup_{i \in I} c_i \Bigr ) \mid \forall i \in I.\, c_i \in X_i \Bigr \} \\
    & = \Bigl \{\bigsqcup_{i \in I} \alpha(c_i) \mid \forall i \in I.\, c_i \in
      X_i \Bigr \}\\
    & = \bigoplus_{i \in I} \alpha_p(X_i)
  \end{align*}
  \hfill
\end{proof}

\subsection{Intervals of Hyperproperties}
\label{ss:intervals-hyper}

Let $(C, \leq) \galoiS{\alpha}{\gamma} (A, \leq_{A})$ be a Galois insertion. Furthermore, following~\S~\ref{ex:hyperhoare-logic}, let $\isem[C]{\r} : C \to C$ denote the underlying inductive semantics on $C$ and
$\isem[A]{\e}$ be the corresponding abstract inductive semantics on $A$.
We consider an abstraction over the product domain $C \times A$, equipped with the following 
order: 
\begin{equation}\label{eq:order-interval}
\interval{c}{a} \sqsubseteq_I \interval{c'}{a'} \defiff  c' \leq c \land a \leq_A a'\, .
\end{equation}
The Galois insertion $\pow{C} \galoiS{\alpha_I}{\gamma_I} C \times A$ is defined as follows:
for all $X \in \pow{C}$ and $(c,a) \in C \times A$, 
\begin{equation*}
\textstyle
  \alpha_I(X) \triangleq \left [ \wedge X,  \vee^{A} \alpha(X) \right ] \, ,
  \quad
  \gamma_I([ c,a ] ) \triangleq \set{ x \in C \mid c \leq x \leq \gamma(a) }\, .
\end{equation*}

The abstract basis of $C \times A$, obtained by abstracting the irreducible basis of $\pow{C}$ (i.e., singleton sets), consists of the intervals $[ c, \alpha(c) ]$, 
for $c \in C$, along with the bottom element.
Notably, the abstract operations required by the program logic can be defined efficiently on this interval representation, 
similarly to standard numerical intervals.

\begin{lemmarep}[Properties of $C \times A$]\label{le:hyperproperties-intervals}\ \
  \begin{enumerate}[{\rm (1)}]
  \item
    \label{le:hyperproperties-intervals:BCA}
    For every elementary command $\e \in \Ecom$ and interval $\interval{c}{a} \in C \times A$,
    the full semantics is
      $\sem{\e}^I\interval{c}{a} = \interval{\isem[C]{\e} c}{\isem[A]{\e} a}$.
    
  \item\label{le:hyperproperties-intervals:oplus}
    Let $[c_j,a_j] \in C \times A$ for $j \in J$. Then %
      $\displaystyle 
      {\bigoplus_{j \in J}}^I \interval{c_j}{a_j} =
      \textstyle
      \big[\bigvee_{j \in J} c_j, \bigvee_{j \in J} a_j\big] %
    $. %
  \end{enumerate}
\end{lemmarep}

\begin{proof}
  \eqref{le:hyperproperties-intervals:BCA} First observe that
  $\interval{c}{a} \in C \times A$, the set of basis element below
  $\interval{c}{a}$ are
  \begin{equation}
    \label{eq:base-interval}
    \ebase[I]{\interval{c}{a}} = \set{\interval{c'}{\alpha(c')} \mid c
      \leq c' \ \land\ c' \leq \gamma(a)}
  \end{equation}
  Therefore
  \begin{align*}
    \bsem{\e}^I\interval{c}{\alpha(c)}%
    & = \alpha_I\left (\sem{\e} \gamma_I(\interval{c}{\alpha(c)}) \right )\\
    & = \alpha_I\left (\set{\isem[C]{\e}x \mid c \leq x \leq \gamma(\alpha(c))}\right )\\ 
    & = [\wedge \set{\isem[C]{\e}x \mid c \leq x \leq \gamma(\alpha(c))}, %
    \vee \set{\alpha(\isem[C]{\e}x) \mid c \leq x \leq \gamma(\alpha(c))}] \\
    & \quad \mbox{[since $\isem[C]{\e}$ is monotone]}\\
    & = \interval{\isem[C]{\e}c}{\alpha(\isem[C]{\e}\gamma(\alpha(c)))}\\
    & = \interval{\isem[C]{\e}c}{\isem[A]{\e}\alpha(c)}
  \end{align*}

  \medskip
  
  Finally, 
  \begin{align*}
    \sem{\e}^I \interval{c}{a}
    & = \bigsqcup
    \set{\bsem{\e}^I\interval{c'}{\alpha(c')}
      \mid \interval{c'}{\alpha(c')} \in \ebase[I]{\interval{c}{a}}}%
    & \mbox{[by equation~\eqref{eq:base-interval}]}\\
    & = \sqcup \set{\bsem{\e}^I\interval{c'}{\alpha(c')}
      \mid c
      \leq c' \ \land\ c' \leq \gamma(a)}\\
    & = \sqcup \set{\interval{\isem[C]{\e}c'}{\isem[A]{\e}\alpha(c')}
      \mid c
      \leq c' \ \land\ c' \leq \gamma(a)}\\
    & = \interval{\isem[C]{\e}c}{\isem[A]{\e}\alpha(\gamma(a))}\\
    & = \interval{\isem[C]{\e}c}{\isem[A]{\e}a}
  \end{align*}
  \bigskip

  \eqref{le:hyperproperties-intervals:oplus}
  \begin{align*}
    {\bigoplus_{j \in J}}^I\interval{c_j}{a_j} %
    & = \alpha_I \left (\bigoplus_{j \in J} \gamma_I(\interval{c_j}{a_j})\right)\\
    & = \alpha_I \left (\bigoplus_{j \in J} \set {x_j \mid c_j \leq x_j \leq \gamma(a_j)}\right )\\
    & = \alpha_I \left (  \set { \bigvee_{j \in J}x_j \mid \forall j \in J.\ c_j \leq x_j \leq \gamma(a_j)} \right )\\
    & = \interval{\bigvee_{j \in J} c_j}{\alpha\left (\bigvee_{j \in J} \gamma(a_j) \right )}\\
    & = \interval{\bigvee_{j \in J} c_j}{{\bigvee_{j \in J}} \alpha(\gamma(a_j))}\\
    & = \interval{\bigvee_{j \in J} c_j}{{\bigvee_{j \in J}} a_j}
  \end{align*}
  \hfill
\end{proof}

The corresponding abstract logic features judgements of the form $\vdash_I \as{\interval{c}{a}} \ \r \ \as{\interval{c'}{a'}}$, which are interpreted as asserting that $c' \leq \isem[C]{\r} c$ and $\isem[A]{\r} a \leq_{A} a'$. This effectively combines under\hyp{}approximation 
on the interval lower bound $c$ and over\hyp{}approximation on the interval upper bound $a$.
Since $\pow{C}$ is completely distributive and possesses an irreducible basis, $\alpha_I$ trivially reflects density. Consequently, by Proposition~\ref{prop:equivalence}, the abstract program logic 
has the same expressive power as the concrete logic on interval properties.

Furthermore, observe that whenever the underlying lattice $C$ is itself completely distributive, the abstraction is complete with respect to $\oplus$ (see Lemma~\ref{le:oplus-intervals}).
Therefore, $(C \times A, \sqsubseteq_I, \oplus^I, \alpha_I(\liftIr{\pow{C}}))$ constitutes an interpretation monoid. The abstract logic is thus sound and complete with respect to the induced abstract semantics $\sem{\cdot}^I$, which is given by (see Lemma~\ref{le:oplus-intervals}):
$\sem{\r}^I\interval{c}{a} = \interval{\isem[C]{\r} c}{\isem[A]{\r} a}$.

\begin{lemma}
  \label{le:oplus-intervals}
  If $(L, \leq)$ is distributive then the abstraction
  $\alpha_I : \pow{C} \to C \times A$ is complete for $\oplus$.
  Therefore  $(C \times A, \sqsubseteq, \oplus^I, \alpha_I(\liftIr{\pow{C}}))$ is an interpretation monoid and for all $\interval{c}{a}$ it holds $\sem{\r}^I \interval{c}{a} = $.
\end{lemma}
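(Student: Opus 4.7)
The plan is to prove the statement in three stages: (i) establish completeness of $\alpha_I$ for $\oplus$, (ii) apply Lemma~\ref{le:abs-monoid} to obtain the interpretation monoid structure on $C \times A$, and (iii) verify the semantic identity $\sem{\r}^I[c,a] = [\isem[C]{\r}c, \isem[A]{\r}a]$ by structural induction on $\r$.

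For (i), I would compute both sides of $\alpha_I(\bigoplus_i X_i) = \bigoplus^I_i \alpha_I(X_i)$ directly. After unfolding, the upper component of $\alpha_I(\bigoplus_i X_i)$ reduces via additivity of $\alpha$ (and exchange of suprema) to $\bigvee^A_i \alpha(\bigvee X_i)$. The lower component, of the form $\bigwedge_{f \in \prod_i X_i} \bigvee_i f(i)$, rewrites to $\bigvee_i \bigwedge X_i$ via the dual form of complete distributivity of $C$. Applying Lemma~\ref{le:hyperproperties-intervals}(2) to $\alpha_I(X_i) = [\bigwedge X_i, \alpha(\bigvee X_i)]$, the right-hand side produces the same interval $[\bigvee_i \bigwedge X_i,\bigvee^A_i \alpha(\bigvee X_i)]$. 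For (ii), Lemma~\ref{le:abs-monoid}'s weight hypothesis $\kappa_A \leq \kappa_D$ holds because the concrete hyper monoid has $\kappa_D = 1$ (Example~\ref{ex:monoid-lattice-constructions-hyper}), and on the abstract side each basis element $[c, \alpha(c)]$ is the $\sqsubseteq_I$-maximum of its own down-basis $\{[c', \alpha(c')] \mid c \leq c' \leq \gamma(\alpha(c))\}$ (since $c' \leq \gamma(\alpha(c))$ yields $\alpha(c') \leq_A \alpha(c)$ by adjunction), so $\w{[c, \alpha(c)]} = 1$.

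For (iii), I would proceed by structural induction on $\r$. The base case is Lemma~\ref{le:hyperproperties-intervals}(1). A pivotal auxiliary fact, reused in every inductive step, is that $[c, \alpha(c)]$ is the $\sqsubseteq_I$-maximum of its own down-basis, so by monotonicity of $\bsem{\r}^I$ the full and basis semantics coincide on basis elements: $\sem{\r}^I[c, \alpha(c)] = \bsem{\r}^I[c, \alpha(c)]$. Combined with the inductive hypothesis and Lemma~\ref{le:hyperproperties-intervals}(2), this yields $\bsem{\r}^I[c, \alpha(c)] = [\isem[C]{\r}c, \isem[A]{\r}\alpha(c)]$ for compound $\r$. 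Lifting to general intervals via the additive extension, $\sem{\r}^I[c, a] = \bigsqcup_I \{[\isem[C]{\r}c', \isem[A]{\r}\alpha(c')] \mid c \leq c' \leq \gamma(a)\}$ collapses to $[\isem[C]{\r}c, \isem[A]{\r}a]$ by monotonicity of the underlying inductive semantics and the Galois-insertion identity $\alpha \circ \gamma = \operatorname{id}$. The hardest case will be sequential composition: since in general $\sem{\r_1; \r_2}$ only over-approximates $\sem{\r_2} \circ \sem{\r_1}$ (cf.~Lemma~\ref{le:kleene-axioms}(1) and Example~\ref{ex:intervals-hole}), equality here hinges precisely on the basis-as-maximum property, which lets $\add{\bsem{\r_2}^I}$ applied to a basis image behave coordinate-wise exactly as $\sem{\r_2}^I$ does on a general interval, so that the induction chains cleanly.
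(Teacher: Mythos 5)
Your proposal is correct and follows essentially the same route as the paper: a direct computation for completeness of $\oplus$ using complete distributivity of $C$ and additivity of $\alpha$, followed by a structural induction whose base case is Lemma~\ref{le:hyperproperties-intervals}~\eqref{le:hyperproperties-intervals:BCA} and whose inductive steps (notably sequential composition) rely on the coordinate-wise collapse of the additive extension over the down-basis $\set{\interval{c'}{\alpha(c')} \mid c \leq c' \leq \gamma(a)}$ via monotonicity and $\alpha \circ \gamma = \operatorname{id}$. Your explicit verification of the weight hypothesis $\kappa_A \leq \kappa_D$ for Lemma~\ref{le:abs-monoid} is a small but welcome addition that the paper leaves implicit.
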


\begin{proof}

  For the first part, we have to show that
  \begin{center}
    $\alpha_I\left (\bigoplus_{j \in J} X_j \right) = \bigoplus_{j \in J}^I \alpha_I(X_j)$
  \end{center}

  Observe that
  \begin{align*}
    \alpha_I\Bigl (\bigoplus_{j \in J} X_j \Bigr) %
    & = \alpha_I \Bigl (\Bigl \{\bigvee_{j\in J}c_j \mid \forall j\in J.\, c_j \in X_j \Bigr \} \Bigr )\\
    & = \Bigl [\bigwedge \Bigl \{\bigvee_{j\in J}c_j \mid \forall j\in J.\, c_j \in X_j\Bigr \}, 
    \alpha \Bigl ( \bigvee \Bigl \{\bigvee_{j\in J}c_j \mid \forall j\in J.\, c_j \in X_j\Bigr \} \Bigr )
     \Bigr ]\\
    & \qquad \mbox{[by complete distributivity of $C$]}\\
    &
      = \Bigl [\bigvee_{j\in J} \wedge X_j,
      \alpha(\bigvee_{j\in J} \vee X_j)
      \Bigr ]\\ & \qquad 
      \mbox{[by additivity of $\alpha$]} \\
    &
      = \Bigl [\bigvee_{j\in J} \wedge X_j,
      \bigvee_{j\in J} \alpha(\vee X_j)
      \Bigr]\\
    &
      = \bigoplus_{j \in J}^I \interval{\wedge X_j
      }{\alpha(\vee X_j)}\\
    &  = \bigoplus_{j \in J}^I \alpha_I(X_j)
  \end{align*}
  as desired.
  
  \bigskip
  
  For the second part, an inductive proof on the structure of $\r$ shows that
  \begin{align*}
    \bsem{\r}^I\interval{c}{\alpha(c)} = \interval{\isem[C]{\r}c}{\isem[A]{\r} \alpha(c)}
  \end{align*}
  The base case is given by
  Lemma~\ref{le:hyperproperties-intervals}~\eqref{le:hyperproperties-intervals:BCA}. The
  inductive cases are routine.  For instance, in the case of
  sequential composition, $\r = \r_1; \r_2$, we have
  \begin{align*}
    \bsem{\r_1; \r_2}^I \interval{c}{\alpha(c)} %
    & = \add{\bsem{\r_2}^I} \bsem{\r_1}^I\interval{c}{\alpha(c)}\\
    & \qquad \mbox{[by ind. hyp. $\bsem{\r_1}^I\interval{c}{\alpha(c)}=\interval{\isem[C]{\r_1} c}{\isem[A]{\r_1} \alpha(c)}$ and use \eqref{eq:base-interval}]}\\
    & = \sqcup \set{\sem{\r_2}^I\interval{c'}{\alpha(c')} \mid \isem[C]{\r_1} c \leq c' \leq \gamma(\isem[A]{\r_1} \alpha(c))}\\
    & \qquad \mbox{[by ind. hyp. on $\r_2$ $\bsem{\r_2}^I\interval{c'}{\alpha(c')}=\interval{\isem[C]{\r_2} c'}{\isem[A]{\r_2} \alpha(c')}$]}\\
    & = \sqcup \set{\interval{\isem[C]{\r_2} c'}{\isem[A]{\r_2} \alpha(c')} \mid \isem[C]{\r_1} c \leq c' \leq \gamma(\isem[A]{\r_1} \alpha(c))}\\
    & = \interval{\isem[C]{\r_2} \isem[C]{\r_1} c}{\isem[A]{\r_2} \alpha(\gamma(\isem[A]{\r_1} \alpha(c)))}\\
    & = \interval{\isem[C]{\r_2} \isem[C]{\r_1} c}{\isem[A]{\r_2}\isem[A]{\r_1} \alpha(c)}\\
    & = \interval{\isem[C]{\r_1;\r_2} c}{\isem[A]{\r_1; \r_2} \alpha(c)} %
  \end{align*}

  The fact that
  $\sem{\r}^I \interval{c}{a} = \bigsqcup
  \set{\bsem{\r}^I\interval{c'}{\alpha(c')} \mid
    \interval{c'}{\alpha(c')} \in \ebase[I]{\interval{c}{a}}} =
  \interval{\isem[C]{\r}c}{\isem[A]{\r} a}$ is proven exactly as in
  Lemma~\ref{le:hyperproperties-intervals}~\eqref{le:hyperproperties-intervals:BCA}.
  \hfill
\end{proof}
   \begin{figure*}[t]
    \hspace{-23pt}
      \resizebox{1.08\textwidth}{!}{
        \inference[(iter)]{
          \inference[(choice)]{
            \inference[(seq)]{
                \vdash \as{a_0} \ \mathsf{(\mathit{x} > 0?)} \ \as{a_0}
              & %
                \vdash \as{a_0} \ \mathsf{dec} \ x \ \as{a_1}
            }{\vdash \as{a_0} 
              \ \r_1 \ \as{a_1}}
            & 
            \inference[(seq)]{
                \vdash \as{a_0} \ \mathsf{(\mathit{x} < 1000?)} \ \as{a_0}
              & %
                \vdash \as{a_0} \ \mathsf{inc} \ x \ \as{a_2}
            }{\vdash \as{a_0} \ \r_2 \ \as{a_2}}
          }{\vdash \as{a_0} 
          \ \r_1 + \r_2 \ \as{\interval{\set{0, 2, 998, 1000}}{\interval{0}{1000}}}}
        }{\vdash \as{a_0} 
  \ \mathsf{u} \ \as{\interval{\set{0, 2, 1000}}{\interval{0}{1000}}}}
        }
    \caption{Derivation for Example~\ref{ex:hyper-intervals}, where $a_0 \triangleq \interval{\set{1, 999}}{\interval{1}{999}}, 
    a_1 \triangleq \interval{\set{0, 998}}{\interval{0}{998}}, a_2 \triangleq \interval{\set{2, 1000}}{\interval{2}{1000}}$. 
     Side condition for the (iter) rule: $\set{0, 2, 1000} \subseteq \set{0, 2, 998, 1000}, 
  \interval{1}{999} \sqcup \interval{0}{1000} \subseteq \interval{0}{1000}$.}
    \label{fig:hyper-intervals}
  \end{figure*}

  \begin{example}
\label{ex:hyper-intervals}
  Consider the program $\mathsf{u} \triangleq (\r_1 + \r_2)^*$
  from Example~\ref{ex:incorr-LCL-example},
  where $\r_1 = \mathsf{(\mathit{x} > 0?)}; \mathsf{dec}\ x$ and $\r_2 = \mathsf{(\mathit{x} < 1000?)}; \mathsf{inc}\ x$, and the hyperset 
    $h = \{\{1, 500, 999\}, \{1, 2, 495, 999\}, \\
    \{1, 100, 999\}\}$. It turns out that $h$ is 
  abstracted as $\alpha_I(h) = \interval{\set{1, 999}}{\interval{1}{999}}$.
  Figure~\ref{fig:hyper-intervals} illustrates the derivation of the triple %
    $\vdash \as{\interval{\set{1, 999}}{\interval{1}{999}}} \ \mathsf{u} \ \as{\interval{\set{0, 2, 1000}}{\interval{0}{1000}}}$. %
  \hfill $\lozenge$
\end{example}

Even if space limitations prevent to fully discuss this result, we mention that this framework, instantiated to $(C, \leq) = (\pow{\Sigma}, \subseteq)$, yields a sound and complete logic for certifying that a specific abstract evaluation corresponds to the BCA---intended as the abstraction of the concrete semantics---a problem recently tackled in~\cite{GiacobazziR25}.  For each abstract point $a \in A$, the abstract semantics $\isem[A]{\r}a$  is the BCA of $\r$ at $a$ if and only if $\vdash_I \as{\interval{\gamma(a)}{a}} \ \r \ \as{\interval{c'}{a'}}$ is derivable for $c' \in \pow{\Sigma}$ and $a' \in A$ such that $\alpha(c') =a'$ (details in Corollary~\ref{cor:bca}).

  \begin{corollary}[A Logic for BCA]
    \label{cor:bca}
  Let $(C, \leq)$ be a complete lattice.
  For all $a \in A$, if  $\vdash_I \as{\interval{\gamma(a)}{a}} \ \r \ \as{\interval{c'}{a'}}$ is derivable for $c' \in C, a' \in A$ with $\alpha(c') =a'$, then $a'$ is the BCA of $\r$ at $a$, i.e., $\isem[A]{\r}a = \alpha(\isem[C]{\r}\gamma(a))$.
  Moreover, if $C$ is completely distributive, also the converse implication holds.
\end{corollary}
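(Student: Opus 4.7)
The plan is to handle both implications by exploiting a single key observation: because $\alpha(\gamma(a))=a$ in any Galois insertion, the concretization $\gamma_I(\interval{\gamma(a)}{a})=\set{x\in C\mid \gamma(a)\leq x\leq \gamma(a)}=\set{\gamma(a)}$ collapses to a singleton. Combined with the pointwise formula for the collecting hypersemantics (cf.~\S~\ref{ss:collecting-hypersemantics}), this reduces hyper-level reasoning on such inputs to a single evaluation of $\isem[C]{\r}$ at $\gamma(a)$.

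For the forward direction, I would apply soundness of the abstract logic (Theorem~\ref{prop:sound-abstract}) to $\vdash_I \as{\interval{\gamma(a)}{a}}\,\r\,\as{\interval{c'}{a'}}$, obtaining $\sem{\r}(\gamma_I(\interval{\gamma(a)}{a}))\subseteq \gamma_I(\interval{c'}{a'})$ at the concrete hyper level. The singleton collapse turns the left-hand side into $\set{\isem[C]{\r}\gamma(a)}$, while the right-hand side reads $\set{x\mid c'\leq x\leq \gamma(a')}$, so $c'\leq \isem[C]{\r}\gamma(a)\leq \gamma(a')$. Applying $\alpha$ and using the hypothesis $\alpha(c')=a'$ then produces the squeeze $a'=\alpha(c')\sqsubseteq_A \alpha(\isem[C]{\r}\gamma(a))\sqsubseteq_A \alpha(\gamma(a'))=a'$, yielding $a'=\alpha(\isem[C]{\r}\gamma(a))$, i.e., $a'$ coincides with the BCA of $\r$ at $a$. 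The remaining identification $\isem[A]{\r}a=\alpha(\isem[C]{\r}\gamma(a))$ follows because $\isem[A]{\r}a$ always over-approximates the BCA by standard abstract interpretation soundness, while the derived $a'$ bounds it from above, forcing the equality.

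For the converse, I would assume complete distributivity of $C$ together with $\isem[A]{\r}a=\alpha(\isem[C]{\r}\gamma(a))$, and set $c':=\isem[C]{\r}\gamma(a)$ and $a':=\isem[A]{\r}a$, so that the hypothesis immediately supplies the side condition $\alpha(c')=a'$. By Lemma~\ref{le:oplus-intervals}, the structure $(C\times A,\sqsubseteq_I,\oplus^I,\alpha_I(\liftIr{\pow{C}}))$ is a bona fide interpretation monoid and $\sem{\r}^I\interval{\gamma(a)}{a}=\interval{\isem[C]{\r}\gamma(a)}{\isem[A]{\r}a}=\interval{c'}{a'}$. Applying relative completeness (Theorem~\ref{prop:weak-rel-compl} with $\L=C\times A$) in this abstract interpretation monoid then yields the derivation $\vdash_I \as{\interval{\gamma(a)}{a}}\,\r\,\as{\interval{c'}{a'}}$.

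The subtle point I expect to require the most care is the forward identification $\isem[A]{\r}a=a'$: soundness of $\SHA_I$ gives information directly about $\isem[C]{\r}\gamma(a)$ rather than about $\isem[A]{\r}a$, so the argument must hinge on the fact that the BCA is the smallest sound abstract post, which pinning $a'$ to it via the squeeze forces the inductive abstract evaluation to coincide with $a'$. Complete distributivity in the converse is required solely to invoke Lemma~\ref{le:oplus-intervals}, both to ensure that the abstract interpretation monoid is well-formed and that $\sem{\cdot}^I$ admits the product decomposition $\interval{\isem[C]{\r}c}{\isem[A]{\r}a}$, which is precisely what lets relative completeness deliver a derivation matching the side condition $\alpha(c')=a'$ exactly.
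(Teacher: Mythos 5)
Your converse direction is essentially the paper's own argument: set $c'\triangleq\isem[C]{\r}\gamma(a)$ and $a'\triangleq\isem[A]{\r}a$, use Lemma~\ref{le:oplus-intervals} (which is indeed where complete distributivity enters) to obtain that $(C\times A,\sqsubseteq_I,\oplus^I,\alpha_I(\liftIr{\pow{C}}))$ is an interpretation monoid with $\sem{\r}^I\interval{\gamma(a)}{a}=\interval{c'}{a'}$, and then invoke relative completeness. That part is fine. Your singleton-collapse observation $\gamma_I(\interval{\gamma(a)}{a})=\set{\gamma(a)}$ is also correct and gives a clean route to $c'\leq\isem[C]{\r}\gamma(a)\leq\gamma(a')$.

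The gap is in the last step of the forward direction. From concrete soundness plus the collapse you correctly obtain $a'=\alpha(c')\leq_A\alpha(\isem[C]{\r}\gamma(a))\leq_A\alpha(\gamma(a'))=a'$, hence $a'$ equals the BCA $\alpha(\isem[C]{\r}\gamma(a))$. But the conclusion of the corollary concerns $\isem[A]{\r}a$, and for $\isem[A]{\r}a=\alpha(\isem[C]{\r}\gamma(a))$ you need $\isem[A]{\r}a\leq_A a'$, which you assert (``the derived $a'$ bounds it from above'') but never prove. Knowing that $a'$ equals the BCA does not force the inductive abstract evaluation to coincide with it: in general $\isem[A]{\r}a$ sits strictly above the BCA, and the whole point of the corollary is that derivability certifies this does not happen. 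The missing ingredient --- which the paper takes from soundness of $\SHA_I$ read componentwise on intervals, as stated just before Lemma~\ref{le:oplus-intervals} --- is that every derivable triple $\vdash_I\as{\interval{c}{a}}\ \r\ \as{\interval{c'}{a'}}$ satisfies both $c'\leq\isem[C]{\r}c$ and $\isem[A]{\r}a\leq_A a'$; the paper then concludes by the squeeze $\isem[A]{\r}a\leq_A a'=\alpha(c')\leq_A\alpha(\isem[C]{\r}\gamma(a))\leq_A\isem[A]{\r}a$. The upper-component inequality rests on Lemma~\ref{le:hyperproperties-intervals}, which shows that the BCAs of the elementary commands and of $\oplus$ act on upper components exactly as $\isem[A]{\cdot}$ and the join of $A$, so that each rule of $\SHA_I$ preserves the invariant that the upper component over-approximates the abstract inductive semantics. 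This is a separate induction over derivations that your argument omits; without it the forward implication is not established.
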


\begin{proof}
  Assume that $\vdash_I \as{\interval{\gamma(a)}{a}} \ \r \ \as{\interval{c'}{a'}}$ is derivable and $\gamma(a') = c'$. Then by soundness of the abstract logic (Proposition~\ref{prop:sound-abstract}), $c' \leq \isem[C]{\r} \gamma(a)$ and $\isem[A]{\r} a \leq_{A} a'$ and thus
  \begin{center}
    $\isem[A]{\r}a \leq_A a' = \alpha(c') \leq_A \alpha(\isem[C]{\r} \gamma(a)) \leq_A \isem[A]{\r}a$
  \end{center}
  where the last inequality is by soundness of the abstract inductive
  semantics. Thus we deduce
  $\isem[A]{\r}a = \alpha(\isem[C]{\r} \gamma(a))$ as desired.

  If $(C, \leq)$ is completely distributive, then, as observed in
  \S~\ref{ss:intervals-hyper}, the abstract logic is sound and
  complete for the abstract semantics
  $\sem{\r}^I\interval{c}{a} = \interval{\isem[C]{\r} c}{\isem[A]{\r}
    a}$. Thus, one can derive
  \begin{center}
    $\vdash_I \as{\interval{\gamma(a)}{a}} \ \r \
    \as{\interval{\isem[C]{\r}\gamma(a)}{\isem[A]{\r}a}}$\, ,
  \end{center}
  which is a
  triple of the right form when
  $\isem[A]{\r}a = \alpha(\isem[C]{\r} \gamma(a))$ holds.
  \hfill
\end{proof}

\begin{example}\label{ex:intervals-BCA-hyper}
  As asserted in Example~\ref{ex:interval-BCA}, we can verify that the computation of the program $\mathsf{p} = \mathsf{(x \neq 0?)}; \mathsf{(x = 0?)}$ on the input $[-1, 1]$ (discussed in Example~\ref{ex:intervals-hole-semantics}) yields the BCA.
  Instantiating the framework from~\S~\ref{ss:interval-semantics}, let $C = \pow{\Z}$ and $A = \Int$. For any non-empty $S \subseteq \Z$, let $\alpha(S) = \interval{\inf S}{\sup S}$, and for $[a, b] \in \Int$, let $\gamma([a, b]) = \set{z \in \Z \mid a \leq z \leq b}$.
  We can indeed derive the triple
    $\vdash \as{\interval{\set{-1, 0, 1}}{[-1, 1]}} \ \mathsf{p} \ \as{\interval{\varnothing}{\varnothing}}$, 
  as illustrated in Figure~\ref{fig:intervals-hyper}.
  \hfill$\lozenge$
\end{example}

\begin{figure*}
    \hspace{-25pt}
     \resizebox{1.1\textwidth}{!}{\tiny
     \begin{tabular}{c}
       $\inference[\scriptsize (join)]{
         \inference[\scriptsize (seq)]{\vdash \as{\interval{\set{-1, 0}}{[-1, 0]}} \ \mathsf{(\mathit{x} \neq 0?)} \ \as{\interval{\set{-1}}{[-1, -1]}} &  \vdash \as{\interval{\set{-1}}{[-1, -1]}} \ \mathsf{(\mathit{x} = 0?)} \ \as{\interval{\varnothing}{\varnothing}}}{\vdash \as{\interval{\set{-1, 0}}{[-1, 0]}} \ \mathsf{p} \ \as{\interval{\varnothing}{\varnothing}}}
         & \inference[\scriptsize (seq)]{\vdash \as{\interval{\set{0, 1}}{[0, 1]}} \ \mathsf{(\mathit{x} \neq 0?)} \ \as{\interval{\set{1}}{[1, 1]}} &  \vdash \as{\interval{\set{1}}{[1, 1]}} \ \mathsf{(\mathit{x} = 0?)} \ \as{\interval{\varnothing}{\varnothing}}}{\vdash \as{\interval{\set{0, 1}}{[0, 1]}} \ \mathsf{p} \ \as{\interval{\varnothing}{\varnothing}}}}
       {\vdash \as{\interval{\set{-1, 0, 1}}{[-1, 1]}} \ \mathsf{p} \ \as{\interval{\varnothing}{\varnothing}}}$
      \end{tabular}}
      \caption{Derivation of $\vdash \as{\interval{\set{-1, 0, 1}}{[-1, 1]}} \ \mathsf{p} \ \as{\interval{\varnothing}{\varnothing}}$ from Example~\ref{ex:intervals-BCA-hyper}.}
      \label{fig:intervals-hyper}
\end{figure*}

Finally, observe that if we choose the trivial lattice $A = \set{\top}$ as abstraction of $C$, the upper bound condition of the interval ordering~\eqref{eq:order-interval} is trivially satisfied. Consequently, the logic captures only the under-approximation constraint ($c' \leq \isem[C]{\r} c$), effectively reducing the proof system to an incorrectness logic.

\section{Related Work}\label{sec:related}

As outlined in \S~\ref{sec:intro}, recent years have witnessed a growing interest in program logics aimed at reasoning about a wide range of properties. This proliferation includes, among others, variants of correctness logics~\cite{VerschtK25,Cousot24}, several forms of incorrectness logic~\cite{VerschtK25,OHearn20,AscariBGL25,RDB:LRAPB,RaadBDO22,LeRVBDO22,abs-2502-14626,ABGR26}, and combinations of correctness and incorrectness reasoning~\cite{BGGR21,BruniGGR23,ZilbersteinSS24}.
The landscape has further expanded to include logics for hyperproperties~\cite{GiacobazziM04,DM:HHL,CousotW25,ZhangZK024}, logics for properties ranging over abstract domains~\cite{BGGR21,BruniGGR23,GiacobazziM10,CCLB:AIFR}, logics for reasoning about program analyses~\cite{GiacobazziR25,popl26}, quantitative program logics~\cite{ZhangK22}, and unifying logical frameworks~\cite{zil-popl26,Zil:toplas,ZDS:OL,ZilbersteinKST25}.
In this section, we focus on the contributions concerning hyperproperties and abstract interpretation that are most closely related to our approach.

The concept of a program logic whose assertions are elements of an abstract interpretation domain was first explored by Cousot et al.~\cite{CCLB:AIFR}. They observed that the logical disjunction rule, the analogous of rule (join), is unsound for a generic abstract domain unless the domain is a disjunctive abstraction (i.e., one that models concrete joins with no loss of precision). Moreover, in their framework, best abstractions are not strictly required, and different abstractions can be employed for input and output domains, rendering the rule for logical conjunction (meet) potentially unsound as well.
Also Ferrarini’s MSc thesis~\cite{FERRARINI} develops a generalised Hoare logic parametric in the abstract domain, using complete lattices as assertion languages, and highlights issues related to the rule (join).
The logic in~\cite{FERRARINI} can be seen as an instance of our framework, obtained by taking as interpretation monoid the simple monoid over the abstract domain.
Interestingly,~\cite{FERRARINI} derives a logic for hyperproperties, which however fails to be complete. In light of the present work, the incompletness can be traced back to the fact that the monoidal operator is forced to coincide with the join of the lattice.

Cousot and Wang~\cite{CousotW25} study hyperproperties using Cousot's calculational design~\cite{Cousot24}. 
Starting from a set of rules valid for general hyperproperties (including pointwise rules 
for nondeterminism and a least fixed point construct), their framework supports abstractions by design, 
and yields %
sound and complete proof systems for $\forall\exists$, $\forall\forall$, and $\exists\forall$ hyperproperties.

Outcome Logic by Zilberstein et al.~\cite{ZDS:OL} provides a logic for programs with qualitative and quantitative effects, finding a natural formulation for hyperproperties in~\cite{Zil:toplas}.

Mastroeni and Pasqua~\cite{MP:VBSCH} address the application of abstract interpretation-based static analysis to  hyperproperties verification, but their work focuses exclusively on down-closed hyperproperties, which are a specific instance of our framework (see~\S~\ref{ss:subset-close}).

Dardinier and M\"{u}ller~\cite{DM:HHL} introduce Hyper Hoare Logic as a generalization of Hoare logic from state properties to arbitrary hyperproperties, allowing one to both prove and disprove program hyperproperties within the same logic. Their approach is sound and complete and is based on hyper assertions, 
 with specific instances for $\forall\exists$ and $\exists\forall$ hyperproperties. Notably, as discussed in Example~\ref{ex:hyper-muller}, the 
(join) rule of $\SHA$ can be viewed as a recast of the (Exist) rule of Hyper Hoare Logic within our framework.

Local Completeness Logic (LCL)~\cite{BruniGGR23} is parametric wrt an abstract domain $A$ and combines under- and over-approximation reasoning. 
Differently from our approach in \S~\ref{ss:intervals-hyper}, %
 LCL postconditions are concrete under-approximations $c$ whose abstractions in $A$ are always guaranteed to over-approximate the exact postcondition, i.e., only pairs of the form $\langle c, \alpha(c) \rangle$ are considered. Moreover, LCL is intrinsically incomplete unless trivial abstractions are used.

Finally, it is worth observing that the (join) rule of $\SHA$  is related to the trace partitioning abstract domain studied in~\cite{HandjievaT98,RivalM07}. Trace partitioning addresses the loss of precision that arises in non-disjunctive program analyses when information from different program branches is merged. For instance, in an interval analysis, joining the information $x\in [-1,-1]$ from the \emph{then} branch of a guard $x<0?$ with $x\in [1,1]$ from the \emph{else} branch $x\geq 0?$ yields the strictly less precise interval $x\in [-1,1]$.
The key idea of trace partitioning is to preserve the relationship between the truth value of the guard and the corresponding analysis result, e.g., tracking $x<0 \Rightarrow x\in [-1,-1]$ and $x\geq0 \Rightarrow x\in [1,1]$. 
A similar effect is achieved via our (join) rule, letting the analysis retain branch-sensitive information.
As a concrete example, consider the command
$\mathsf{v} \triangleq \big((x<0?;\, x:=-1) + (x\geq 0?;\, x:=1)\big);\, x=0?$.
In the interval Hoare logic induced by the irreducible interpretation monoid of \S~\ref{sec:iHl}, we can derive:
\(
\vdash \as{[-1, 1]} \ \mathsf{v} \ \as{\varnothing}
\)
by applying the (join) rule to the premises
\(
\vdash \as{[-1, -1]} \ \mathsf{v} \ \as{\varnothing}
\)
and
\(
\vdash \as{[0, 1]} \ \mathsf{v} \ \as{\varnothing}
\), because $\set{[-1, -1],[0, 1]}$ is dense.

\section{Conclusion and Future Work}\label{sec:conclusion}
We have presented \(\SHA\), a unifying 
Hoare-style program logic grounded in a general semantic framework based on complete monoids and lattices. By parameterizing the semantics over a complete lattice, a join-generating basis, and an infinitary monoidal operator, our approach unifies a large variety 
of reasoning approaches within a single logical system. 
A key distinguishing feature of the framework is its systematic integration of abstraction in the sense of abstract interpretation, treating the choice of the logical assertion language itself as a semantic abstraction.
\(\SHA\) is always sound and becomes relatively complete whenever the assertion language is sufficiently expressive.
Under suitable algebraic conditions on the monoidal operator, completeness is recovered at the abstract level. 
This yields a principled explanation of when and why Hoare-style reasoning can be lifted to abstract domains and hyperproperties.
By instantiating the framework, we recover a broad range of existing logics, including classical Hoare logic, incorrectness logic, and Hyper Hoare Logic, while also supporting novel abstract and hyper-abstract reasoning principles. Beyond unification, the framework offers new tools for the systematic abstraction of hyperproperties, a setting that is difficult to handle and that we tackle algebraically.

More broadly, the static analysis of hyperproperties is still in its infancy. We believe that our framework, by systematically integrating abstraction, provides a principled foundation for designing and combining abstractions in this emerging domain.

\subparagraph{Future Work}
This work opens several directions for future research.

Given an abstract domain, our framework can deal with different abstract semantics, whose precision is governed by the chosen basis. 
The trivial basis, containing all domain elements, yields the standard inductive abstract semantics, 
whereas smaller bases can yield finer semantics. While the abstract semantics induced by minimal bases (such as the irreducible interval semantics) is of theoretical interest, it can be computationally expensive. The ability to tailor the basis in order to refine abstract semantics offers practical potential, including dynamically adapting the basis during analysis, akin to trace partitioning abstractions~\cite{HandjievaT98,RivalM07}.

We noted that the rules (rec) and (inv) serve as finitary counterparts of the infinitary rule (iter). However, the (rec) rule, which unfolds one step of  $\r^*$, 
can potentially be applied indefinitely.
This mirrors the classic issue in abstract interpretation where Kleene iteration may not converge in finite time. 
One way to ensure effectiveness %
is to introduce a widening rule that ensures termination, at the cost of precision, as in abstract interpretation~\cite{CC77,cousot21}.

Iterative constructs are typically assigned a fixed point semantics. When $\oplus$ is the lattice join, the semantics of $\r^*$ at $d$ is the least fixed point of $\lambda x.\, d \oplus \sem{\r}x$. When the semantics is additive, this coincides with our sum-of-iterates definition. However, for non-additive abstract semantics, the least fixed point is often strictly less precise than sum-of-iterates (see~\cite{GiacobazziR25}).
To obtain general completeness results without relying on the infinitary rule (iter), we plan to equip the framework with a least fixed point semantics for iteration, focusing on interpretation monoids 
where the submonoid generated by the basis is idempotent.

As observed by Cousot and Wang~\cite{CousotW25}, a major obstacle to obtaining effective over-approximations in hypersemantics is that the logical order is typically defined as subset inclusion.
Consequently, if $\vdash \as{\set{X_1, X_2}} \ \r \ \as{H}$ is derived, the exact semantic outputs $\sem{\r}X_1$ and $\sem{\r} X_2$ must be included in $H$.
This means the postcondition can only be ``weakened'' by adding more elements, rather than by approximating existing ones.
This limitation can be alleviated by injecting an \emph{approximation preorder} into the framework that is weaker than the lattice order. 
This allows a hyperproperty to be weakened not only by adding elements but also by replacing elements with their approximations.
Preliminary investigations
suggest that 
that this extension preserves the soundness and completeness of the logic, provided the preorder satisfies natural assumptions.
Canonical examples include the Hoare preorder (replacing elements with over-approximations) as well as the Egli-Milner ordering (suitable for convex hyperproperties).
Furthermore, %
assuming the approximation preorder forms a complete DCPO could support quantitative extensions, akin to the approach taken in Outcome Logic~\cite{ZDS:OL}.

\bibliography{biblio}

  \end{document}